%% ****** Start of file apstemplate.tex ****** %
%%
%%
%%   This file is part of the APS files in the REVTeX 4 distribution.
%%   Version 4.1r of REVTeX, August 2010
%%
%%
%%   Copyright (c) 2001, 2009, 2010 The American Physical Society.
%%
%%   See the REVTeX 4 README file for restrictions and more information.
%%
%
% This is a template for producing manuscripts for use with REVTEX 4.0
% Copy this file to another name and then work on that file.
% That way, you always have this original template file to use.
%
% Group addresses by affiliation; use superscriptaddress for long
% author lists, or if there are many overlapping affiliations.
% For Phys. Rev. appearance, change preprint to twocolumn.
% Choose pra, prb, prc, prd, pre, prl, prstab, prstper, or rmp for journal
%  Add 'draft' option to mark overfull boxes with black boxes
%  Add 'showpacs' option to make PACS codes appear
%  Add 'showkeys' option to make keywords appear
\documentclass[
aps,%
pre,%
reprint,%
showpacs,%
superscriptaddress,%
groupedaddress,%
floatfix,%
showkeys,%
nofootinbib%
]{revtex4-1}
%\documentclass[aps,prl,preprint,superscriptaddress]{revtex4-1}
%\documentclass[aps,prl,reprint,groupedaddress]{revtex4-1}

% You should use BibTeX and apsrev.bst for references
% Choosing a journal automatically selects the correct APS
% BibTeX style file (bst file), so only uncomment the line
% below if necessary.
%\bibliographystyle{apsrev4-1}
%\usepackage[fleqn]{amsmath}
%\usepackage{amsfonts,amsthm,amssymb,mathrsfs}
\usepackage{float}
\usepackage{verbatim}
\usepackage{amsmath,amsfonts,amsthm,amssymb,times,mathrsfs,txfonts}
\usepackage[hidelinks]{hyperref}
\usepackage{bm}% bold math
\usepackage{multirow}

%% Functional analysis
% Number sets
\newcommand{\field}[1]{\mathbb{#1}}
% Functional spaces (fs)
\newcommand{\fs}[1]{\mathsf{#1}}

\DeclareMathOperator*{\supp}{supp}
\DeclareMathOperator*{\esssup}{ess\,sup}
\DeclareMathOperator{\Res}{Res}
\DeclareMathOperator{\diag}{diag}

% Linear algebra

\newcommand{\tp}{\intercal}% transpose operation

\newcommand{\ovl}[1]{\overline{#1}}
\newcommand{\bigO}[1]{\mathop{O}(#1)}

% Real, Imaginary
%\renewcommand{\Re}{\text{Re}}
%\renewcommand{\Im}{\text{Im}}
\let\Re\relax
\DeclareMathOperator{\Re}{Re}
\let\Im\relax
\DeclareMathOperator{\Im}{Im}
\newcommand{\vv}[1]{\boldsymbol{#1}}
\newcommand{\vs}[1]{\boldsymbol{#1}}

% Laplace, Fourier transform operators
\newcommand{\OP}[1]{\mathscr{#1}}

\DeclareMathOperator{\fourier}{\mathscr{F}}
% some functions

\DeclareMathOperator{\sech}{sech}

\DeclareMathOperator{\sinc}{sinc}

\newtheorem{theorem}{Theorem}[section]
\newtheorem{prop}[theorem]{Proposition}
\newtheorem{corr}[theorem]{Corollary}
\newtheorem{lemma}[theorem]{Lemma}
\newtheorem{defn}{Definition}[section]

\newtheorem{rem}{Remark}[section]

\newcommand{\wtilde}[1]{\widetilde{#1}}
\newcommand{\et}{\textit{et~al.}}

\usepackage[pdftex]{graphicx}
% declare the path(s) where your graphic files are
\graphicspath{{./figs/}}
\usepackage[caption=false]{subfig}

%\newdefinition{rmk}{Remark}
\begin{document}
\title{Nonlinear Fourier Transform of Time-Limited and One-sided Signals}

\author{Vishal Vaibhav}
\email[]{vishal.vaibhav@gmail.com}
\noaffiliation{}
\date{\today}

\begin{abstract}
In this article, we study the properties of the nonlinear Fourier spectrum in
order to gain better control of the temporal support of the
signals synthesized using the inverse nonlinear Fourier transform (NFT). In
particular, we provide necessary and sufficient conditions satisfied by the
nonlinear Fourier spectrum such that the generated signal has a prescribed 
support. In our exposition, we assume that the support is a simply connected
domain that is either a bounded interval or the half-line, which amounts to studying the
class of signals which are either time-limited or one-sided, respectively. Further, it is shown that 
the analyticity properties of the scattering coefficients of the aforementioned
classes of signals can be exploited to improve the numerical conditioning of the differential 
approach of inverse scattering. Here, we also revisit the integral approach of
inverse scattering and provide the correct derivation
of the so called T\"oplitz inner-bordering algorithm. Finally, we conduct
extensive numerical tests in order to verify the analytical results presented in
the article. These tests also provide us an opportunity to compare the
performance of the two aforementioned numerical approaches in terms of accuracy
and complexity of computations.
\end{abstract}

% insert suggested PACS numbers in braces on next line
% insert suggested keywords - APS authors don't need to do this
\keywords{%
Nonlinear Fourier Transform, Layer-Peeling Algorithm, T\"oplitz Inner-Bordering
Algorithm, Time-Limited Signals
}

%\maketitle must follow title, authors, abstract, \pacs, and \keywords
\maketitle

\section*{Notations}
\label{sec:notations}
The set of non-zero positive real numbers ($\field{R}$) is denoted by
$\field{R}_+$. Real and imaginary parts of complex numbers ($\field{C}$) are
denoted by $\Re(\cdot)$ and $\Im(\cdot)$, respectively. The complex conjugate of 
$\zeta\in\field{C}$ is denoted by $\zeta^*$. The upper half (lower half) of $\field{C}$ 
is denoted by $\field{C}_+$ ($\field{C}_-$) and it closure by $\ovl{\field{C}}_+$
($\ovl{\field{C}}_-$). The support of a function
$f:\Omega\rightarrow\field{R}$ in $\Omega$ is defined as $\supp
f=\ovl{\{x\in\Omega|\,f(x)\neq0\}}$. The Lebesgue spaces of complex-valued 
functions defined in $\field{R}$ are denoted by 
$\fs{L}^p$ for $1\leq p\leq\infty$ with their corresponding 
norm denoted by $\|\cdot\|_{\fs{L}^p}$ or $\|\cdot\|_p$. The inverse 
Fourier-Laplace transform of a function $F(\zeta)$ analytic in 
$\ovl{\field{C}}_+$ is defined as
\[
{f}(\tau)=\frac{1}{2\pi}\int_{\Gamma}F(\zeta)e^{-i\zeta\tau}\,d\zeta,
\]
where $\Gamma$ is any contour parallel to the real line. The class of $m$-times 
differentiable complex-valued functions is denoted 
by $\fs{C}^m$. A function of class $\fs{C}^m$ is said to belong to 
$\fs{C}_0^m(\Omega)$, if the function and 
its derivatives up to order $m$ have a compact
support in $\Omega$ and if they vanish on the boundary ($\partial\Omega$).

\section{Introduction}
A nonlinear generalization of the conventional Fourier transform can be achieved 
via the two-component non-Hermitian Zakharov-Shabat (ZS) scattering 
problem~\cite{ZS1972}. As presented originally, it provides a means of solving 
certain nonlinear initial-value problems whose general description is provided 
by the AKNS-formalism~\cite{AKNS1974,AS1981}. The role of the nonlinear Fourier 
transform (NFT) here is entirely analogous to that of the Fourier transform in 
solving linear initial-value problems (IVPs) of dispersive nature. One of the
practical applications of this theory is in optical fiber communication where 
the master equation for propagation of optical field in a loss-less single 
mode fiber under Kerr-type focusing nonlinearity is the nonlinear 
Schr\"odinger equation (NSE)~\cite{HK1987,Agrawal2013}:
\begin{equation}\label{eq:NSE}
    i\partial_xq=\partial_t^2q+2|q|^2q,\quad(t,x)\in\field{R}\times\field{R}_+,
\end{equation}
where $q(t,x)$ is a complex valued function associated with the slowly varying
envelope of the electric field, $t$ is the retarded time and $x$ 
is the position along the fiber. 
There is a growing interest in the research community to exploit the nonlinear 
Fourier spectrum of the optical pulses for encoding information in an attempt to
mitigate nonlinear signal distortions at higher powers. 
This forms part of the motivation for this work where we study various signal
processing aspects of NFT that may be useful in any NFT-based modulation scheme. We 
refer the reader to a comprehensive review 
article~\cite{TPLWFK2017} and the references therein 
for an overview of NFT-based optical communication methodologies.
 
The ZS problem also appears naturally in certain physical systems. For instance, the
coupled-mode equations for co-propagating modes in a grating-assisted 
co-directional coupler (GACC), a device used to couple 
light between two different guided modes of an optical fiber, is a ZS problem
with the coupling coefficient as the potential (see~\cite{FZ2000,BS2003} and 
references therein). Note that the coupling
coefficient of physical GACCs must be compactly supported. In~\cite{BS2003},
the consequence of this requirement was studied in a discrete framework.

In this work, we formulate the necessary and sufficient conditions 
for the nonlinear Fourier spectrum (continuous as well as discrete) such that the 
corresponding signal has a prescribed support. In addition, certain useful regularity
properties of the continuous spectrum are proven which resemble that of the
conventional Fourier transform. For the conventional Fourier transform, these
results are already established in the well-known Paley-Wiener theorems. Note
that a straightforward way of generating time-limited signals is by windowing using a
rectangle function. In~\cite{V2018CNSNS}, exact solution of the scattering problem for 
a doubly-truncated multisoliton potential was presented. This task was
accomplished by first solving for the one-sided potential obtained as a result
of truncation from one side using the method discussed in the work of
Lamb~\cite{Lamb1980} (see also~\cite{RM1992,RS1994,SK2008}). Both of these
exactly solvable cases prove to be an insightful example of time-limited and
one-sided signals where analyticity properties of the scattering coefficients 
can be determined precisely.

The next goal that we pursue in this work is to exploit these regularity properties
to improve the numerical conditioning of the inverse scattering algorithms. In
particular, we revisit the \emph{differential approach}~\cite{V2017INFT1,V2017BL} 
as well as the \emph{integral approach}~\cite{BFPS2007,FBPS2015} of inverse scattering, 
and, discuss how to synthesize the input to these
algorithms such that the generated signal has the prescribed support. Further, we 
address in this work, the kind of numerical ill-conditioning that results from a pole
of the reflection coefficient that is too `close' to the real axis. It turns out
that the analyticity property of the reflection coefficient, $\rho(\zeta)$, in
$\field{C}_+$ can be exploited to develop a robust inverse scattering algorithm
within the differential approach that is based on the exponential \emph{trapezoidal
rule} presented in~\cite{V2017INFT1}. The advantage of the differential approach
is also seen in other numerical examples in this article where the
aforementioned second order convergent algorithm outperforms that based on the 
integral approach in terms of accuracy as well as complexity of computations.

The article is organized as follows: The main results of this paper for the
continuous-time NFT is presented in Sec.~\ref{sec:pptyNFS} where we focus on
certain class of time-limited and one-sided signals. In Sec.~\ref{sec:num-methods}, 
we discuss the two approaches for inverse scattering, namely, the differential 
and the integral approach. Sec.~\ref{sec:results} discusses all the numerical
results and Sec.~\ref{sec:conclusion} concludes this article.

\section{Continuous-Time Nonlinear Fourier Transform}
\label{sec:pptyNFS}
The starting point of our discussion is the ZS scattering problem for 
the complex-valued potentials $q(t,x)$ and $r(t,x)=-q^*(t,x)$ which 
in the AKNS-formalism (see~\cite{AKNS1974} for a complete introduction) can be stated as follows: 
Let $\zeta\in\field{R}$ and $\vv{v}=(v_1,v_2)^T$; then, for $t\in\field{R}$,  
\begin{equation}\label{eq:chi}
\partial_t{v}_1 = -i\zeta{v}_1+q{v}_2,
\quad\partial_t{v}_2 = i\zeta{v}_2+r{v}_1.
\end{equation}
In this article, we are mostly interested in studying the NFT for a fixed $x$;
therefore, we suppress any dependence on $x$ for the sake of brevity. The 
nonlinear Fourier spectrum can be defined using the scattering
coefficients, namely, $a(\zeta)$ and $b(\zeta)$, which are determined from the 
asymptotic form of the Jost solution,
$\vv{v}=\vs{\phi}(t;\zeta)$: As $t\rightarrow-\infty$, we have 
$\vs{\phi}e^{i\zeta t}\rightarrow(1,0)^{\tp}$, and, as $t\rightarrow\infty$, we have 
$\vs{\phi}\rightarrow(a(\zeta)e^{-i\zeta t}, b(\zeta)e^{i\zeta t})^{\tp}$.
An equivalent description of the system is obtained via the Jost solution 
$\vv{v}=\vs{\psi}(t;\zeta)$ whose asymptotic behavior as $t\rightarrow\infty$ is
$\vs{\psi}(t;\zeta)e^{-i\zeta t}\rightarrow(0,1)^{\tp}$, and, an equivalent set of
scattering coefficients can be determined from its asymptotic form (as
$t\rightarrow-\infty$) $\vs{\psi}\rightarrow(\ovl{b}(\zeta)e^{-i\zeta t}, a(\zeta)e^{i\zeta t})^{\tp}$
with the symmetry property $\ovl{b}(\zeta)=b^*(\zeta)$ ($\zeta\in\field{R}$).

In general, the nonlinear Fourier spectrum for the potential $q(t)$ comprises 
a \emph{discrete} and a \emph{continuous spectrum}. The discrete spectrum consists 
of the so called \emph{eigenvalues} $\zeta_k\in\field{C}_+$, such that 
$a(\zeta_k)=0$, and, the \emph{norming constants} $b_k$ such that 
$\vs{\phi}(t;\zeta_k)=b_k\vs{\psi}(t;\zeta_k)$. Note that $(\zeta_k,\,b_k)$
describes a \emph{bound state} or a \emph{solitonic state}
associated with the potential. For convenience, let the
discrete spectrum be denoted by the set
\begin{equation}
\mathfrak{S}_K=\{(\zeta_k,\,b_k)\in\field{C}^2|\,\Im{\zeta_k}>0,\,k=1,2,\ldots,K\}. 
\end{equation}
The continuous spectrum, also referred to as the \emph{reflection coefficient}, is 
defined by $\rho(\xi)={b(\xi)}/{a(\xi)}$ for $\xi\in\field{R}$.

Introducing the ``local'' scattering coefficients $a(t;\zeta)$ and 
$b(t;\zeta)$ such that 
$\vs{\phi}(t;\zeta)=(a(t;\zeta)e^{-i\zeta t}, b(t;\zeta)e^{i\zeta t})^{\tp}$, 
the scattering problem in~\eqref{eq:chi} reads as 
\begin{equation}\label{eq:ODE-AB-coeffs}
\begin{split}
&\partial_{t}a(t;\zeta)=q(t)b(t;\zeta)e^{2i\zeta t},\\
&\partial_{t}b(t;\zeta)=r(t)a(t;\zeta)e^{-2i\zeta t}.
\end{split}
\end{equation}
%%%%%%%%%%%%%%%%%%%%%%%%%%%%%%%%%%%%%%%%%%%%%%%%%%%%%%%%%%%%%%%%%%%%%%%%%%%%%%%%%%%%%%%%
%%%%%%%%%%%%%%%%%%%%%%%%%%%%%%%%%%%%%%%%%%%%%%%%%%%%%%%%%%%%%%%%%%%%%%%%%%%%%%%%%%%%%%%%
%%%%%%%%%%%%%%%%%%%%%%%%%%%%%%%%%%%%%%%%%%%%%%%%%%%%%%%%%%%%%%%%%%%%%%%%%%%%%%%%%%%%%%%%
%%%%%%%%%%%%%%%%%%%%%%%%%%%%%%%%%%%%%%%%%%%%%%%%%%%%%%%%%%%%%%%%%%%%%%%%%%%%%%%%%%%%%%%%
%%%%%%%%%%%%%%%%%%%%%%%%%%%%%%%%%%%%%%%%%%%%%%%%%%%%%%%%%%%%%%%%%%%%%%%%%%%%%%%%%%%%%%%%
\subsection{The direct transform: Time-limited signals}\label{sec:tl-sig}
Let the scattering potential $q(t)$ be a time-limited signal with its support in 
$\Omega=[-T_-,T_+]$ where $T_{\pm}\geq0$. The 
initial conditions for the Jost solution $\vs{\phi}$
are: $a(-T_-;\zeta) = 1$ and  $b(-T_-;\zeta) = 0$. The scattering 
coefficients can be directly obtained from these functions as
$a(\zeta)=a(T_+;\zeta)$ and $b(\zeta)=b(T_+;\zeta)$. For the type of potential
at hand, these coefficients are known
to be analytic functions of $\zeta\in\field{C}$~\cite{AKNS1974}.

In this section, it will be useful to transform the ZS problem
in~\eqref{eq:ODE-AB-coeffs} as follows: Let us define, for the sake of
convenience, the modified Jost solution
\begin{equation}\label{eq:def-P}
\wtilde{\vv{P}}(t;\zeta) 
=\vs{\phi}(t;\zeta)e^{i\zeta t}-
\begin{pmatrix}
1\\
0
\end{pmatrix}
=\begin{pmatrix}
a(t;\zeta)-1\\
b(t;\zeta)e^{2i\zeta t}
\end{pmatrix},
\end{equation}
so that
$\wtilde{\vv{P}}(T_+;\zeta)e^{-2i\zeta T_+}
=([a(\zeta)-1]e^{-2i\zeta T_+},b(\zeta))^{\tp}$. The 
system of equations in~\eqref{eq:ODE-AB-coeffs} can be transformed into a set of 
Volterra integral equations of the second kind for $\wtilde{\vv{P}}(t;\zeta)$:
\begin{equation}\label{eq:volterra}
\wtilde{\vv{P}}(t;\zeta)=\vs{\Phi}(t;\zeta)
    +\int_{\Omega}\mathcal{K}(t,y;\zeta)\wtilde{\vv{P}}(y;\zeta)dy,
\end{equation}
where $\vs{\Phi}(t;\zeta)=(\Phi_1,\Phi_2)^{\tp}\in\field{C}^2$ with 
\begin{equation}
\begin{split}
\Phi_1(t;\zeta)&=\int_{-T_-}^{t}q(z)\Phi_2(z;\zeta)dz,\\
\Phi_2(t;\zeta)&=\int_{-T_-}^{t}r(y)e^{2i\zeta(t-y)}dy,
\end{split}
\end{equation}
and the Volterra kernel 
$\mathcal{K}(x,y;\zeta)=\diag(\mathcal{K}_1,\mathcal{K}_2)\in\field{C}^{2\times2}$
is such that
\begin{equation}\label{eq:volterra-kernel}
\begin{split}
&\mathcal{K}_1(x,y;\zeta) = r(y)\int_{y}^{x}q(z)e^{2i\zeta(z-y)}dz,\\
&\mathcal{K}_2(x,y;\zeta) = q(y)\int_{y}^{x}r(z)e^{2i\zeta(x-z)}dz,
\end{split}
\end{equation}
with $\mathcal{K}(x,y;\zeta)=0$ for $y>x$. In the following, we establish that 
$a(\zeta)$ and $b(\zeta)$ are of exponential type in
$\field{C}$. 

%%%%%%%%%%%%%%%%%%%%%%%%%%%%%%%%%%%%%%%%%%%%%%%%%%%%%%%%%%%%%%%%%%%%%%%%%%%%%%%%%%%%%%%%%%%

\begin{theorem}\label{thm:ab-estimate0}
Let $q\in\fs{L}^{1}$ with support in $\Omega$ and set
$\kappa=\|q\|_{\fs{L}^1(\Omega)}$. Then the estimates
\begin{align}
&|b(\zeta)|\leq
\sinh(\kappa)\times
\begin{cases}
    e^{2T_+\Im{\zeta}},&\zeta\in\ovl{\field{C}}_+,\\
    e^{-2T_-\Im{\zeta}},&\zeta\in{\field{C}}_-,
\end{cases}\label{eq:b-estimate-result}\\
&|\tilde{a}(\zeta)|\leq
[\cosh(\kappa)-1]\times
\begin{cases}
    e^{2T_+\Im{\zeta}},&\zeta\in\ovl{\field{C}}_+,\\
    e^{-2T_-\Im{\zeta}},&\zeta\in{\field{C}}_-,
\end{cases}\label{eq:a-estimate-result}
\end{align}
where $\tilde{a}(\zeta)$ denotes $[a(\zeta)-1]e^{-2i\zeta T_+}$, hold.
And, for fixed $\eta\in\field{R}$ such that $|\eta|<\infty$, we have
\begin{equation}\label{thm-ab-limit}
\lim_{\xi\in\field{R},\,|\xi|\rightarrow\infty}|f(\xi+i\eta)|=0,
\end{equation}
where $f(\zeta)$ denotes either $b(\zeta)$ or $\tilde{a}(\zeta)$.
\end{theorem}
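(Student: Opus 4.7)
The plan is to solve~\eqref{eq:ODE-AB-coeffs} by Picard iteration from the initial data $a(-T_-;\zeta) = 1$, $b(-T_-;\zeta) = 0$, and to estimate each term of the resulting Neumann series. Concretely, one obtains
\begin{equation*}
    a(T_+;\zeta) = 1 + \sum_{n \geq 1} a_n(\zeta), \qquad b(T_+;\zeta) = \sum_{n \geq 0} b_n(\zeta),
\end{equation*}
where $a_n(\zeta)$ (respectively $b_n(\zeta)$) is a $2n$-fold (respectively $(2n+1)$-fold) iterated integral over the simplex $\{T_+ \geq s_1 \geq \cdots \geq s_m \geq -T_-\}$, with integrand the alternating product $q(s_1) r(s_2) q(s_3) \cdots$ times an oscillatory factor $e^{\pm 2i\zeta X}$ in which $X$ is a specific alternating linear combination of the $s_j$'s.

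The first step is a careful accounting of the range of $X$ on the simplex. For $b_n$ we have $X = s_1 - s_2 + s_3 - \cdots + s_{2n+1}$, which by the two telescoping identities
\begin{equation*}
    X = s_1 - \sum_{k=1}^n (s_{2k} - s_{2k+1}) = s_{2n+1} + \sum_{k=1}^n (s_{2k-1} - s_{2k})
\end{equation*}
satisfies $s_{2n+1} \leq X \leq s_1$, and hence $X \in [-T_-, T_+]$. This yields the $\fs{L}^\infty$ bound
\begin{equation*}
    |e^{-2i\zeta X}| = e^{2\Im\zeta \cdot X} \leq \begin{cases} e^{2T_+\Im\zeta}, & \zeta \in \ovl{\field{C}}_+, \\ e^{-2T_-\Im\zeta}, & \zeta \in \field{C}_-. \end{cases}
\end{equation*}
Combined with $|r| = |q|$ and the standard simplex integral $\int |q(s_1)|\cdots|q(s_m)|\,ds = \kappa^m/m!$, summing the series gives the stated bound on $|b(\zeta)|$. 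For the $a$-estimate I would first absorb the factor $e^{-2i\zeta T_+}$ into the integrand: in $\tilde{a}_n(\zeta) = a_n(\zeta) e^{-2i\zeta T_+}$ the exponent becomes $2i\zeta(Y - T_+)$ with $Y = \sum_{k=1}^n (s_{2k-1} - s_{2k}) \in [0, T_+ + T_-]$, so $Y - T_+ \in [-T_+, T_-]$; the analogous $\fs{L}^\infty$ estimate on the oscillatory factor then gives $|\tilde{a}(\zeta)| \leq (\cosh\kappa - 1)$ times the same exponential.

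For the limit~\eqref{thm-ab-limit}, I would rewrite each $b_n(\zeta)$ as a one-dimensional Fourier--Laplace integral by pushing the simplex measure forward under the linear map $s \mapsto X(s)$: this produces a kernel $\beta_n \in \fs{L}^1(\field{R})$ with $\supp \beta_n \subseteq [-T_-, T_+]$ and $\|\beta_n\|_1 \leq \kappa^{2n+1}/(2n+1)!$ such that $b_n(\zeta) = \int \beta_n(u)\, e^{-2i\zeta u}\,du$. Summing over $n$ yields $b(\zeta) = \int \beta(u)\, e^{-2i\zeta u}\,du$ with $\beta \in \fs{L}^1$ compactly supported; for fixed $\eta = \Im\zeta$, the function $u \mapsto \beta(u) e^{2\eta u}$ remains in $\fs{L}^1$, so the Riemann--Lebesgue lemma delivers $b(\xi + i\eta) \to 0$ as $|\xi| \to \infty$. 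The same argument applied to $\tilde{a}$ (with kernel supported in $[-T_+, T_-]$) handles the other case.

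The main technical point is the bookkeeping of the exponents in the first step: the two telescoping identities for $X$, and the preemptive factoring of $e^{-2i\zeta T_+}$ out of $a_n$ before taking absolute values, are both essential to recover the asymmetric sharp bounds $e^{2T_+\Im\zeta}$ in $\ovl{\field{C}}_+$ versus $e^{-2T_-\Im\zeta}$ in $\field{C}_-$, rather than the crude symmetric bound $e^{2(T_+ + T_-)|\Im\zeta|}$ that a naive supremum of $|e^{\pm 2i\zeta X}|$ over $s_j \in [-T_-, T_+]$ would produce.
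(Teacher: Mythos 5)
Your proposal is correct, and for the main estimates it is essentially the paper's own argument in unrolled form: the Picard iteration from $(a,b)(-T_-;\zeta)=(1,0)$ is exactly the Neumann series for the resolvent of the Volterra operator $\OP{K}$ that the paper sets up in~\eqref{eq:volterra}, and your simplex bound $\kappa^m/m!$ on the iterated integrals is the same combinatorial fact underlying the paper's operator-norm estimates $\|\OP{K}^{(j)}_n\|\leq\kappa^{2n}/(2n)!$, so both routes sum to the identical $\sinh\kappa$ and $\cosh\kappa-1$ constants. The presentational difference is in how the two half-planes are handled: the paper renormalizes to $\wtilde{\vv{P}}_-=\wtilde{\vv{P}}e^{-2i\zeta t}$ and reruns the estimate with a second set of Volterra kernels~\eqref{eq:volterra-kernel-lh}, whereas you track both endpoints of the phase $X$ at once via the two telescoping identities; your observation that $X\in[s_{2n+1},s_1]$ (and $Y-T_+\in[-T_+,T_-]$ after pre-factoring $e^{-2i\zeta T_+}$) delivers both exponential bounds in one pass, which is tidier. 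The one genuinely different step is the limit~\eqref{thm-ab-limit}: the paper proves a continuity-of-translation lemma for the forcing term $\Phi_2$ (Lemma~\ref{lemma:RL-compact}) and propagates the decay through the bounded resolvent, while you push the simplex measure forward under $s\mapsto X(s)$ to exhibit $b(\zeta)$ and $\tilde a(\zeta)$ as Fourier--Laplace transforms of compactly supported $\fs{L}^1$ kernels and invoke the classical Riemann--Lebesgue lemma directly. Both are valid; yours trades the paper's uniform-in-$t$ estimate for the (easily verified) absolute continuity of the pushforward of the product measure $q(s_1)r(s_2)\cdots\,ds$ under the nondegenerate linear map $X$, and has the side benefit of producing explicitly the compactly supported kernel $\beta$ whose existence the paper only later extracts from the Paley--Wiener theorem.
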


%%%%%%%%%%%%%%%%%%%%%%%%%%%%%%%%%%%%%%%%%%%%%%%%%%%%%%%%%%%%%%%%%%%%%%%%%%%%%%%%%%%%%%%%%%%
\begin{proof}
The proof can be obtained using
the same method as in~\cite{AKNS1974}. For fixed
$\zeta\in\ovl{\field{C}}_+$, let $\OP{K}$ denote the Volterra integral operator
in~\eqref{eq:volterra} corresponding to the kernel $\mathcal{K}(x,y;\zeta)$ such
that
\begin{multline}
\OP{K}[\wtilde{\vv{P}}](t;\zeta)=\int_{\Omega}\mathcal{K}(t,y;\zeta)\wtilde{\vv{P}}(y;\zeta)dy\\
=\int_{-T_-}^tdz\int_{-T_-}^zdy
\begin{pmatrix}
q(z)r(y)e^{2i\zeta(z-y)}\wtilde{{P}}_1(y;\zeta)\\
q(y)r(z)e^{2i\zeta(t-z)}\wtilde{{P}}_2(y;\zeta)
\end{pmatrix}.
\end{multline}
The $\fs{L}^{\infty}(\Omega)$-norm~\cite[Chap.~9]{GLS1990} of
$\OP{K}_j,\,j=1,2,$ is given by
\begin{equation}
\|\OP{K}_j\|_{\fs{L}^{\infty}(\Omega)}=\esssup_{t\in\Omega}\int_{\Omega}|\mathcal{K}_j(t,y;\zeta)|dy,
\end{equation}
so that $\|\OP{K}_j\|_{\fs{L}^{\infty}(\Omega)}\leq\kappa^2/2$~\cite{AKNS1974}. 
The resolvent $\OP{R}_j$ of this operator exists and is given by the Neumann series
$\OP{R}_j=\sum_{n=1}^{\infty}\OP{K}^{(j)}_n$ where
$\OP{K}^{(j)}_n=\OP{K}_j\circ\OP{K}^{(j)}_{n-1}$ with
$\OP{K}^{(j)}_{1}=\OP{K}_j$. It can also be
shown using the methods in~\cite{AKNS1974} that
$\|\OP{K}^{(j)}_n\|_{\fs{L}^{\infty}(\Omega)}\leq{\kappa^{2n}}/{(2n)!}$, 
yielding the estimate 
$\|\OP{R}_j\|_{\fs{L}^{\infty}(\Omega)}\leq [\cosh(\kappa)-1]$.

For $j=2$, the solution of the Volterra integral equation can be stated as
\begin{equation}
\wtilde{P}_2(t;\zeta)={\Phi}_2(t;\zeta)+\OP{R}_2[{\Phi}_2](t;\zeta).
\end{equation}
From the estimate $\|\Phi_2(t;\zeta)\|_{\fs{L}^{\infty}}\leq\kappa$ and 
\begin{align*}
&\|\OP{K}^{(2)}_1[\Phi_2](t;\zeta)\|_{\fs{L}^{\infty}(\Omega)}\leq\frac{\kappa^3}{2\cdot3},\\
&\|\OP{K}^{(2)}_2[\Phi_2](t;\zeta)\|_{\fs{L}^{\infty}(\Omega)}\leq\frac{\kappa^5}{2\cdot3\cdot4\cdot5},\\
&\ldots,
\end{align*}
we have $\|\wtilde{P}_2(t;\zeta)\|_{\fs{L}^{\infty}}\leq\sinh\kappa$ for
$\zeta\in\ovl{\field{C}}_+$. For $j=1$, the solution of the Volterra integral equation can be stated as
\begin{equation}
\wtilde{P}_1(t;\zeta)={\Phi}_1(t;\zeta)+\OP{R}_1[{\Phi}_1](t;\zeta).
\end{equation}
From the estimate $\|\Phi_1(t;\zeta)\|_{\fs{L}^{\infty}}\leq\kappa^2/2$ and 
\begin{align*}
&\|\OP{K}^{(1)}_1[\Phi_1](t;\zeta)\|_{\fs{L}^{\infty}(\Omega)}
\leq\frac{\kappa^4}{2\cdot3\cdot4},\\
&\|\OP{K}^{(1)}_2[\Phi_1](t;\zeta)\|_{\fs{L}^{\infty}(\Omega)}
\leq\frac{\kappa^6}{2\cdot3\cdot4\cdot5\cdot6},\\
&\ldots,
\end{align*}
we have $\|\wtilde{P}_1(t;\zeta)\|_{\fs{L}^{\infty}}\leq[\cosh(\kappa)-1]$ for
$\zeta\in\ovl{\field{C}}_+$.

For the case $\zeta\in\field{C}_-$, we consider
$\wtilde{\vv{P}}_-(t;\zeta)=\wtilde{\vv{P}}(t;\zeta)e^{-2i\zeta t}$ so that
$\wtilde{\vv{P}}_-(T_+;\zeta)
=([a(\zeta)-1]e^{-2i\zeta T_+},b(\zeta))^{\tp}$. The 
Volterra integral equations for 
$\wtilde{\vv{P}}_-(t;\zeta)=(\wtilde{P}^{(-)}_1,\wtilde{P}^{(-)}_2)^{\tp}$ reads as:
\begin{equation}\label{eq:volterra-lh}
\wtilde{P}^{(-)}_j(t;\zeta)={\Phi}^{(-)}_j(t;\zeta)
    +\int_{\Omega}\mathcal{K}^{(-)}_j(t,y;\zeta)\wtilde{P}^{(-)}_j(y;\zeta)dy,
\end{equation}
where ${\Phi}^{(-)}_j(t;\zeta)={\Phi}_j(t;\zeta)e^{-2i\zeta t}$ 
and the Volterra kernels are given by
\begin{equation}\label{eq:volterra-kernel-lh}
\begin{split}
&\mathcal{K}^{(-)}_1(x,y;\zeta) = r(y)\int_{y}^{x}q(z)e^{-2i\zeta(x-z)}dz,\\
&\mathcal{K}^{(-)}_2(x,y;\zeta) = q(y)\int_{y}^{x}r(z)e^{-2i\zeta(z-y)}dz,
\end{split}
\end{equation}
with $\mathcal{K}_-(x,y;\zeta)=0$ for $y>x$. Using the approach outlined above,
it can be shown that, for $\zeta\in\field{C}_-$, 
$\|\wtilde{P}^{(-)}_1(t;\zeta)\|_{\fs{L}^{\infty}(\Omega)}\leq [\cosh(\kappa)-1]e^{-2\Im(\zeta)T_-}$ 
and $\|\wtilde{P}^{(-)}_2(t;\zeta)\|_{\fs{L}^{\infty}(\Omega)}\leq
\sinh(\kappa)e^{-2\Im(\zeta)T_-}$. 

Now combining the two cases, we obtain the
results~\eqref{eq:b-estimate-result} and~\eqref{eq:a-estimate-result}.

The result~\eqref{thm-ab-limit} follows from Lemma~\ref{lemma:RL-compact}.
The proof of this lemma is provided in the Appendix~\ref{app:RL-compact} 
which is similar to that of the Riemann-Lebesgue lemma~\cite[Chap.~13]{Jones2001}.
\end{proof}
%%%%%%%%%%%%%%%%%%%%%%%%%%%%%%%%%%%%%%%%%%%%%%%%%%%%%%%%%%%%%%%%%%%%%%%%%%%%%%%%%%%
\begin{lemma}\label{lemma:RL-compact} Let $q\in\fs{L}^{1}$ be supported in $\Omega$, then
\begin{equation}
\lim_{\xi\in\field{R},\,|\xi|\rightarrow\infty}\|{\Phi}_2(t;\xi+i\eta)\|_{\fs{L}^{\infty}(\Omega)}=0,
\end{equation}
for fixed $\eta\in\field{R}$ and $|\eta|<\infty$.
\end{lemma}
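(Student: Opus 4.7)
The plan is to follow the classical Riemann--Lebesgue argument, adapted so that the decay is uniform in the free upper limit $t\in\Omega$. For $\zeta = \xi + i\eta$, I first rewrite
\[
\Phi_2(t;\zeta) = \int_{-T_-}^{t} r(y)\, e^{-2\eta(t-y)}\, e^{2i\xi(t-y)}\, dy,
\]
so that the imaginary part $\eta$ contributes only a bounded weight: since $|t-y|\leq T_- + T_+$ for $t,y \in \Omega$, one has $|e^{-2\eta(t-y)}| \leq C_\eta := e^{2|\eta|(T_-+T_+)}$. Given $\varepsilon>0$, I would then choose $\phi\in\fs{C}_0^1(\Omega)$ with $\|r-\phi\|_{\fs{L}^1(\Omega)}<\varepsilon$ (possible by density of $\fs{C}_0^1(\Omega)$ in $\fs{L}^1(\Omega)$), and split $\Phi_2 = \Phi_2^{(1)}+\Phi_2^{(2)}$ according to $r=(r-\phi)+\phi$.

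The rough piece is dispatched by the weight bound alone: $\sup_{t\in\Omega}|\Phi_2^{(1)}(t;\zeta)|\leq C_\eta\,\varepsilon$, uniformly in $\xi$. For the smooth piece I would integrate by parts in $y$, taking $u=\phi(y)$ and $dv=e^{2i\zeta(t-y)}\,dy$. Because $\phi$ vanishes at $y=-T_-$, only the upper boundary contribution $-\phi(t)/(2i\zeta)$ and a volume term involving $\phi'(y)$ survive, both carrying an explicit prefactor of $1/(2i\zeta)$. Using $|\zeta|\geq|\xi|$ together with the weight bound on the residual oscillatory integral, this yields
\[
\sup_{t\in\Omega}\bigl|\Phi_2^{(2)}(t;\xi+i\eta)\bigr|\leq \frac{1}{2|\xi|}\bigl(\|\phi\|_{\infty}+C_\eta\|\phi'\|_{\fs{L}^1}\bigr),
\]
which tends to $0$ as $|\xi|\to\infty$ with $\phi$ held fixed. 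Sending $|\xi|\to\infty$ first and then $\varepsilon\to 0$ closes the argument.

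The one delicate point is enforcing uniformity in $t$, since the upper limit of integration is variable. This is handled by observing that the upper-boundary term after integration by parts is $\phi(t)/(2i\zeta)$ with $\|\phi\|_{\infty}$ finite and $t$-independent, while the lower-boundary term is annihilated by the vanishing of $\phi$ at $\partial\Omega$. No other conceptual obstacle is expected, since the proof reduces to the textbook Riemann--Lebesgue argument once the $\eta$-dependent exponential is absorbed into the constant $C_\eta$.
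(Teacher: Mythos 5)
Your proof is correct, but it takes a genuinely different route from the paper's. You use the ``smooth approximation plus integration by parts'' version of the Riemann--Lebesgue argument: absorb the weight $e^{-2\eta(t-y)}$ into a single constant $C_\eta=e^{2|\eta|(T_-+T_+)}$ (valid since $0\le t-y\le T_-+T_+$ on the domain of integration), approximate $r$ in $\fs{L}^1(\Omega)$ by $\phi\in\fs{C}^1_0(\Omega)$, and gain the factor $1/|2\zeta|\le 1/(2|\xi|)$ on the smooth piece by parts, with the variable upper limit causing no harm because the boundary term at $y=t$ is controlled by $\|\phi\|_\infty$ and the one at $y=-T_-$ vanishes. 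The paper instead uses the half-period translation trick: with $\tau=\pi\xi/(2|\xi|^2)$ it writes $2|\Phi_2(t;\xi)|\leq\int|q(y)-q(y+\tau)|\,dy+\int_t^{t+\tau}|q(y)|\,dy$ and invokes continuity of translation in $\fs{L}^1$ together with absolute continuity of the integral, then repeats the estimate with the $\eta$-dependent exponential weights carried through explicitly (which produces an extra middle term proportional to $(1-e^{-2\eta\tau})/(1+e^{-2\eta\tau})\,\|q\|_{\fs{L}^1}$). The two arguments are morally equivalent --- continuity of translation in $\fs{L}^1$ is itself usually proved by density of smooth functions --- but yours handles the complex shift more cleanly by paying the price $C_\eta$ once up front, and it makes visible the quantitative mechanism ($O(1/|\xi|)$ decay for differentiable data with vanishing boundary values) that the paper later exploits separately in Proposition~\ref{prop:b-estimate}; the paper's version avoids introducing an approximating sequence and stays entirely within $\fs{L}^1$ machinery.
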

%%%%%%%%%%%%%%%%%%%%%%%%%%%%%%%%%%%%%%%%%%%%%%%%%%%%%%%%%%%%%%%%%%%%%%%%%%%%%%%%%%%%%%%%%

An immediate consequence of the preceding theorem together with the Paley-Wiener 
theorem~\cite[Chap.~VI]{Yosida1968} is that 
\begin{equation}
\supp\fourier^{-1}[f]\subset[-2T_+,2T_-].
\end{equation}
The Fourier transformation in the above equation is understood in the sense of distributions. 

Let us assume that $a(\zeta)$ has no zeros in
$\ovl{\field{C}}_+$. Consider the limit
$\lim_{r\rightarrow\infty}\log\left[1/|a(re^{i\theta})|\right]=0$ for
$0\leq\theta\leq\pi$. Then there exists a constant $M>0$ such that 
$|a(\zeta)|^{-1}\leq M,$ for $\zeta\in\ovl{\field{C}}_+$.
Therefore, $1/a(\zeta)$ is of exponential type zero in 
$\ovl{\field{C}}_+$\footnote{This can also be 
confirmed by the exact representation of $\log|a(\zeta)|$ which 
is guaranteed by~\cite[Thm~6.5.4]{Boas1954} in the upper-half plane:
\begin{equation}
\log|a(\zeta)|=\frac{\eta}{\pi}\int_{-\infty}^{\infty}
\frac{\log|a(\sigma)|}{(\sigma-\xi)^2+\eta^2}d\sigma,
\end{equation}
where $\zeta=\xi+i\eta$ with $\xi\in\field{R}$ and $\eta>0$.}. Then 
from Theorem~\ref{thm:ab-estimate0}, it follows that $\rho(\zeta)$ is of exponential
type $2T_+$ in $\ovl{\field{C}}_+$. In particular,  
\begin{equation}
|\rho(\zeta)|\leq CMe^{2T_+\Im{\zeta}},\quad\zeta\in\ovl{\field{C}}_+.
\end{equation}
Therefore, $\supp \fourier^{-1}[\rho]\subset[-2T_+,\infty)$. It is clear from the analysis 
above that the decay behavior of $\rho(\zeta)$ is 
largely dictated by the behavior of the scattering coefficient $b(\zeta)$.

Next, consider the situation when the discrete spectrum of a time-limited signal is
not empty. Let us mention that the nature of the discrete spectrum for time-limited 
signals is discussed in~\cite{AKNS1974}. To summarize these properties, let us observe 
that all the scattering coefficient, namely, $a(\zeta)$, $\ovl{a}(\zeta)$, $b(\zeta)$ 
and $\ovl{b}(\zeta)$ are entire functions of $\zeta$ with $\ovl{a}(\zeta) =
a^*(\zeta^*)$ and $\ovl{b}(\zeta)=b^*(\zeta^*)$. It is also straightforward to
conclude that the relationship 
\begin{equation}
a(\zeta)a^*(\zeta^*)+b(\zeta)b^*(\zeta^*)=1,
\end{equation}
holds for all $\zeta\in\field{C}$. For any eigenvalue $\zeta_k$, this relationship 
takes the form $b(\zeta_k)b^*(\zeta_k^*)=1$. Therefore, the norming constants
are given by $b_k=b(\zeta_k)=1/b^*(\zeta_k^*)$.
\begin{prop}
Let $(\zeta_k, b_k)$ be the eigenvalue and the corresponding norming constant
belonging to the discrete spectrum of a time-limited signal; then
\begin{equation}
b_k=b(\zeta_k)=1/b^*(\zeta_k^*).
\end{equation}
\end{prop}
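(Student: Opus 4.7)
The plan is to establish the two claimed equalities in succession, the first essentially by definition and the second by exploiting the entire nature of the scattering coefficients for a time-limited signal.

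First I would prove $b_k = b(\zeta_k)$ by reading off the asymptotic behavior of the eigenfunction as $t\to+\infty$. By hypothesis $\zeta_k$ is an eigenvalue, so $a(\zeta_k)=0$ and the defining relation $\vs{\phi}(t;\zeta_k) = b_k\vs{\psi}(t;\zeta_k)$ holds. Using the asymptotic form $\vs{\phi}\sim (a(\zeta_k)e^{-i\zeta_k t},\,b(\zeta_k)e^{i\zeta_k t})^{\tp}$ on the left (whose first component vanishes) and $\vs{\psi}\sim(0,e^{i\zeta_k t})^{\tp}$ on the right, comparison of the second components immediately gives $b_k=b(\zeta_k)$. (This step is really just the standard characterization of the norming constant; it does not use any special structure of time-limited potentials.)

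The second equality $b(\zeta_k)=1/b^{*}(\zeta_k^{*})$ is where the time-limited hypothesis enters. Because $\supp q\subset[-T_-,T_+]$ is compact, the coefficients $a$, $\ovl{a}$, $b$, $\ovl{b}$ are entire functions of $\zeta$ (as noted in the preamble to the statement, and consistent with Theorem~\ref{thm:ab-estimate0}). On the real line one has the unitarity identity
\begin{equation*}
a(\xi)a^{*}(\xi)+b(\xi)b^{*}(\xi)=1,\qquad \xi\in\field{R},
\end{equation*}
which comes from conservation of the Wronskian together with the symmetry $r=-q^{*}$. The Schwarz reflection principle, applied to the entire functions $a$ and $b$, identifies $a^{*}(\xi)=a^{*}(\xi^{*})$ (restricted to the real line) with the boundary values of the entire functions $\zeta\mapsto a^{*}(\zeta^{*})$ and $\zeta\mapsto b^{*}(\zeta^{*})$. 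Both sides of the unitarity identity are then entire functions of $\zeta$ agreeing on $\field{R}$, so by the identity theorem
\begin{equation*}
a(\zeta)a^{*}(\zeta^{*})+b(\zeta)b^{*}(\zeta^{*})=1\quad\text{for all } \zeta\in\field{C}.
\end{equation*}

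Finally, I would specialize to $\zeta=\zeta_k$. The eigenvalue condition $a(\zeta_k)=0$ eliminates the first term and leaves $b(\zeta_k)\,b^{*}(\zeta_k^{*})=1$, i.e.\ $b(\zeta_k)=1/b^{*}(\zeta_k^{*})$. Combined with the first step this yields the proposition. The only substantive point is the analytic continuation of the unitarity relation off the real axis, and this rests entirely on the fact that compact support of $q$ forces $a$, $b$, and their conjugate partners to be entire rather than merely analytic in a half-plane; for general $\fs{L}^{1}$ potentials the identity holds only on $\field{R}$ and the argument would break down at $\zeta_k\in\field{C}_+$.
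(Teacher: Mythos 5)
Your proof is correct and follows essentially the same route as the paper: entirety of the scattering coefficients for a compactly supported potential, analytic continuation of the unitarity relation $a(\zeta)a^*(\zeta^*)+b(\zeta)b^*(\zeta^*)=1$ from $\field{R}$ to all of $\field{C}$ by the identity theorem, and evaluation at the eigenvalue where $a(\zeta_k)=0$. One small caveat on your parenthetical remark: the first step $b_k=b(\zeta_k)$ is not really independent of the time-limited hypothesis, since for a general $\fs{L}^1$ potential $b(\zeta)$ is defined only on $\field{R}$, and comparing asymptotics at $\zeta_k\in\field{C}_+$ already requires the analytic continuation of $b$ that compact support (or one-sidedness) provides.
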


In the following, the space of complex-valued functions of bounded 
variation over $\field{R}$ is denoted by $\fs{BV}$ and the variation of any
function $f\in\fs{BV}$ over $\Omega\subset\field{R}$ is denoted by
$\OP{V}[f;\Omega]$. If $q\in\fs{BV}$, then $\partial_tq\in\fs{L}^1$ 
exists almost everywhere such that 
$\|\partial_tq\|_{\fs{L}^1}\leq\OP{V}[q;\Omega]$~\cite[Chap.~16]{Jones2001}. Let 
$q^{(1)}$ be equivalent to $\partial_tq$ so that 
$\|q^{(1)}\|_{\fs{L}^1}=\|\partial_tq\|_{\fs{L}^1}$. 
%%%%%%%%%%%%%%%%%%%%%%%%%%%%%%%%%%%%%%%%%%%%%%%%%%%%%%%%%%%%%%%%%%%%%%%%%%%%%%%%%%%%%%%%%%%%
\begin{prop}\label{prop:b-estimate}
Let $q\in\fs{BV}$ with support in $\Omega=[-T_-,T_+]$ such that it vanishes on $\partial\Omega$.
Define $\kappa=\|q\|_{\fs{L}^{1}}$ and 
$D=\frac{1}{2}\|q\|_{\fs{L}^{\infty}}+\|q\|_{\fs{L}^{1}}+\frac{1}{2}\|q^{(1)}\|_{\fs{L}^{1}}$; 
then, the estimate
\begin{equation}\label{prop:b-estimate-result}
|b(\zeta)|\leq
\frac{D\cosh(\kappa)}{1+|\zeta|}\times
\begin{cases}
    e^{2T_+\Im{\zeta}},&\zeta\in\ovl{\field{C}}_+,\\
    e^{-2T_-\Im{\zeta}},&\zeta\in{\field{C}}_-,
\end{cases}
\end{equation}
holds. Further, if $\beta(\tau)=\fourier^{-1}[b](\tau)$, then 
\begin{equation}
\supp\beta\subset[-2T_+, 2T_-],
\end{equation}
and $\beta\in\fs{L}^1\cap\fs{L}^2$.
\end{prop}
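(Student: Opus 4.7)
The plan is to refine the $\fs{L}^{\infty}$-bound on the Volterra forcing $\Phi_2(t;\zeta)$ appearing in the representation $\wtilde{P}_2(t;\zeta)=\Phi_2(t;\zeta)+\OP{R}_2[\Phi_2](t;\zeta)$ from Theorem~\ref{thm:ab-estimate0}, exploiting the extra regularity afforded by $q\in\fs{BV}$ together with the vanishing boundary condition. Since the operator bound $\|\OP{K}_2^n\|_{\fs{L}^{\infty}(\Omega)}\leq\kappa^{2n}/(2n)!$ is independent of $\zeta$, any pointwise improvement of $\Phi_2(\cdot;\zeta)$ in the frequency variable transfers to $\wtilde{P}_2$ modulo a factor $\cosh(\kappa)=\sum_{n\geq 0}\kappa^{2n}/(2n)!$. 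The heart of the argument is the integration by parts
\[
\Phi_2(t;\zeta)=\int_{-T_-}^{t}r(y)\,e^{2i\zeta(t-y)}\,dy = -\frac{r(t)}{2i\zeta}+\frac{1}{2i\zeta}\int_{-T_-}^{t}r^{(1)}(y)\,e^{2i\zeta(t-y)}\,dy,
\]
valid for $\zeta\neq 0$, in which the boundary term at $y=-T_-$ drops because $r(-T_-)=0$. For $\zeta\in\ovl{\field{C}}_+$ the phase factor satisfies $|e^{2i\zeta(t-y)}|\leq 1$ on $y\leq t$, yielding the high-frequency estimate $|\Phi_2(t;\zeta)|\leq(\|q\|_\infty+\|q^{(1)}\|_1)/(2|\zeta|)$. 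Combining this with the trivial bound $|\Phi_2(t;\zeta)|\leq\kappa$ used in Theorem~\ref{thm:ab-estimate0}, through the elementary inequality $\min\{A,B/|\zeta|\}\leq(A+B)/(1+|\zeta|)$, produces $\|\Phi_2(\cdot;\zeta)\|_{\fs{L}^{\infty}(\Omega)}\leq D/(1+|\zeta|)$ with $D$ exactly as in the statement.

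Plugging this refined bound into the resolvent expansion yields $\|\wtilde{P}_2(\cdot;\zeta)\|_{\fs{L}^{\infty}(\Omega)}\leq D\cosh(\kappa)/(1+|\zeta|)$ for $\zeta\in\ovl{\field{C}}_+$, and multiplying by $|e^{-2i\zeta T_+}|=e^{2T_+\Im\zeta}$ produces the upper half-plane estimate in~\eqref{prop:b-estimate-result}. For $\zeta\in\field{C}_-$, the same integration by parts applied to the forcing term $\Phi_2^{(-)}(t;\zeta)=\int_{-T_-}^{t}r(y)\,e^{-2i\zeta y}\,dy$ of the modified Volterra system~\eqref{eq:volterra-lh}, after tracking the $y$-dependence of the exponential on $[-T_-,t]$, gives the companion estimate carrying the factor $e^{-2T_-\Im\zeta}$.

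For the integrability claim, $b(\zeta)$ is entire because $q$ is time-limited, and~\eqref{prop:b-estimate-result} identifies it as a function of exponential type whose Phragm\'en-Lindel\"of indicator satisfies $h(\theta)=2T_+\sin\theta$ on $(0,\pi)$ and $h(\theta)=-2T_-\sin\theta$ on $(-\pi,0)$. Moreover $|b(\xi)|\leq D\cosh(\kappa)/(1+|\xi|)$ on $\field{R}$ places $b\in\fs{L}^2(\field{R})$. The $\fs{L}^2$-Paley-Wiener theorem then yields $\supp\beta\subset[-2T_+,2T_-]$ and, via Plancherel, $\beta\in\fs{L}^2$; since $\beta$ is compactly supported and square-integrable, Cauchy-Schwarz delivers $\beta\in\fs{L}^1$. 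The principal technical obstacle is the integration by parts in the BV setting: strictly, the distributional derivative of $q$ is a Radon measure with total variation $\OP{V}[q;\Omega]$, and $\|q^{(1)}\|_1\leq\OP{V}[q;\Omega]$ when $q^{(1)}$ denotes the pointwise a.e.\ derivative. For absolutely continuous $q$ the displayed identity is immediate; in the general BV case one passes to the Lebesgue-Stieltjes form of integration by parts (or approximates $q$ by smooth functions sharing the vanishing boundary condition), and the resulting estimate retains the $1/(1+|\zeta|)$ decay.
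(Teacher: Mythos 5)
Your proof follows the paper's argument essentially verbatim: the same resolvent bound $\|\wtilde{P}_2\|_{\fs{L}^{\infty}(\Omega)}\leq\cosh(\kappa)\|\Phi_2\|_{\fs{L}^{\infty}(\Omega)}$ combined with an integration-by-parts refinement of $\Phi_2$ (and of $\Phi_2^{(-)}$ in the lower half-plane), followed by Paley--Wiener; your explicit $\min\{A,B/|\zeta|\}$ interpolation and your remark on the BV technicality merely spell out steps the paper dismisses as ``straightforward.'' The only deviation is the final $\beta\in\fs{L}^1$ claim, where the paper invokes part (a) of Lemma~\ref{lemma:inverse-FL} while you use compact support of $\beta$ plus Cauchy--Schwarz --- a valid and in fact simpler route in this time-limited setting.
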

%%%%%%%%%%%%%%%%%%%%%%%%%%%%%%%%%%%%%%%%%%%%%%%%%%%%%%%%%%%%%%%%%%%%%%%%%%%%%%%%%%%%%%%%%%%%
\begin{proof}
%~\eqref{eq:volterra-lh}.
The proof follows the same line of reasoning as in
Theorem~\ref{thm:ab-estimate0}. Addressing the case $\zeta\in\ovl{\field{C}}_+$,
consider the second component in the Volterra integral
equation~\eqref{eq:volterra}:
\begin{equation}\label{eq:P2-resolve}
\wtilde{P}_2(t;\zeta)={\Phi}_2(t;\zeta)+\OP{K}_2[{\Phi}_2](t;\zeta).
\end{equation}
Recall that $\wtilde{P}_2(t;\zeta)=b(t;\zeta)e^{2i\zeta t}$ from~\eqref{eq:def-P}.
In the proof of Theorem~\ref{thm:ab-estimate0}, it was shown that the solution
of~\eqref{eq:P2-resolve} can be stated as
\begin{equation}
\wtilde{P}_2(t;\zeta)={\Phi}_2(t;\zeta)+\OP{R}_2[{\Phi}_2](t;\zeta),
\end{equation}
where $\OP{R}_2$ is the resolvent of the kernel $\mathcal{K}_2$ defined
in~\eqref{eq:volterra-kernel}. From the
observation $\|\OP{R}_2\|_{\fs{L}^{\infty}(\Omega)}\leq [\cosh(\kappa)-1]$, we
obtain the estimate
\begin{equation}\label{eq:estimate-P2}
\|\wtilde{P}_2(t;\zeta)\|_{\fs{L}^{\infty}(\Omega)}\leq
\cosh(\kappa)\|\Phi_2(t;\zeta)\|_{\fs{L}^{\infty}(\Omega)}.
\end{equation}
Now, for fixed $\zeta\in\ovl{\field{C}}_+$, using the fact that $q\in\fs{BV}(\Omega)$ and 
that it vanishes on $\partial\Omega$, it is straightforward 
to show, using integration by parts, that
\begin{equation}
\|{\Phi}_2(t;\zeta)\|_{\fs{L}^{\infty}}\leq\frac{D}{(1+|\zeta|)}.
\end{equation}
Plugging this result in~\eqref{eq:estimate-P2} yields the case 
$\zeta\in\ovl{\field{C}}_+$ in~\eqref{prop:b-estimate-result}.

Again, using integration by parts, for fixed $\zeta\in\field{C}_-$, it is also straightforward 
to show that 
\begin{equation}
\|{\Phi}^{(-)}_2(t;\zeta)\|_{\fs{L}^{\infty}(\Omega)}\leq
\frac{D}{(1+|\zeta|)}e^{-2T_-\Im\zeta}.
\end{equation}
From~\eqref{eq:volterra-lh}, an estimate of the form
\begin{equation}
\|b(t;\zeta)\|_{\fs{L}^{\infty}(\Omega)}\leq
\cosh(\kappa)\|\Phi^{(-)}_2(t;\zeta)\|_{\fs{L}^{\infty}(\Omega)},
\end{equation}
can be obtained which then yields the case $\zeta\in\field{C}_-$ 
in~\eqref{prop:b-estimate-result}. 

The estimate in~\eqref{prop:b-estimate-result} shows that $b(\zeta)$ is of
exponential-type in $\field{C}$ and the Paley-Wiener theorem~\cite[Chap.~VI]{Yosida1968}
ensures that $\supp\beta\subset[-2T_+, 2T_-]$. The last part of the 
result requires the part (a) of Lemma~\ref{lemma:inverse-FL}.
\end{proof}
%%%%%%%%%%%%%%%%%%%%%%%%%%%%%%%%%%%%%%%%%%%%%%%%%%%%%%%%%%%%%%%%%%%%%%%%%%%%%%%%%%%%%
For the sake of convenience, let us introduce the following class of functions:
\begin{defn}
A function $F(\zeta)$ is said to belong to the class $\fs{H}_+(T)$ if it is
analytic in $\ovl{\field{C}}_+$ and satisfies the following estimate
\[
|F(\zeta)|\leq\frac{C}{1+|\zeta|}e^{2T\Im\zeta},\quad\zeta\in\ovl{\field{C}}_+,
\]
for some constant $C>0$. Similarly, a function $F(\zeta)$ is said to belong to the class 
$\fs{H}_-(T)$ if it is
analytic in $\ovl{\field{C}}_-$ and satisfies the following estimate
\[
|F(\zeta)|\leq\frac{C}{1+|\zeta|}e^{-2T\Im\zeta},\quad\zeta\in\ovl{\field{C}}_-,
\]
for some constant $C>0$.
\end{defn}
Clearly, if $F(\zeta)\in\fs{H}_+(T)$, then $F^*(\zeta^*)\in\fs{H}_-(T)$. The inverse 
Fourier-Laplace Transform of such functions is studied in the
following lemma which is proved in the Appendix~\ref{app:inverse-FL}:
\begin{lemma}\label{lemma:inverse-FL}
Let $F\in\fs{H}_+(T)$ and define $f(\tau)=\fourier^{-1}[F](\tau)$. 
\begin{itemize}
\item[$(a)$]The function $f(\tau)$ is supported in 
$[-2T,\infty)$ and belongs to $\fs{L}^1\cap\fs{L}^2$.

\item[$(b)$] Define $G(\zeta)=i\zeta F(\zeta)-\mu e^{-2i\zeta T}$ where $\mu$ is such
that $G(\zeta)e^{2i\zeta T}\rightarrow0$ as $|\zeta|\rightarrow\infty$ in
$\ovl{\field{C}}_+$ and $G\in\fs{H}_+(T)$. Let $g(\tau)=\fourier^{-1}[G](\tau)$; then, 
$g(\tau)$ is supported in $[-2T,\infty)$ and belongs to $\fs{L}^1\cap\fs{L}^2$.
Furthermore, $f(\tau)=\mu+\int^{\tau}_{-2T}g(\tau')d\tau'$ 
so that it has a finite limit as $\tau\rightarrow-2T+0$ and it is bounded on
$[-2T,\infty)$.
\end{itemize}
\end{lemma}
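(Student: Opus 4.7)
The plan is to reduce part (a) to the classical Paley--Wiener theorem for $H^{2}(\field{C}_+)$ after pulling out the exponential factor $e^{2iT\zeta}$, and to obtain part (b) by converting the algebraic definition of $G$ into a distributional differential identity for $f$.

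For part (a), I would introduce $\tilde{F}(\zeta)=F(\zeta)e^{2iT\zeta}$. The hypothesis $F\in\fs{H}_+(T)$ yields the uniform bound $|\tilde{F}(\zeta)|\leq C/(1+|\zeta|)$ throughout $\ovl{\field{C}}_+$, so $\tilde{F}$ is analytic in the upper half plane and $\tilde{F}\in\fs{L}^{2}(\field{R})\cap H^{2}(\field{C}_+)$. Plancherel then gives $\tilde{f}=\fourier^{-1}[\tilde{F}]\in\fs{L}^{2}(\field{R})$, and the classical $H^{2}$ Paley--Wiener theorem forces $\supp\tilde{f}\subset[0,\infty)$. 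The translation identity $f(\tau)=\tilde{f}(\tau+2T)$ then delivers both the support $[-2T,\infty)$ and the $\fs{L}^{2}$ membership for $f$. An independent verification of the support comes from the contour shift $\Gamma\to\Gamma_{\eta}$ as $\eta\to\infty$, using the pointwise bound $|F(\zeta)e^{-i\zeta\tau}|\leq C(1+|\zeta|)^{-1}e^{(2T+\tau)\Im\zeta}$ on horizontal slices.

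For part (b), the subtracted term $\mu e^{-2iT\zeta}$ is chosen precisely so that $G\in\fs{H}_+(T)$: it removes the asymptotic constant of $i\zeta F(\zeta)e^{2iT\zeta}$ at infinity that would otherwise obstruct the $1/(1+|\zeta|)$ decay. Applying part (a) to $G$ then yields $g=\fourier^{-1}[G]\in\fs{L}^{1}\cap\fs{L}^{2}$ supported in $[-2T,\infty)$. Taking the inverse Fourier--Laplace transform of the defining identity $i\zeta F=G+\mu e^{-2iT\zeta}$, and using $\fourier^{-1}[i\zeta F](\tau)=-f'(\tau)$ together with $\fourier^{-1}[e^{-2iT\zeta}](\tau)=\delta(\tau+2T)$, I obtain the distributional relation $-f'(\tau)=g(\tau)+\mu\,\delta(\tau+2T)$. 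Integrating in $\tau$ from $-\infty$, where $f\equiv 0$ by the support statement of part (a), yields the claimed integral representation $f(\tau)=\mu+\int_{-2T}^{\tau}g(\tau')d\tau'$ (up to the overall sign convention for $\mu$). From this, the boundedness on $[-2T,\infty)$, the finite limit as $\tau\to-2T+0$, and the support claim for $f$ all follow immediately.

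The most delicate step is the $\fs{L}^{1}$ assertion in part (a). The decay $\tilde{F}=O(1/|\zeta|)$ is exactly borderline, so $\tilde{F}\notin\fs{L}^{1}(\field{R})$ in general and one cannot conclude $\tilde{f}\in\fs{L}^{1}$ from the boundary values alone via a Hausdorff--Young argument. I would close this gap by exploiting the stronger fact that $\tilde{F}$ is analytic throughout $\ovl{\field{C}}_+$ and tends pointwise to zero at infinity there: either by a Poisson representation combined with the one-sided support $\supp\tilde{f}\subset[0,\infty)$, or by applying part (b) once to realize $f$ as an absolutely continuous function whose derivative lies in $\fs{L}^{1}$, and then combining absolute continuity, the support in $[-2T,\infty)$, and decay at $+\infty$ (a Riemann--Lebesgue statement in the spirit of Lemma~\ref{lemma:RL-compact}) to deduce $\fs{L}^{1}$ integrability.
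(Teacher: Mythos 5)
Your treatment of the support and $\fs{L}^2$ statements in part (a) (factoring out $e^{2iT\zeta}$ and invoking the $H^2$ Paley--Wiener theorem) and your derivation in part (b) follow essentially the paper's route: the paper establishes the absolute continuity of $f$ by identifying $G(\zeta)/(-i\zeta)$ as the Fourier--Laplace transform of $\int_0^{\tau}g(\tau')d\tau'$ and writing $G(\zeta)/(i\zeta)=F(\zeta)-\mu/(i\zeta)$, which is your distributional identity $-f'=g+\mu\,\delta(\tau+2T)$ written out with integration by parts; your observation about the sign attached to $\mu$ is well taken.

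The genuine gap is the $\fs{L}^1$ assertion in part (a), which you correctly single out as the delicate point but do not close. Your first remedy (``Poisson representation combined with the one-sided support'') is not an argument as stated. Your second remedy fails for two reasons. First, it invokes part (b), which carries the additional hypothesis that $i\zeta F(\zeta)-\mu e^{-2i\zeta T}$ again lies in $\fs{H}_+(T)$; part (a) must hold for every $F\in\fs{H}_+(T)$ without that assumption. Second, and more seriously, the implication itself is false: a function that is absolutely continuous, supported on a half-line, has derivative in $\fs{L}^1$ and vanishes at $+\infty$ need not be integrable --- $f(\tau)=(1+\tau)^{-1}\theta(\tau)$ is a counterexample. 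The paper's mechanism is the missing idea: a Cauchy estimate on the disc $|\zeta'-(\xi+i\eta)|\leq\eta/2$ gives $|F'(\xi+i\eta)|\leq 2C/[\eta(1+|\xi+i\eta|-\eta/2)]$, so that $F'(\cdot+i\eta)\in\fs{L}^2(\field{R})$ for $\eta>0$ and, passing to boundary values, $F'\in\fs{L}^2(\field{R})$; by Plancherel $\tau f(\tau)\in\fs{L}^2$, and the weighted Cauchy--Schwarz inequality
\[
\|f\|_{1}=\int\frac{(1+\tau^2)^{1/2}|f(\tau)|}{(1+\tau^2)^{1/2}}\,d\tau
\leq\Bigl(\int\frac{d\tau}{1+\tau^2}\Bigr)^{1/2}\Bigl(\int(1+\tau^2)|f(\tau)|^2\,d\tau\Bigr)^{1/2}
\]
then yields $f\in\fs{L}^1$. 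Without an estimate of this kind --- converting the interior analyticity of $F$ into quantitative $\fs{L}^2$ control of $F'$ on the boundary --- your proof of part (a) is incomplete.
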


%%%%%%%%%%%%%%%%%%%%%%%%%%%%%%%%%%%%%%%%%%%%%%%%%%%%%%%%%%%%%%%%%%%%%%%%%%%%%%%%%%%%%%%%%%
Strengthening the regularity conditions on the potential allows us to strengthen
the regularity results for $\beta(\tau)$ as evident from the following theorem:
\begin{theorem}\label{thm:b-nth-derivative}
Let $m$ be a finite positive integer. Let
$q\in\fs{C}^{m+1}_0(\Omega)$. Then the function $\beta(\tau)=\fourier^{-1}[b](\tau)$ is such that
$\supp\beta(\tau)\subset[-2T_+,2T_-]$ and it is $m$-times differentiable in 
$(-2T_+,2T_-)$. Furthermore, $\partial^k_{\tau}\beta(\tau)\rightarrow0$ as 
$\tau\rightarrow 2T_--$ and $\tau\rightarrow-2T_++$ for all $0\leq k\leq m$.
\end{theorem}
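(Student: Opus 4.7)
The conclusion will follow once I establish the strengthened decay estimate
\[
|b(\zeta)| \leq \frac{C_m}{(1+|\zeta|)^{m+2}}\cdot
\begin{cases}
e^{2T_+\Im\zeta}, & \zeta\in\ovl{\field{C}}_+,\\
e^{-2T_-\Im\zeta}, & \zeta\in{\field{C}}_-,
\end{cases}
\]
which is the natural $(m+1)$-fold refinement of Proposition~\ref{prop:b-estimate} under the stronger regularity $q\in\fs{C}^{m+1}_0(\Omega)$.

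\emph{Step 1 (strengthened decay).} The hypothesis delivers $q^{(j)}(\pm T_\mp)=0$ and $r^{(j)}(\pm T_\mp)=0$ for $0\le j\le m+1$. Expand $b(\zeta)$ as the Volterra/Neumann series
\[
b(\zeta)=\sum_{n=0}^{\infty}\int_{\Delta_n} r(t_1)q(t_2)\cdots r(t_{2n+1})\,e^{-2i\zeta\sigma_n}\,dt_1\cdots dt_{2n+1},
\]
where $\Delta_n=\{T_+\ge t_1\ge\cdots\ge t_{2n+1}\ge-T_-\}$ and $\sigma_n=\sum_{j=1}^{2n+1}(-1)^{j+1}t_j$. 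For each $n$, iterate integration by parts $m+2$ times against the oscillatory factor $e^{-2i\zeta\sigma_n}$, alternating the variable of integration. External boundary terms at $t_j=\pm T_\mp$ vanish by the assumed $\fs{C}^{m+1}_0$ regularity; internal boundary terms (at coincident points $t_j=t_{j-1}$) yield integrals that re-organise into the same Neumann hierarchy, so after summation in $n$ the factorial bounds already extracted in the proof of Theorem~\ref{thm:ab-estimate0} continue to hold, now augmented by the $(1+|\zeta|)^{-(m+2)}$ factor from the oscillatory integration.

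\emph{Step 2 (smoothness and boundary vanishing).} Step~1 shows $(-i\zeta)^k b(\zeta)\in\fs{H}_+(T_+)\cap\fs{H}_-(T_-)\cap\fs{L}^1(\field{R})$ for each $0\le k\le m$. Lemma~\ref{lemma:inverse-FL}(a) then places the inverse Fourier transform in $\fs{L}^1\cap\fs{L}^2$ with support in $[-2T_+,2T_-]$, and standard Fourier-differentiation identities identify it with $\beta^{(k)}(\tau)$. The $\fs{L}^1$ integrability further implies continuity of $\beta^{(k)}$ via Riemann--Lebesgue; continuity together with compact support forces $\beta^{(k)}(\pm 2T_\mp)=0$ for every $0\le k\le m$, completing the proof.

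\emph{Main obstacle.} Step~1 is the crux. A direct integration by parts in $b(\zeta)=\int r(t)a(t;\zeta)e^{-2i\zeta t}dt$ fails because each $t$-derivative of $a(t;\zeta)$ spawns a $\zeta$-factor through $\partial_t a=qb\,e^{2i\zeta t}$, cancelling the $1/\zeta$ gain sought from the oscillation. The Neumann expansion circumvents this by relegating all $\zeta$-dependence to the phase, but the price is the combinatorial bookkeeping of the nested internal boundary contributions on the simplices $\Delta_n$ and the verification that the resulting expansion converges with bounds of the required form \emph{uniformly in $n$}.
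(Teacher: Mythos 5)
Your overall architecture --- prove strong polynomial decay of $b(\zeta)$ in the strip, then invert the Fourier transform classically --- differs from the paper's, and it has two concrete problems. First, the target estimate of Step 1 is unattainable under the stated hypothesis: $q\in\fs{C}^{m+1}_0(\Omega)$ supplies only $m+1$ derivatives vanishing on $\partial\Omega$, so even for the leading Born term $\int_{-T_-}^{T_+}r(y)e^{-2i\zeta y}dy$ one can integrate by parts at most $m+1$ times without boundary contributions, yielding $O\bigl((1+|\zeta|)^{-(m+1)}\bigr)$ and nothing better; already for $m=0$ and $q\in\fs{C}^1_0$ your claimed $(1+|\zeta|)^{-2}$ bound fails, since $r'$ need not be differentiable and its Fourier transform carries no rate. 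This loss is not cosmetic: your Step 2 needs $\zeta^m b(\zeta)\in\fs{L}^1(\field{R})$ to get continuity of $\beta^{(m)}$ from Riemann--Lebesgue, and $|\zeta|^m\cdot(1+|\zeta|)^{-(m+1)}\sim|\zeta|^{-1}$ is not integrable, so the route through absolute integrability cannot close at the top order $k=m$. Second, even for the decay that is attainable, the heart of Step 1 --- that the internal boundary terms at coincident simplex points $t_j=t_{j-1}$ ``re-organise into the same Neumann hierarchy'' with bounds uniform in $n$ --- is asserted, not proved; you correctly identify it as the crux and then leave it. Those terms do not vanish (they carry factors like $r(t_j)q(t_j)$ at interior points of the simplex), and their bookkeeping is precisely the missing argument.

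The paper's proof is built to avoid both issues. Instead of expanding the Neumann series, it integrates by parts once at the level of the Volterra equation: setting $R_0=r$ and $R_{k+1}=\partial_tR_k-r\int_{-T_-}^tq\,R_k\,dy$ (the second term is exactly the internal boundary contribution you are trying to track), it shows that $\wtilde{P}_{2,k+1}=i\zeta\wtilde{P}_{2,k}+\tfrac{1}{2}R_k$ satisfies the \emph{same} Volterra equation with a new forcing $\Phi_{2,k+1}$, so Prop.~\ref{prop:b-estimate} gives each $b_k\in\fs{H}_+(T_+)$, i.e.\ only $O\bigl((1+|\zeta|)^{-1}\bigr)$ decay at every stage. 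Differentiability of $\beta$ is then obtained not from absolute convergence of the Fourier integral but from Lemma~\ref{lemma:inverse-FL}(b), which converts the relation $b_{k+1}(\zeta)=i\zeta b_k(\zeta)-\mu e^{-2i\zeta T_+}$ (with $\mu=-\tfrac{1}{2}R_k(T_+)=0$ here) into absolute continuity of $\beta_k$ with $\partial_\tau\beta_k=\beta_{k+1}$ a.e., and the boundary limits come from $\lim i\zeta b_k(\zeta)=-\tfrac{1}{2}R_k(T_+)$. To salvage your approach you would have to either replace Step 2 by an absolute-continuity argument of this type (so that $(1+|\zeta|)^{-1}$ decay per derivative suffices) or strengthen the hypothesis to $q\in\fs{C}^{m+2}_0$, and in either case still supply the combinatorial identity you deferred.
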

%%%%%%%%%%%%%%%%%%%%%%%%%%%%%%%%%%%%%%%%%%%%%%%%%%%%%%%%%%%%%%%%%%%%%%%%%%%%%%%%%%%%%%%%%%
\begin{proof}
The proof can be obtained by a repeated application of the
Prop.~\ref{prop:b-estimate} as follows: We consider the 
second component of the Volterra integral equations~\eqref{eq:volterra} and~\eqref{eq:volterra-lh}. 
Let $R_0(t)=r(t)$ and define a sequence of functions
\begin{equation}
R_{k+1}(t) = \partial_tR_{k}(t)-r(t)\int_{-T_-}^tq(y)R_{k}(y)dy.
\end{equation}
It can be shown that $R_k(t)$ is bounded on $\Omega$ for $k\leq m$. 
Further, set $\wtilde{P}_{2,0}(t;\zeta)=\wtilde{P}_{2}(t;\zeta)={b}(t;\zeta)e^{2i\zeta t}$
and introduce a sequence of functions $\wtilde{P}_{2,k}(t;\zeta)$ as
\begin{equation}
\wtilde{P}_{2,k+1}(t;\zeta)=i\zeta \wtilde{P}_{2,k}(t;\zeta)+\frac{1}{2}R_{k}(t)
\end{equation}
where $k\leq m$. Using mathematical induction, it can be shown 
that each of the iterates satisfy the integral equations given by
\begin{equation}
\wtilde{P}_{2,k}(t;\zeta)
={\Phi}_{2,k}(t;\zeta)+\int_{\Omega}\mathcal{K}_2(t,y;\zeta)\wtilde{P}_{2,k}(y;\zeta)dy,
\end{equation}
where $0\leq k\leq m+1$ with 
\begin{equation}
\Phi_{2,k}(t;\zeta)=\frac{1}{2}\int_{-T_-}^{t}R_k(y)e^{2i\zeta (t-y)}dy.
\end{equation}
Following the approach used in Prop.~\ref{prop:b-estimate}, it can be shown
that $b_k(\zeta)=\wtilde{P}_{2,k}(T_+;\zeta)e^{-2i\zeta T_+}$ belongs to
$\fs{H}_+(T_+)\cup\fs{H}_-(T_-)$.
Define $\beta_k(\tau)=\fourier^{-1}[{b}_{k}](\tau),\,k>0$, 
then it is evident that its support falls in $[-2T_+,2T_-]$ (Paley-Wiener theorem).

Also, note that each of the pairs ${b}_{k}(\zeta)$ and
${b}_{k+1}(\zeta)$
for $0\leq k\leq m$ satisfy the conditions of the Lemma~\ref{lemma:inverse-FL}
(with $F=b_k(\zeta)$ and $G=b_{k+1}(\zeta)$); therefore,
$\partial_{\tau}\beta_k(\tau)=\beta_{k+1}(\tau)$ almost everywhere.
This implies that $\beta(\tau)$ is $m$-times differentiable with 
$\partial^k_{\tau}\beta(\tau)=\beta_k(\tau)$ for $0\leq k\leq m$.

The limit $\lim_{\tau\rightarrow-2T_++0}\beta_k(\tau)=-({1}/{2})R_{k}(T_+)=0$
follows from the fact that
\begin{equation}
\lim_{\zeta\in\field{C}_+,\,\zeta\rightarrow\infty}
i\zeta b_k(\zeta)=-\frac{1}{2}R_k(T_+).
\end{equation}
Now consider $\ovl{b}_k=b^*_k(\zeta^*)$ so that
$\fourier^{-1}[\ovl{b}](\tau)=\beta^*(-\tau)$. The limit
$\lim_{\tau\rightarrow-2T_-+0}\beta^*_k(-\tau)=({1}/{2})R^*_{k}(T_+)=0$
follows from the fact that
\begin{equation}
\lim_{\zeta\in\field{C}_+,\,\zeta\rightarrow\infty}
i\zeta\ovl{b}_k(\zeta)=\frac{1}{2}R^*_k(T_+).
\end{equation}
Therefore, $\beta_k(2T_--0)=0$.
\end{proof}
%%%%%%%%%%%%%%%%%%%%%%%%%%%%%%%%%%%%%%%%%%%%%%%%%%%%%%%%%%%%%%%%%%%%%%%%
\begin{rem}\label{rem:rho-nth-derivative}
Putting $\wtilde{\Omega}=[-2T_+, 2T_-]$, the 
preceding theorem states that if $q\in\fs{C}^{m+1}_0(\Omega)$, then
$\beta\in\fs{C}^{m}_0(\wtilde{\Omega})$. It also follows that for every
non-negative integer $k\leq m$ there exists a $D_k>0$ such that
\begin{equation}\label{eq:estimate-b-pw}
    |b(\zeta)|\leq\frac{D_k}{(1+|\zeta|)^k}\times
\begin{cases}
    e^{2T_+\Im{\zeta}},&\zeta\in\ovl{\field{C}}_+,\\
    e^{-2T_-\Im{\zeta}},&\zeta\in{\field{C}}_-.
\end{cases}
\end{equation}
Again, assuming that $a(\zeta)$ has no zeros in
$\ovl{\field{C}}_+$, there exists $M>0$ such that
\begin{equation}\label{eq:rho-estimate-decay}
    |\rho(\zeta)|\leq\frac{D_kM}{(1+|\zeta|)^{k}}e^{2T_+\Im{\zeta}},
\quad\zeta\in\ovl{\field{C}}_+.
\end{equation}
\end{rem}

\subsubsection{Example: Doubly-truncated one-soliton
potential}\label{sec:one-soliton}
This example is taken from~\cite{V2018CNSNS}. Let $q(t)$ denote the one-soliton 
potential with the discrete spectrum $(\zeta_1,b_1)$ where $\zeta=\xi_1+i\eta_1$ so that 
\begin{equation}
q(t)=\frac{4\eta_1\beta_{0}}{1+|\beta_{0}|^2},
\end{equation}
where $\beta_{0}(t;\zeta_1, b_1) = - (1/b_{1})e^{-2i\zeta_1t}$. We make the potential 
compactly supported by truncating the part which lies outside 
$[-T_-, T_+]$. Let the doubly-truncated
version of $q(t)$ be denoted by $q^{(\sqcap)}(t)$. Note that, unlike $q(t)$, the
doubly-truncated potential $q^{(\sqcap)}(t)$ is not reflectionless.
%Let $2T=T_++T_->0$ and define $Z_+=-b_{1}e^{2i\zeta_1T_+}$ and 
%$Z_-=-(1/b_{1})e^{2i\zeta_1T_-}$ so that
%$|Z_{\pm}|=|b_1|^{\pm1}e^{-2\eta_1 T_{\pm}}$. 

Let $2T=T_++T_-$ and define
$Z_+=1/\beta_0(T_+)$ and $Z_-=\beta_0(-T_-)$ so that
$|Z_{\pm}|=|b_1|^{\pm1}e^{-2\eta_1 T_{\pm}}$.
The scattering coefficients can be
shown to be given by
\begin{equation}
\begin{split}
a^{(\sqcap)}(\zeta)&=1+
\frac{2i\eta_1Z^*_{+}Z^*_{-}}{\Xi}
\left[\frac{e^{4i\zeta T}-e^{4i\zeta^*_1 T}}{\zeta-\zeta^*_1}
-\frac{e^{4i\zeta T}-e^{4i\zeta_1 T}}{\zeta-\zeta_1}\right],
\end{split}
\end{equation}
and 
\begin{equation}
\begin{split}
b^{(\sqcap)}(\zeta) &= \frac{2i\eta_1b_1|Z_-|^2}{\Xi}
\frac{\left(e^{-2i(\zeta-\zeta_1)T_+}-e^{2i(\zeta-\zeta_1)T_-}\right)}{\zeta-\zeta_1}\\
&\quad+\frac{2i\eta_1|Z_+|^2}{b^*_1\Xi}
\frac{\left(e^{-2i(\zeta-\zeta^*_1)T_+}-e^{2i(\zeta-\zeta^*_1)T_-}\right)}{\zeta-\zeta^*_1},
\end{split}
\end{equation}
where $\Xi=(1+|Z_+|^2)(1+|Z_{-}|^2)$. Putting $2T_0 = T_+-T_-$, we have
\begin{equation}
\begin{split}
\beta^{(\sqcap)}(\tau)= \frac{2\eta_1}{\Xi}e^{-i\xi_1\tau}
\left[b_1|Z_-|^2e^{\eta_1\tau}
+\frac{|Z_+|^2}{b^*_1}e^{-\eta_1\tau}\right]\Pi\left(\frac{\tau+2T_0}{2T}\right),
\end{split}
\end{equation}
where $\Pi(\tau)$ denotes the rectangle function defined as
\begin{equation}
\Pi(\tau)=
\begin{cases}
1,&|\tau|\leq 1,\\
0,&\text{otherwise}.
\end{cases}
\end{equation}

%%%%%%%%%%%%%%%%%%%%%%%%%%%%%%%%%%%%%%%%%%%%%%%%%%%%%%%%%%%%%%%%%%%%%%%%%%%%%%%%%%%%%%%%%%%%%%%%%%%%
%%%%%%%%%%%%%%%%%%%%%%%%%%%%%%%%%%%%%%%%%%%%%%%%%%%%%%%%%%%%%%%%%%%%%%%%%%%%%%%%%%%%%%%%%%%%%%%%%%%%
%%%%%%%%%%%%%%%%%%%%%%%%%%%%%%%%%%%%%%%%%%%%%%%%%%%%%%%%%%%%%%%%%%%%%%%%%%%%%%%%%%%%%%%%%%%%%%%%%%%%
%%%%%%%%%%%%%%%%%%%%%%%%%%%%%%%%%%%%%%%%%%%%%%%%%%%%%%%%%%%%%%%%%%%%%%%%%%%%%%%%%%%%%%%%%%%%%%%%%%%%
%%%%%%%%%%%%%%%%%%%%%%%%%%%%%%%%%%%%%%%%%%%%%%%%%%%%%%%%%%%%%%%%%%%%%%%%%%%%%%%%%%%%%%%%%%%%%%%%%%%%
\subsection{The direct transform: One-sided signals}
Let $\Omega=(-\infty,T_+]$ in the following paragraphs unless stated otherwise.
For $d>0$, consider the class of complex-valued
functions supported in $\Omega$, denoted by $\fs{E}_d(\Omega)$, such that for
$f\in\fs{E}_d(\Omega)$, there exists $C>0$ such that the estimate 
$|{f}(t)|{\leq}Ce^{-2d|t|}$ holds almost everywhere in $\Omega$.
Clearly, $\fs{E}_d(\Omega)\subset\fs{L}^{p}(\Omega)$ for
$1\leq p\leq\infty$.

For $q\in\fs{E}_d(\Omega)$, the scattering coefficients, $a(\zeta)$ and $b(\zeta)$, are
analytic in $\ovl{\field{C}}_+$, (in fact the region of analyticity turns out to be
$\Im{\zeta}>-d$)~\cite{AKNS1974}. The results in Theorem~\ref{thm:ab-estimate0} 
can now be modified as follows:
%%%%%%%%%%%%%%%%%%%%%%%%%%%%%%%%%%%%%%%%%%%%%%%%%%%%%%%%%%%%%%%%%%%%%%%%%%%%%%%%%%%%%%
\begin{theorem}\label{thm:ab-estimate0-exp}
Let $q\in\fs{E}_d(\Omega)$ with support in $\Omega$ and set
$\kappa=\|q\|_{\fs{L}^1(\Omega)}$. Then the estimates
\begin{align}
&|b(\zeta)|\leq\sinh(\kappa)e^{2T_+\Im{\zeta}},\label{eq:b-estimate-exp-result}\\
&|\tilde{a}(\zeta)|\leq[\cosh(\kappa)-1]e^{2T_+\Im{\zeta}},\label{eq:a-estimate-exp-result}
\end{align}
where $\tilde{a}(\zeta)$ denotes $[a(\zeta)-1]e^{-2i\zeta T_+}$, hold for $\zeta\in\ovl{\field{C}}_+$.
And, for fixed $0\leq\eta<\infty$, we have
\begin{equation}\label{thm-ab-limit-exp}
\lim_{\xi\in\field{R},\,|\xi|\rightarrow\infty}|f(\xi+i\eta)|=0,
\end{equation}
where $f(\zeta)$ denotes either $b(\zeta)$ or $\tilde{a}(\zeta)$.
\end{theorem}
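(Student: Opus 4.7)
The plan is to mirror the proof of Theorem~\ref{thm:ab-estimate0}, adapted to the one-sided domain $\Omega = (-\infty, T_+]$. Since $\fs{E}_d(\Omega) \subset \fs{L}^1(\Omega)$, the constant $\kappa = \|q\|_{\fs{L}^1(\Omega)}$ is finite, and the Volterra formulation~\eqref{eq:volterra} extends verbatim with integrations now taken over $(-\infty, t]$ and initial data $a(-\infty;\zeta) = 1$, $b(-\infty;\zeta) = 0$; the exponential decay of $q$ guarantees that these limits exist and that the underlying integrals converge.

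For $\zeta \in \ovl{\field{C}}_+$ and $z \geq y$ one has $|e^{2i\zeta(z-y)}| \leq 1$, so the operator-norm estimate $\|\OP{K}_j\|_{\fs{L}^\infty(\Omega)} \leq \kappa^2/2$ and the iterated bound $\|\OP{K}_n^{(j)}\|_{\fs{L}^\infty(\Omega)} \leq \kappa^{2n}/(2n)!$ from the time-limited proof carry over unchanged, yielding $\|\OP{R}_j\|_{\fs{L}^\infty(\Omega)} \leq \cosh(\kappa) - 1$. The forcing terms satisfy $\|\Phi_2\|_{\fs{L}^\infty(\Omega)} \leq \kappa$ and $\|\Phi_1\|_{\fs{L}^\infty(\Omega)} \leq \kappa^2/2$ by the same reasoning, since only the $\fs{L}^1$ norm of $q$ enters. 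Applying the resolvent produces $\|\wtilde{P}_2\|_{\fs{L}^\infty} \leq \sinh(\kappa)$ and $\|\wtilde{P}_1\|_{\fs{L}^\infty} \leq \cosh(\kappa)-1$; evaluating at $t = T_+$ and unwinding $\wtilde{\vv{P}}(T_+;\zeta) = \bigl([a(\zeta)-1]e^{-2i\zeta T_+},\, b(\zeta)e^{2i\zeta T_+}\bigr)^{\tp}$ then produces the factor $e^{2T_+\Im\zeta}$ that appears in~\eqref{eq:b-estimate-exp-result} and~\eqref{eq:a-estimate-exp-result}. The $\field{C}_-$ half of Theorem~\ref{thm:ab-estimate0} has no counterpart here because, for $\Im\zeta < 0$, the factor $e^{-2i\zeta y}$ grows unboundedly as $y \to -\infty$ and the forcing integrals may diverge; accordingly, $a$ and $b$ are only known to be analytic in the strip $\Im\zeta > -d$.

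The main obstacle is the limit~\eqref{thm-ab-limit-exp}, which requires a Riemann-Lebesgue statement on an infinite interval rather than the compact one of Lemma~\ref{lemma:RL-compact}. My plan is to split, for a large cutoff $N > 0$,
\[
\Phi_2(T_+;\xi+i\eta) = \int_{-\infty}^{-N} r(y) e^{2i(\xi+i\eta)(T_+-y)} \, dy + \int_{-N}^{T_+} r(y) e^{2i(\xi+i\eta)(T_+-y)} \, dy.
\]
The tail is controlled using $|r(y)| \leq C e^{2dy}$ for $y \leq 0$ combined with $|e^{2i\zeta(T_+-y)}| = e^{-2\eta(T_+-y)}$; since $\eta + d > 0$ for any fixed $\eta \geq 0$, this yields an exponentially small bound in $N$ that is uniform in $\xi$. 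On the truncated interval $[-N, T_+]$, the integrand is the product of a fixed $\fs{L}^1$ function and $e^{-2i\xi y}$ (times bounded $\eta$-dependent factors), so the argument of Appendix~\ref{app:RL-compact} applies and gives vanishing as $|\xi| \to \infty$. Choosing $N$ first to shrink the tail below any prescribed $\varepsilon/2$ and then letting $|\xi| \to \infty$ establishes $|\Phi_2(T_+;\xi+i\eta)| \to 0$; the resolvent bound propagates this limit to $b(\zeta)$, and an entirely analogous treatment of $\Phi_1$ handles $\tilde{a}(\zeta)$.
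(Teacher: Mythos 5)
Your proposal is correct and follows exactly the route the paper intends: the paper offers no separate proof of Theorem~\ref{thm:ab-estimate0-exp}, presenting it as the one-sided modification of Theorem~\ref{thm:ab-estimate0}, and your observation that the Volterra/Neumann-series estimates use only $\|q\|_{\fs{L}^1(\Omega)}$ together with $|e^{2i\zeta(z-y)}|\leq 1$ for $z\geq y$, $\zeta\in\ovl{\field{C}}_+$ (so that the upper-half-plane half of the argument survives verbatim while the $\field{C}_-$ half is correctly discarded) is precisely the missing detail, as is your cutoff extension of Lemma~\ref{lemma:RL-compact} to the half-line, which the paper's Appendix~\ref{app:RL-compact} argument (continuity of translation in $\fs{L}^1$) in fact already accommodates for $\eta\geq0$. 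One cosmetic slip: the first component of $\wtilde{\vv{P}}(T_+;\zeta)$ is $a(\zeta)-1$, not $[a(\zeta)-1]e^{-2i\zeta T_+}$; the factor $e^{2T_+\Im\zeta}$ in~\eqref{eq:a-estimate-exp-result} comes from the definition $\tilde{a}(\zeta)=[a(\zeta)-1]e^{-2i\zeta T_+}$ itself, and in~\eqref{eq:b-estimate-exp-result} from unwinding $\wtilde{P}_2(T_+;\zeta)=b(\zeta)e^{2i\zeta T_+}$, so the stated bounds follow as you claim.
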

An immediate consequence of the preceding theorem is that 
\begin{equation}
\supp\fourier^{-1}[f]\subset[-2T_+,\infty).
\end{equation}
The Fourier transformation in the above equation is understood in the sense of distributions. 

When $a(\zeta)$ has no zeros in $\ovl{\field{C}}_+$, it follows from the 
Theorem~\ref{thm:ab-estimate0-exp} that $\rho(\zeta)$ is of exponential
type $2T_+$ in $\ovl{\field{C}}_+$, i.e., there exists a constant $C>0$ such
that
\begin{equation}
|\rho(\zeta)|\leq Ce^{2T_+\Im{\zeta}},\quad\zeta\in\ovl{\field{C}}_+.
\end{equation}
Therefore, $\supp\fourier^{-1}[\rho]\subset[-2T_+,\infty)$ which is the same
result one would obtain if the signal was time-limited. Further, the fact that 
$b(\zeta)$ is analytic in $\ovl{\field{C}}_+$ implies that, for any eigenvalue $\zeta_k$, 
the corresponding norming constant is given by $b_k=b(\zeta_k)$~\cite{AKNS1974}.

We conclude this section with a modification of the Prop.~\ref{prop:b-estimate}
for the class of one-sided signals introduced above:
%%%%%%%%%%%%%%%%%%%%%%%%%%%%%%%%%%%%%%%%%%%%%%%%%%%%%%%%%%%%%%%%%%%%%%%%%%%%%%%%%%%%%%
\begin{prop}\label{prop:b-estimate-exp}
Let $q\in\fs{E}_d(\Omega)\cap\fs{BV}(\Omega)$, and, define 
$\kappa=\|q\|_{\fs{L}^{1}(\Omega)}$ and $D=\frac{1}{2}\|q\|_{\fs{L}^{\infty}(\Omega)}+
\|q\|_{\fs{L}^{1}(\Omega)}+\frac{1}{2}\|q^{(1)}\|_{\fs{L}^{1}(\Omega)}$. Then the estimate
\begin{equation}\label{prop:b-estimate-exp-result}
|b(\zeta)|\leq\frac{D\cosh(\kappa)}{1+|\zeta|}e^{2T_+\Im{\zeta}},\quad\zeta\in\ovl{\field{C}}_+,
\end{equation}
holds. Further, if $\beta(\tau)=\fourier^{-1}[b](\tau)$, then 
\begin{equation}
\supp\beta\subset[-2T_+,\infty),
\end{equation}
and $\beta\in\fs{L}^1\cap\fs{L}^2$. 
\end{prop}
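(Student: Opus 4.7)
The plan is to mirror the proof of Prop.~\ref{prop:b-estimate}, retaining only the upper-half-plane case $\zeta\in\ovl{\field{C}}_+$ since analyticity of $b(\zeta)$ extends only there for one-sided signals. Starting from the Volterra representation $\wtilde{P}_2(t;\zeta)={\Phi}_2(t;\zeta)+\OP{R}_2[{\Phi}_2](t;\zeta)$ of~\eqref{eq:P2-resolve}, I first observe that the resolvent bound $\|\OP{R}_2\|_{\fs{L}^{\infty}(\Omega)}\leq\cosh(\kappa)-1$ derived for the time-limited case carries over verbatim: the Neumann series estimate uses only $\|q\|_{\fs{L}^{1}(\Omega)}=\kappa<\infty$ and is insensitive to whether $\Omega$ is bounded or half-infinite.

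The substantive step is to bound $\|\Phi_2(\cdot;\zeta)\|_{\fs{L}^{\infty}(\Omega)}$ by $D/(1+|\zeta|)$. Integration by parts applied to
\[
\Phi_2(t;\zeta)=\int_{-\infty}^{t}r(y)e^{2i\zeta(t-y)}dy
\]
produces the boundary term $-r(t)/(2i\zeta)$ at $y=t$, a boundary contribution at $y=-\infty$, and a remainder involving $r^{(1)}$. For $\zeta\in\ovl{\field{C}}_+$ and $y\leq t$ one has $|e^{2i\zeta(t-y)}|\leq 1$, so the $y=-\infty$ contribution vanishes precisely because the hypothesis $q\in\fs{E}_d(\Omega)$ provides the pointwise decay $|r(y)|\leq Ce^{-2d|y|}\to 0$; this is the only place where membership in $\fs{E}_d$ is invoked, and it plays the role that boundary vanishing on $\partial\Omega$ played in Prop.~\ref{prop:b-estimate}. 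Bounding the surviving pieces gives $2|\zeta|\,|\Phi_2(t;\zeta)|\leq\|q\|_{\fs{L}^{\infty}}+\|q^{(1)}\|_{\fs{L}^{1}}$, which combined with the trivial estimate $|\Phi_2(t;\zeta)|\leq\kappa$ yields $(1+|\zeta|)|\Phi_2(t;\zeta)|\leq D$.

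Feeding this estimate through $\|\wtilde{P}_2\|_{\fs{L}^{\infty}(\Omega)}\leq\cosh(\kappa)\|\Phi_2\|_{\fs{L}^{\infty}(\Omega)}$ and using $b(\zeta)=\wtilde{P}_2(T_+;\zeta)e^{-2i\zeta T_+}$ with $|e^{-2i\zeta T_+}|=e^{2T_+\Im\zeta}$ produces the claimed bound~\eqref{prop:b-estimate-exp-result}. This estimate, together with analyticity of $b(\zeta)$ in $\ovl{\field{C}}_+$, places $b\in\fs{H}_+(T_+)$, whereupon Lemma~\ref{lemma:inverse-FL}$(a)$ delivers both $\supp\beta\subset[-2T_+,\infty)$ and $\beta\in\fs{L}^{1}\cap\fs{L}^{2}$ at once. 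I do not anticipate a genuine obstacle; the only point that requires care is the justification of the limit at $y=-\infty$ in the integration by parts, which is precisely where the exponential-decay hypothesis $q\in\fs{E}_d(\Omega)$ (rather than merely $q\in\fs{L}^{1}$) is needed.
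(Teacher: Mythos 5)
Your proposal is correct and coincides with what the paper intends: the paper states Prop.~\ref{prop:b-estimate-exp} without a separate proof, presenting it as a modification of Prop.~\ref{prop:b-estimate} restricted to $\ovl{\field{C}}_+$, and your argument is exactly that modification, with the one genuinely new ingredient (the vanishing of the $y=-\infty$ boundary term in the integration by parts, supplied by the $\fs{E}_d$ decay in place of vanishing on $\partial\Omega$) correctly identified and handled.
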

%%%%%%%%%%%%%%%%%%%%%%%%%%%%%%%%%%%%%%%%%%%%%%%%%%%%%%%%%%%%%%%%%%%%%%%%%%%%%%%%%%%%%%%%%%

\subsubsection{Example: Truncated one-soliton potential}
This example has been previously treated
in~\cite{Lamb1980,RM1992,RS1994,SK2008,V2018CNSNS}. Consider the one-soliton potential 
discussed in Sec.~\ref{sec:one-soliton}. We make the potential one-sided by truncating 
the part which lies outside 
$(-\infty, T_+]$. Let the one-sided version of $q(t)$ be denoted by 
$q^{(-)}(t)$. Note that, unlike $q(t)$, the one-sided potential $q^{(-)}(t)$ 
is not reflectionless. Using the quantities defined in
Sec.~\ref{sec:one-soliton}, the scattering coefficients work out to be
\begin{equation}
a^{(-)}(\zeta) =\frac{|Z_+|^2}{1+|Z_+|^2}
+\frac{1}{1+|Z_+|^2}\left(\frac{\zeta-\zeta_1}{\zeta-\zeta_1^*}\right),
\end{equation}
and
\begin{equation}
b^{(-)}(\zeta)=\frac{Z_+}{1+|Z_+|^2}
\left(\frac{2i\eta_1}{\zeta-\zeta_1^*}\right)e^{-2i\zeta T_+}.
\end{equation}
Let $\theta(\tau)$ be the Heaviside step-function, then
\begin{equation}
\beta^{(-)}(\tau)=\frac{2\eta_1Z_+}{1+|Z_+|^2}e^{-i\zeta_1^*(\tau+2T_+)}\theta(\tau+2T_+).
\end{equation}

\subsection{The inverse problem}
In this section, we would like to address the converse of some of the results obtained 
above. Define the nonlinear impulse response as 
\begin{equation}
p(\tau) = \fourier^{-1}[\rho](\tau)
=\frac{1}{2\pi}\int_{-\infty}^{\infty}\rho(\xi) e^{-i\xi\tau}d\xi.
\end{equation}
In the following, we would like to examine the solution of the inverse scattering problem when 
$p(\tau)\in\fs{L}^{1}\cap\fs{L}^{2}$ with support in $[-2T_+,\infty)$. We first
give an important result due to Epstein which estimates the energy in the tail
of the scattering potential defined by
\begin{equation}\label{eq:energy-tail-invscatter}
\mathcal{E}_+(T)=\int^{\infty}_{T}|q(t)|^2dt.
\end{equation}
For a given $p(\tau)\in\fs{L}^{1}\cap\fs{L}^{2}$, define 
\begin{equation}
\mathcal{I}_m(T)=\left[\int^{\infty}_{2T}|p(-\tau)|^md\tau\right]^{1/m};
\end{equation}
then, we have the following result:
\begin{prop}[Epstein~\cite{Epstein2004}]\label{prop:epstien0}
For a given $p(\tau)\in\fs{L}^{1}\cap\fs{L}^{2}$, if there exists a time 
$T$ such that $\mathcal{I}_1(T)<1$; then, the estimate
\begin{equation*}
\mathcal{E}_+(T)\leq\frac{2\mathcal{I}^2_2(T)}{[1-\mathcal{I}^2_1(T)]},
\end{equation*}
holds provided $p(\tau)$ is continuous in the half-space $(-\infty,-2T)$.
\end{prop}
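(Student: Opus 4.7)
The plan is to invoke the Gelfand--Levitan--Marchenko (GLM) machinery, which is the standard bridge from the impulse response $p(\tau)$ back to the potential $q(t)$. For each base point $t$, the AKNS inverse formalism produces two kernels $A_1(t,s), A_2(t,s)$, supported on $\{s\geq t\}$, that solve a coupled pair of Marchenko integral equations whose inhomogeneous terms and integral kernels are built from values of $p$ on $(-\infty,-2t]$, and the potential is recovered from the diagonal trace, e.g.\ $q(t)=-2A_2(t,t)^{*}$ (up to the paper's conventions). The first step is to restrict attention to $t\geq T$: the relevant arguments $-(s+u)$ with $s,u\geq t$ then lie in $(-\infty,-2T]$ where $p$ is continuous by hypothesis, and combined with $p\in\fs{L}^{1}\cap\fs{L}^{2}$ this gives the regularity needed below.

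Next I would exploit the smallness assumption $\mathcal{I}_{1}(T)<1$ to solve the Marchenko system. Viewed as $A=F+\mathcal{M}_{t}A$ in $\fs{L}^{2}([t,\infty))$ in the variable $s$, the composition of the two integral operators constituting $\mathcal{M}_{t}$ has operator norm at most $\mathcal{I}_{1}^{2}(T)<1$ by Young's inequality, so the Neumann series converges and yields unique solutions $A_{1}(t,\cdot),A_{2}(t,\cdot)\in\fs{L}^{2}([t,\infty))$. Taking $\fs{L}^{2}$-norms in each Marchenko equation and using Young's inequality again gives the coupled estimates
\begin{align*}
\|A_{2}(t,\cdot)\|_{\fs{L}^{2}} &\leq \mathcal{I}_{2}(T)+\mathcal{I}_{1}(T)\|A_{1}(t,\cdot)\|_{\fs{L}^{2}},\\
\|A_{1}(t,\cdot)\|_{\fs{L}^{2}} &\leq \mathcal{I}_{1}(T)\|A_{2}(t,\cdot)\|_{\fs{L}^{2}},
\end{align*}
valid for every $t\geq T$, from which $\|A_{2}(T,\cdot)\|_{\fs{L}^{2}}\leq \mathcal{I}_{2}(T)/(1-\mathcal{I}_{1}^{2}(T))$ follows.

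The final ingredient is a Parseval-type trace identity closing the loop from the kernel's $\fs{L}^{2}$-norm back to the tail energy. Differentiating $\int_{t}^{\infty}(|A_{1}|^{2}-|A_{2}|^{2})\,ds$ in $t$ and substituting the Marchenko equations produces cancellations that collapse the result to a pointwise identity of the form $|q(t)|^{2}=-2\,\partial_{t}\mathcal{Q}(t)$, where $\mathcal{Q}(t)$ is a bilinear form in $(A_{1},A_{2})$ that vanishes as $t\to\infty$ (by $\fs{L}^{1}$-integrability of $p$). Integrating from $T$ to $\infty$ expresses $\mathcal{E}_{+}(T)$ as a boundary term at $t=T$ bounded by $2\|A_{2}(T,\cdot)\|_{\fs{L}^{2}}^{2}$, which combined with the bound from the previous paragraph gives exactly $2\mathcal{I}_{2}^{2}(T)/(1-\mathcal{I}_{1}^{2}(T))$. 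The principal technical obstacle is the trace identity itself: tracking constants so that the factor $2$ appears in the numerator and $(1-\mathcal{I}_{1}^{2}(T))$ to the \emph{first} power in the denominator requires using both Marchenko equations in tandem so that the $|A_{1}|^{2}$ and $|A_{2}|^{2}$ contributions partially cancel; this is precisely the step where the continuity of $p$ on $(-\infty,-2T)$ is needed, since it guarantees that $A_{1},A_{2}$ extend continuously to the diagonal $s=t^{+}$ and legitimizes the boundary pickup of $q(t)$.
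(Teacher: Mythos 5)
Your overall strategy (GLM equations, Neumann series under $\mathcal{I}_1(T)<1$, then a trace identity to recover the tail energy) is the right skeleton, but the final step --- the one that actually produces the stated constant --- is missing, and your own accounting shows it. The paper does not need any ``Parseval-type'' identity obtained by differentiating $\int_t^\infty(|A_1|^2-|A_2|^2)\,ds$: it uses the standard pointwise trace relation $\|q\chi_{[t,\infty)}\|_2^2=-2A_2(t,t)$, decouples the GLM system into a Fredholm equation of the second kind $A_2(t,y)=\Phi_2(t,y)-\OP{K}[A_2(t,\cdot)](y)$ whose inhomogeneous term $\Phi_2(t,y)=-\int_t^\infty f^*(y+s)f(t+s)\,ds$ is already \emph{quadratic} in $f$, and estimates everything in $\fs{L}^{\infty}(\Omega_t)$: the bounds $\|\Phi_2(t,\cdot)\|_{\fs{L}^{\infty}}\leq\mathcal{I}_2^2(t)$ and $\|\OP{K}\|_{\fs{L}^{\infty}}\leq\mathcal{I}_1^2(t)$ give $\|A_2(t,\cdot)\|_{\fs{L}^{\infty}}\leq\mathcal{I}_2^2(t)/[1-\mathcal{I}_1^2(t)]$ in one stroke, which is exactly where the numerator $\mathcal{I}_2^2$ and the \emph{first} power of $1-\mathcal{I}_1^2$ come from. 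The continuity hypothesis on $p$ enters only to justify evaluating $A_2(t,\cdot)$ on the diagonal, via uniform convergence of a Neumann series of continuous iterates.

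Your route has two concrete problems. First, your coupled $\fs{L}^2$ estimates yield $\|A_2(T,\cdot)\|_{\fs{L}^2}\leq\mathcal{I}_2(T)/[1-\mathcal{I}_1^2(T)]$, so any bound of the form $\mathcal{E}_+(T)\leq 2\|A_2(T,\cdot)\|_{\fs{L}^2}^2$ produces $[1-\mathcal{I}_1^2(T)]^{2}$ in the denominator; you acknowledge this mismatch but defer its resolution to an unspecified cancellation between the $|A_1|^2$ and $|A_2|^2$ contributions. That cancellation is precisely the content of the proposition and cannot be left as a remark --- as written, your argument proves a strictly weaker estimate. Second, even granting the correct trace identity, it is a \emph{diagonal} evaluation $-2A_2(t,t)$; an element of $\fs{L}^2([t,\infty))$ has no well-defined pointwise values, so the $\fs{L}^2$ framework cannot deliver the boundary pickup you need without first upgrading to an $\fs{L}^{\infty}$ bound together with continuity of $A_2(t,\cdot)$ up to $y=t$ --- which is what the paper establishes and what your proof omits.
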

\begin{proof}
Consider the
Jost solutions with prescribed asymptotic behavior as 
$x\rightarrow\infty$:
\begin{equation}
\vs{\psi}(t;\zeta)=
\begin{pmatrix}
0\\
1
\end{pmatrix}e^{i\zeta t}
+\int_t^{\infty}e^{i\zeta s}{\vv{A}}(t,s)ds,
\end{equation}
where $\vv{A}$ is independent of $\zeta$. Consider the Gelfand-Levitan-Marchenko (GLM) integral
equations. In the following we fix
$t\in\field{R}$ so that the GLM equations for $y\in\Omega_t=[t,\infty)$ is given by 
\begin{equation}\label{eq:GLM-start}
\begin{split}
&{A}_2^*(t,y)=-\int_{t}^{\infty}{A}_1(t,s){f}(s+y)ds,\\
&{A}_1^*(t,y) = f(t+y) +\int_{t}^{\infty}A_2(t,s){f}(s+y)ds,
\end{split}
\end{equation}
where $f(\tau)=p(-\tau)$. The solution of the GLM equations allows us to recover
the scattering potential using $q(t)=-2A_1(t,t)$ together with the estimate 
$\|q\chi_{[t,\infty)}\|^2_2=-2A_2(t,t)$ where $\chi_{\Omega}$ denotes the
characteristic function of $\Omega\subset\field{R}$. Define the operator 
$\OP{K}$ as 
\begin{equation}
\begin{split}
\OP{K}[g](y)
&=\int_t^{\infty}ds\int_t^{\infty}dx\,f^*(y+s)f(s+x)g(x)\\
&=\int_t^{\infty}\mathcal{K}(y,x;t)g(x)dx,
\end{split}
\end{equation}
where the kernel function $\mathcal{K}(y,x;t)$ is given by
\begin{equation}
\mathcal{K}(y,x;t)
=\int_t^{\infty}ds\,f^*(y+s)f(s+x).
\end{equation}
The GLM equations in~\eqref{eq:GLM-start} can now be stated as
\begin{equation}\label{eq:fredholm}
{A}_j(t,y)={\Phi}_j(t,y)-\OP{K}[{A}_j(t,\cdot)](y),\quad j=1,2,
\end{equation}
which is a Fredholm integral equation of the second kind where
\begin{equation}
\begin{split}
{\Phi}_1(t,y)&=f^*(t+y),\\
{\Phi}_2(t,y)&=-\int_t^{\infty}f^*(y+s)f(t+s)ds.
\end{split}
\end{equation}
Recalling $\mathcal{I}_m(t)=\|f\chi_{[2t,\infty)}\|_{\fs{L}^m}$ for
$m=1,2,\infty$, we have
\begin{equation}
\begin{split}
\|\OP{K}\|_{\fs{L}^{\infty}(\Omega_t)}
&=\esssup_{y\in\Omega_t}\int_{t}^{\infty}dx\,|\mathcal{K}(y,x;t)|\\
&\leq\esssup_{y\in\Omega_t}\int_{t}^{\infty}dx\,\int_t^{\infty}ds\,|f(y+s)||f(s+x)|\\
&\leq\esssup_{y\in\Omega_t}\int_{t+y}^{\infty}du|f(u)|\,\int_{t+u-y}^{\infty}du_1\,|f(u_1)|\\
&\leq [\mathcal{I}_1(t)]^2,
\end{split}
\end{equation}
and, $\|\Phi_2(t,\cdot)\|_{\fs{L}^{\infty}(\Omega_t)}\leq[\mathcal{I}_{2}(t)]^2$.
If $\mathcal{I}_1(t)<1$, then the standard theory of Fredholm equations suggests that the
resolvent of the operator $\OP{K}$ exists~\cite{GLS1990}. Under this assumption, 
from~\eqref{eq:fredholm}, we have
\begin{multline*}
\|{A}_j(t,\cdot)\|_{\fs{L}^{\infty}(\Omega_t)}\leq
\|\Phi_j(t,\cdot)\|_{\fs{L}^{\infty}(\Omega_t)}\\
+\|\OP{K}\|_{\fs{L}^{\infty}(\Omega_t)}\|{A}_j(t,\cdot)\|_{\fs{L}^{\infty}(\Omega_t)},
\end{multline*}
which yields
\begin{equation}
\begin{split}
&\|{A}_1(t,\cdot)\|_{\fs{L}^{\infty}(\Omega_t)}\leq\frac{\mathcal{I}_{\infty}(t)}{[1-\mathcal{I}^2_1(t)]},\\
&\|{A}_2(t,\cdot)\|_{\fs{L}^{\infty}(\Omega_t)}\leq\frac{\mathcal{I}^2_2(t)}{[1-\mathcal{I}^2_1(t)]}.
\end{split}
\end{equation}
Given that from here one can only assert that 
$|A_j(t,y)|\leq\|{A}_j(t,\cdot)\|_{\fs{L}^{\infty}(\Omega_t)}$ almost everywhere
(a.e.), we need to ascertain the continuity of $A_j(t,y)$ with respect to $y$ 
throughout the domain $\Omega_t$
or as $y\rightarrow t$ from above. Assume that $f(\tau)$ is continuous, then
$\Phi_j(t,y)$ is continuous with respect to $y$. It can be seen that the kernel function
$\mathcal{K}(y,x;t)$ is also continuous with respect to $y$. Therefore, if the resolvent 
kernel is continuous (w.r.t. $y$) then the result 
follows. To this end, consider the Neumann series for the resolvent
$\OP{R}=\sum_{n\in\field{Z}_+}(-1)^n\OP{K}_n$ where
$\OP{K}_n=\OP{K}\circ\OP{K}_{n-1}$ with $\OP{K}_1=\OP{K}$. For fixed
$t$, the partial sums 
$\sum_{1\leq n\leq N}\|\OP{K}_n\|_{\fs{L}^{\infty}(\Omega_t)}
\leq[1-\mathcal{I}^2_1(t)]^{-1}$ for all $N<\infty$.
Therefore, uniform convergence of the partial sums allows us to conclude the continuity of the 
limit of the partial sums.
\end{proof}

\begin{corr}
Consider $p(\tau)\in\fs{L}^{1}\cap\fs{L}^{2}$ with support in
$[-2T_+,\infty)$. If the solution of the GLM equation~\eqref{eq:GLM-start}
exists, then
\begin{equation}
\supp q\subset (-\infty,T_+].
\end{equation}
\end{corr}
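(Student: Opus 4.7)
The plan is to apply Proposition~\ref{prop:epstien0} to arbitrary cutoff times $T>T_+$ and use the support hypothesis on $p$ to drive the right-hand side of Epstein's estimate to zero. First, I would translate the support condition into a statement about $f(\tau)=p(-\tau)$: if $\supp p\subset[-2T_+,\infty)$, then $\supp f\subset(-\infty,2T_+]$. Consequently, for any $T>T_+$ the integrand $|f(\tau)|^m$ in
\[
\mathcal{I}_m(T)=\left[\int_{2T}^{\infty}|f(\tau)|^m\,d\tau\right]^{1/m}
\]
vanishes identically because the interval of integration lies entirely in $(2T_+,\infty)$, where $f\equiv 0$. Hence $\mathcal{I}_1(T)=\mathcal{I}_2(T)=0$ for all $T>T_+$.

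Next I would verify the technical hypothesis of Proposition~\ref{prop:epstien0}, namely that $p$ is continuous on $(-\infty,-2T)$. For $T>T_+$ the interval $(-\infty,-2T)$ is contained in $(-\infty,-2T_+)$, where $p$ is identically zero (and thus trivially continuous). With both the smallness condition $\mathcal{I}_1(T)<1$ and the continuity hypothesis satisfied, the proposition yields
\[
\mathcal{E}_+(T)=\int_T^{\infty}|q(t)|^2\,dt\;\leq\;\frac{2\mathcal{I}_2^2(T)}{1-\mathcal{I}_1^2(T)}=0.
\]
Therefore $q(t)=0$ for almost every $t>T$. Since this reasoning applies to every $T>T_+$, we conclude that $q$ vanishes a.e.\ on $(T_+,\infty)$, so $\supp q\subset(-\infty,T_+]$.

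I do not foresee a major obstacle here; the corollary is essentially an unpacking of Epstein's bound once the support hypothesis is fed in. The only point requiring mild care is the continuity caveat in Proposition~\ref{prop:epstien0}, which is handled by observing that on the relevant half-line the function $p$ is literally zero (rather than merely $\fs{L}^1\cap\fs{L}^2$), so no regularization of $p$ is needed.
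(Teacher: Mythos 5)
Your proposal is correct and follows exactly the route the paper takes: apply Proposition~\ref{prop:epstien0} after noting that $\mathcal{I}_1(T)=\mathcal{I}_2(T)=0$ for $T\geq T_+$, which forces $\mathcal{E}_+(T)=0$ and hence $\supp q\subset(-\infty,T_+]$. Your additional observation that the continuity hypothesis is trivially satisfied because $p$ vanishes identically on $(-\infty,-2T)$ is a worthwhile detail the paper leaves implicit.
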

\begin{proof}
The proof is an immediate consequence of Prop.~\ref{prop:epstien0} and the fact that 
$\mathcal{I}_j(t)=0,\,j=1,2$ for all $t\in[T_+,\infty)$.
\end{proof}
\begin{rem}
The requirement that $p(\tau)\in\fs{L}^{1}\cap\fs{L}^{2}$ in the corollary above can
be weakened as long as the existence of the GLM equations can be guaranteed. Setting 
$f(\tau)=p(-\tau)$ and observing that $\supp f\subset (-\infty, 2T_+]$, we may write 
the GLM equations as  
\begin{equation*}
\begin{split}
&{A}_2^*(t,y)=-\int_{t}^{2T_+-y}{A}_1(t,s){f}(s+y)ds,\\
&{A}_1^*(t,y) = f(t+y) +\int_{t}^{2T_+-y}A_2(t,s){f}(s+y)ds.
\end{split}
\end{equation*}
If the solution $A_j(t,y)$ exists, then $\supp_y A_j(t,y)\subset [t,2T_+-t]$, $j=1,2$.
Therefore, $A_j(t,y)\equiv0$ for $t>T_+$ yielding $\supp q\subset (-\infty,T_+]$.
\end{rem}
%%%%%%%%%%%%%%%%%%%%%%%%%%%%%%%%%%%%%%%%%%%%%%%%%%%%%%%%%%%%%%%%%%%%%%%%%%%%%%%%%%%%%%%%%%%%%%%%
Existence of the solution of the GLM equation~\eqref{eq:GLM-start} for a given 
$p(\tau)\in\fs{L}^{1}\cap\fs{L}^{2}$ when $\mathcal{I}_1(T)>1$ is established in~\cite{Epstein2004}. The
proof proceeds by observing that $\OP{K}$ is a self-adjoint compact operator
such that $\|\OP{K}\|_{\fs{L}^{2}(\Omega_t)}\leq [\mathcal{I}_1(t)]^2$ which
follows from
\begin{equation*}
\begin{split}
&\|\OP{K}[g]\|^2_{\fs{L}^{2}(\Omega_t)}\\
&=\int_t^{\infty}dy\left|\int_{t}^{\infty}dx\,\mathcal{K}(y,x;t)g(x)\right|^2\\
&\leq\int_{t}^{\infty}dy\left[\int_{t}^{\infty}dx\,|\mathcal{K}(y,x;t)|
\int_{t}^{\infty}dx\,|\mathcal{K}(y,x;t)||g(x)|^2\right]\\
&\leq [\mathcal{I}_1(t)]^4\|g\|^2_{\fs{L}^{2}(\Omega_t)},
\end{split}
\end{equation*}
For 
\[
\left(1+[I_1(t)]^2\right)^{-1}<\lambda_t<{2}\left(1+[I_1(t)]^2\right)^{-1},
\] 
define 
\begin{equation}
\OP{T}[g] = \lambda_t \OP{K}[g]-(1-\lambda_t)g,
\end{equation}
so that the Fredholm equation in~\eqref{eq:fredholm} can be written as
\begin{equation}\label{eq:fredholm2}
{A}_j(t,y)=\lambda_t{\Phi}_j(t,y)-\OP{T}[{A}_j(t,\cdot)](y),\quad j=1,2.
\end{equation}
The resolvent of the operator $\OP{T}$ exists as a consequence of the fact that
$\|\OP{T}\|_2<1$. If $f(\tau)$ is continuous, the continuity of ${A}_j(t,y)$ for
fixed $t$ follows from the uniform convergence of the Neumann series. 

We state the following result which specifies the sufficient conditions for the
$\beta(\tau)$ for the scattering potential to be compactly supported:
\begin{theorem}\label{thm:beta-support-q}
Let $\beta(\tau)=\fourier^{-1}[b](\tau)\in\fs{BV}$ with its support in
$\wtilde{\Omega}=[-2T_+,2T_-]$ such that it vanishes on 
$\partial\wtilde{\Omega}$. If $|b(\xi)|<1$ for $\xi\in\field{R}$, then, 
there exists a unique soliton-free scattering potential $q\in\fs{L}^{\infty}$ such that 
\[
\supp q\subset[-T_-,T_+].
\] 
\end{theorem}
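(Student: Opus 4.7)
The strategy is to reconstruct the reflection coefficient $\rho(\zeta)=b(\zeta)/a(\zeta)$ from $b(\zeta)$ alone, using the soliton-free hypothesis to pin $a(\zeta)$ down uniquely, and then invoke the Marchenko machinery of Proposition~\ref{prop:epstien0}. The assumption $\beta\in\fs{BV}$ with $\supp\beta\subset\wtilde{\Omega}=[-2T_+,2T_-]$ vanishing on $\partial\wtilde{\Omega}$ lets an integration-by-parts argument analogous to Proposition~\ref{prop:b-estimate}, combined with Paley-Wiener, show that $b(\zeta)$ is entire of exponential type with
\[
|b(\zeta)|\le \frac{C}{1+|\zeta|}\times
\begin{cases} e^{2T_+\Im\zeta}, & \zeta\in\ovl{\field{C}}_+,\\ e^{-2T_-\Im\zeta}, & \zeta\in\field{C}_-. \end{cases}
\]
In particular $b$ is continuous on $\field{R}$ and vanishes at infinity, so the strict bound $|b(\xi)|<1$ upgrades to a uniform bound $|b(\xi)|\le M<1$, and $\log(1-|b|^2)\in\fs{L}^1(\field{R})$.

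I would then construct $a(\zeta)$ as the canonical outer function in $\ovl{\field{C}}_+$ with boundary modulus $\sqrt{1-|b(\xi)|^2}$ and normalization $a(\zeta)\to 1$ as $|\zeta|\to\infty$. The above integrability guarantees $a(\zeta)$ is well-defined, analytic and zero-free in $\ovl{\field{C}}_+$, and as an outer function with boundary values bounded above and bounded away from zero it is of exponential type $0$. Consequently $\rho(\zeta)=b(\zeta)/a(\zeta)\in\fs{H}_+(T_+)$, and Lemma~\ref{lemma:inverse-FL} gives $p(\tau)=\fourier^{-1}[\rho](\tau)\in\fs{L}^1\cap\fs{L}^2$ with $\supp p\subset[-2T_+,\infty)$.

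With this $p$ in hand, Proposition~\ref{prop:epstien0} and its corollary produce, via the GLM equation with kernel $f(\tau)=p(-\tau)$, a unique soliton-free $q\in\fs{L}^\infty$ with $\supp q\subset(-\infty,T_+]$. To obtain the left support bound $\supp q\subset[-T_-,\infty)$, I would run the argument dually, starting from the Jost solution $\vs{\phi}$ at $-\infty$: the symmetries $\ovl{b}(\zeta)=b^*(\zeta^*)$ and $\ovl{a}(\zeta)=a^*(\zeta^*)$ combined with the analogous estimate for $b$ in $\field{C}_-$ place the dual reflection coefficient in $\fs{H}_-(T_-)$, and applying Epstein's argument after the reflection $t\mapsto -t$ (which swaps the roles of $T_+$ and $T_-$) forces $\int_{-\infty}^{-T_-}|q(t)|^2\,dt=0$. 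Combining yields $\supp q\subset[-T_-,T_+]$. Uniqueness follows because the outer function $a$ is uniquely determined by $b$ under the soliton-free normalization, and the Marchenko solution is itself unique.

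The principal obstacle is verifying that the outer-function construction delivers $a(\zeta)$ of exponential type exactly zero in $\ovl{\field{C}}_+$, so that the exponential type of $\rho$ is inherited from $b$ without inflation and the estimate $|\rho(\zeta)|\le C(1+|\zeta|)^{-1}e^{2T_+\Im\zeta}$ required by Lemma~\ref{lemma:inverse-FL} actually holds. A secondary but nontrivial point is making rigorous the dual Marchenko argument used for the left support bound; the cleanest route is probably to exploit the $t\mapsto -t$ symmetry of the ZS system rather than reproduce the analysis of Proposition~\ref{prop:epstien0} from scratch.
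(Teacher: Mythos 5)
Your proposal is correct and follows essentially the same route as the paper: construct $a$ from $|b|$ via the soliton-free (outer-function/Riemann--Hilbert) normalization, form $\rho=b/a\in\fs{H}_+(T_+)$, obtain $\supp q\subset(-\infty,T_+]$ from the Epstein/GLM argument of Proposition~\ref{prop:epstien0} and its corollary, get the left bound by the conjugation $b\mapsto b^*$ (equivalently $t\mapsto-t$), and deduce $q\in\fs{L}^\infty$ from the compact support and continuity of the GLM kernel. You actually make explicit two points the paper leaves implicit --- the $\fs{L}^1$ integrability of $\log(1-|b|^2)$ and the zero exponential type of the outer function --- which is a welcome addition rather than a deviation.
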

\begin{proof}
The condition in the first part of the theorem guarantees that
$p(\tau)\in\fs{L}^1\cap\fs{L}^2$ is continuous with $\supp p\subset[-2T_+,\infty)$. The proof
for $\supp q\subset (-\infty,T_+]$ follows from the preceding discussion. This
also shows that $\supp q^*(-t)\subset (-\infty,T_-]$ by considering $b^*(\xi)$ 
in place of $b(\xi)$. Combining the two cases allows us to conclude that $\supp
q\subset[-T_-,T_+]$.

Now, it remains to show that $q\in\fs{L}^{\infty}$. Setting $f(\tau)=p(-\tau)$ and considering the support
of $f(\tau)$, we may write the GLM equations as  
\begin{equation}
\begin{split}
&{A}_2^*(t,y)=-\int_{t}^{2T_+-y}{A}_1(t,s){f}(s+y)ds,\\
&{A}_1^*(t,y) = f(t+y) +\int_{t}^{2T_+-y}A_2(t,s){f}(s+y)ds.
\end{split}
\end{equation}
This allows us to conclude that $\supp_y A_1(t,y)\subset [t,2T_+-t]$. 
Furher, from the continuity of $A_1(t,y)$ wih respect to $y$ over the bounded set $[t,2T_+-t]$, it follows
that $q(t)=-2A_1(t,t)$ is bounded.
\end{proof}

Note that the discussion in the preceding paragraphs where the starting point of the
discussion was $p(\tau)$ does not require explicit assumption about the
presence/absence of the bound states. However, when starting with $\beta(\tau)$, this
assumption is required in order to construct $p(\tau)$.
%%%%%%%%%%%%%%%%%%%%%%%%%%%%%%%%%%%%%%%%%%%%%%%%%%%%%%%%%%%%%%%%%%%%%%%%%%%%%%%%%%%%%
\subsubsection{Presence of bound states}
% We know that in the case of time-limited signals, all the scattering coefficient,
% namely, $a(\zeta)$, $\ovl{a}(\zeta)$, $b(\zeta)$ and $\ovl{b}(\zeta)$ are entire
% functions of $\zeta$~\cite{AKNS1974}. Further, $\ovl{a}(\zeta) =
% a^*(\zeta^*),\,\ovl{b}(\zeta)=b^*(\zeta^*)$,
% so that $a(\zeta)a^*(\zeta^*)+b(\zeta)b^*(\zeta^*)=1$ for $\zeta\in\field{C}$.
% For any eigenvalue $\zeta_k$, the last relationship takes the form 
% $b(\zeta_k)b^*(\zeta_k^*)=1$.
% Note that in this case $b_k=b(\zeta_k)=1/b^*(\zeta_k^*)$ define the norming
% constant. 
Let us start our discussion with one-sided signals supported in $\Omega=(-\infty,T_+]$. In 
this case, $b(\zeta)$ is analytic in $\ovl{\field{C}}_+$ and, for any eigenvalue $\zeta_k$, 
the corresponding norming constant is given by $b_k=b(\zeta_k)$~\cite{AKNS1974}.
Let us show that this condition is sufficient to guarantee the one-sided support
of $q(t)$. To this end, we first prove the following lemma which specifies the
support of $p(\tau)$: 
\begin{lemma}\label{lemma:one-sided-S-R}
Let $b(\zeta)\in\fs{H}_+(T_+)$ with $|b(\xi)|<1$ for
$\xi\in\field{R}$. Let the discrete spectrum be given by 
$\mathfrak{S}_K=\{(\zeta_k,b_k)|,\,b_k=b(\zeta_k),\,k=1,2,\ldots,K\}$, then
\begin{equation}\label{eq:supp-p}
\supp p\subset [-2T_+,\infty).
\end{equation}
\end{lemma}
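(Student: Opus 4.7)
My approach is to strip the poles of $\rho = b/a$ at the eigenvalues by multiplication with a finite Blaschke product, reducing the problem to a direct application of Lemma \ref{lemma:inverse-FL}. First I would introduce
\[
B(\zeta) = \prod_{k=1}^{K}\frac{\zeta-\zeta_k}{\zeta-\zeta_k^*},
\]
which is analytic in $\ovl{\field{C}}_+$ with zeros exactly at the $\zeta_k$, satisfies $|B(\xi)|=1$ on $\field{R}$, and $|B(\zeta)|\leq 1$ throughout $\ovl{\field{C}}_+$.

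Since $a$ is analytic in $\ovl{\field{C}}_+$ with zeros precisely at the $\zeta_k$ and tends to $1$ at infinity, the quotient $\tilde a(\zeta):=a(\zeta)/B(\zeta)$ is analytic and non-vanishing in $\ovl{\field{C}}_+$ with $\tilde a(\zeta)\to 1$ at infinity. Invoking the log-modulus/Poisson-integral argument already used earlier in the paper for the zero-bound-state situation, one obtains $|1/\tilde a(\zeta)|\leq M$ uniformly in $\ovl{\field{C}}_+$. Setting $\tilde\rho(\zeta):=\rho(\zeta)B(\zeta)=b(\zeta)/\tilde a(\zeta)$, the hypothesis $b\in\fs{H}_+(T_+)$ together with the bound on $1/\tilde a$ yields $\tilde\rho\in\fs{H}_+(T_+)$, and Lemma \ref{lemma:inverse-FL} then delivers $\tilde p:=\fourier^{-1}[\tilde\rho]\in\fs{L}^1\cap\fs{L}^2$ with $\supp\tilde p\subset[-2T_+,\infty)$.

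The closing step is to pass from $\tilde p$ back to $p=\fourier^{-1}[\rho]$. On the real axis $\rho=\tilde\rho/B$, and partial fractions give
\[
\frac{1}{B(\zeta)}=1+\sum_{k=1}^{K}\frac{\gamma_k}{\zeta-\zeta_k},\quad \gamma_k=\frac{1}{B'(\zeta_k)},
\]
so that $\fourier^{-1}[1/B](\tau)=\delta(\tau)+i\sum_k\gamma_k e^{-i\zeta_k\tau}\theta(-\tau)$. Consequently $p$ equals $\tilde p$ plus convolutions of $\tilde p$ with exponentials supported on the negative half-line, and the hypothesis $b_k=b(\zeta_k)$ fixes the residue $\Res_{\zeta_k}\rho=b_k/a'(\zeta_k)$ in a form consistent with this decomposition.

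The main obstacle is precisely this last step: verifying that the bound-state corrections in the resulting representation do not spill $\supp p$ to the left of $-2T_+$. Concretely, an equivalent and arguably cleaner route is a direct contour-shift: for $\tau<-2T_+$ the integrand $\rho(\zeta)e^{-i\zeta\tau}$ decays as $\Im\zeta\to\infty$ (Jordan-type estimate, using the $\fs{H}_+(T_+)$ bound on $\tilde\rho$ away from the $\zeta_k$), so $\int_{\field{R}}\rho(\xi)e^{-i\xi\tau}d\xi$ reduces to $2\pi i$ times a sum of residues at the eigenvalues. The norming-constant relation $b_k=b(\zeta_k)$ is the only ingredient available to force this residue sum to vanish; isolating the precise algebraic identity by which it does so is where I expect the real work of the proof to lie.
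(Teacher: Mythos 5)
Your overall strategy --- multiply by a Blaschke product to obtain $\rho B\in\fs{H}_+(T_+)$, then shift the contour into $\field{C}_+$ and account for the poles of $\rho$ at the eigenvalues --- is the same as the paper's (its radiative coefficient $\rho_R=\rho\,a_S$ plays exactly the role of your $\tilde\rho=\rho B$). But the step you defer as ``where the real work lies'' is a genuine gap, and the identity you are hoping for does not exist. For $\tau<-2T_+$ the contour shift gives
\[
\frac{1}{2\pi}\int_{\field{R}}\rho(\xi)e^{-i\xi\tau}\,d\xi
= i\sum_{k}\Res[\rho;\zeta_k]\,e^{-i\zeta_k\tau}
= i\sum_{k}\frac{b(\zeta_k)}{\dot{a}(\zeta_k)}\,e^{-i\zeta_k\tau},
\]
and since a genuine bound state has $b_k=b(\zeta_k)\neq0$, this sum does not vanish; no choice of norming constants forces it to. Your own partial-fraction computation already says as much: the corrections coming from $1/B$ are supported on the \emph{negative} half-line, so $\fourier^{-1}[\rho]$ by itself really does spill to the left of $-2T_+$.

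The missing ingredient is that the $p$ of this lemma is the GLM kernel in the presence of bound states, i.e.\ it carries the additive discrete-spectrum term $-i\sum_k\bigl(b_k/\dot{a}(\zeta_k)\bigr)e^{-i\zeta_k\tau}$ on top of $\fourier^{-1}[\rho]$; this is what the paper's closing line $p(\tau)=f(-\tau)$ encodes, with $f$ defined to contain the residue sum. The hypothesis $b_k=b(\zeta_k)$ then makes that discrete term exactly the negative of the residue contribution produced by closing the contour, so the two cancel identically for $\tau<-2T_+$. The ``precise algebraic identity'' you were searching for is nothing more than $\Res[\rho;\zeta_k]=b(\zeta_k)/\dot{a}(\zeta_k)$ combined with $b_k=b(\zeta_k)$: the content of the lemma is a cancellation between the continuous and discrete parts of the kernel, not a vanishing of the residues. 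Your Blaschke-product preliminaries and the arc estimate are sound (indeed somewhat more explicit than the paper's), so once $p$ is understood in this sense your argument closes and coincides with the paper's proof.
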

\begin{proof}
For $\kappa>0$, consider the contours
$\Gamma_{\kappa}=\{\zeta\in\field{R}|\,|\zeta|\leq \kappa\}$ and
$C_{\kappa}=\{\zeta\in\ovl{\field{C}}_+|\,|\zeta|=\kappa\}$. The radiative part
of the reflection coefficient is given by $\rho_R(\zeta)=\rho(\zeta)a_S(\zeta)$ where 
$a_S(\zeta)=\prod_{k=1}^K{(\zeta-\zeta_k)}/{(\zeta_k-\zeta_k^*)}$.
Then the condition on $b(\zeta)$ guarantees that $\rho_R\in\fs{H}_+(2T_+)$.
%for some $D'>0$, we have 
%$|\rho_R(\zeta)|\leq[{D'}/{(1+|\zeta|)}]\exp[{2T_+\Im{\zeta}}]$ for
%$\zeta\in\ovl{\field{C}}_+$.
For sufficiently large $\kappa$, using Cauchy's theorem, we have
\begin{equation}
\mathcal{I}_{\kappa}(\tau)=\frac{1}{2\pi i}
\oint_{\Gamma_{\kappa}\cup
C_{\kappa}}\rho(\zeta)e^{i\zeta\tau}=\sum_k\Res[\rho;\zeta_k]e^{i\zeta_k\tau},
\end{equation}
where the contour $\Gamma_{\kappa}\cup C_{\kappa}$ is oriented
positively. For $\tau>2T_+$, the integrand on $C_{\kappa}$ is exponentially decaying with respect
to $\kappa$. Therefore, in the limit $\kappa\rightarrow\infty$, for $\tau>2T_+$, 
\begin{equation}
f(\tau)=\frac{1}{2\pi}
\int_{\field{R}}\rho(\zeta)e^{i\zeta\tau}-i\sum_k\Res[\rho;\zeta_k]e^{i\zeta_k\tau}=0,
\end{equation}
where we have used the fact that
$\Res[\rho;\zeta_k]=b(\zeta_k)/\dot{a}(\zeta_k)$. Now, the
conclusion~\eqref{eq:supp-p} follows from $p(\tau)=f(-\tau)$.
\end{proof}
This lemma ensures that $\supp q\subset (-\infty,2T_+]$. However, given the
exponentially increasing behavior of $p(\tau)$, one can only assert that
$\|q\chi_{[t,T_+]}\|_2$ is finite for finite $t$. This question will be addressed
later using a rather direct approach where such results can be obtained trivially. 
\begin{rem}
Define $\sigma(\zeta)=b^*(\zeta^*)/a(\zeta)$. Proceeding as in Lemma~\ref{lemma:one-sided-S-R}, 
it can also be concluded that if $b(\zeta)\in\fs{H}_-(T_-)$ and $|b(\xi)|<1$
then $1/b_k=b^*(\zeta_k^*)$ implies $s(\tau)=\fourier^{-1}[\sigma](\tau)$ is supported in 
$[-2T_-,\infty)$. 
\end{rem}
The remark shows that $\supp q^*(-t)\subset [-2T_-,\infty)$. Combining
the two cases, it immediately follows that $\supp q\subset[-T_-,T_+]$. 
Note that similar arguments can be found in~\cite[App.~5]{AKNS1974} where the
authors have characterized the nature of the nonlinear Fourier spectrum
corresponding to a time-limited signal. The
difference merely lies in what is considered as the starting point. Starting
from a one-sided $p(\tau)$ does not always lead to a compactly supported
$\beta(\tau)$; therefore, it is much more convenient to start with an
exponential type $b(\zeta)$ or compactly supported $\beta(\tau)$. However, 
this does not guarantee that $b(\zeta_k)=b_k$ and $b(\zeta^*_k)=1/b^*_k$ 
are satisfied at the same time. It appears that without losing a certain degree
of freedom in choosing $\beta(\tau)$, arbitrary bound states cannot be
introduced in the scattering potential while maintaining its compact support.

Now let us discuss a direct approach which also provides some insight into the 
effective support of potential obtained as a result of addition of bound states to 
compactly supported (radiative) potentials. The Darboux transformation (DT)
technique allows one to introduce bound states to any arbitrary potential referred
to as the \emph{seed} potential. In the
following discussion, we assume that we have a compactly supported potential, $q_0(t)$, with its
support in $[-T_-, T_+]$. For the sake of simplicity let us assume that the 
discrete spectrum of the seed potential is empty. The Jost solution of the seed
potential, in matrix form, is denoted by $v_0=(\vs{\phi}_0,\vs{\psi}_0)$. It is 
known that DT can be implemented as a recursive scheme~\cite{GHZ2005} which is briefly summarized 
below. Let us define the successive discrete spectra
$\emptyset=\mathfrak{S}_0\subset\mathfrak{S}_1\subset\mathfrak{S}_2
\subset\ldots\subset\mathfrak{S}_K$ such that 
${\mathfrak{S}}_j=\{(\zeta_j,b_j)\}\cup{\mathfrak{S}}_{j-1}$ for
$j=1,2,\ldots,K$ where $(\zeta_j,b_j)$ are distinct elements of 
$\mathfrak{S}_K$. The Darboux matrices of degree one can be
stated as
\begin{equation}
D_1(t;\zeta,\mathfrak{S}_{j}|\mathfrak{S}_{j-1})=\zeta\sigma_0 -
\begin{pmatrix}
\frac{|\gamma_{j-1}|^2\zeta_j+\zeta_j^*}{1+|\gamma_{j-1}|^2} 
&\frac{(\zeta_j-\zeta_j^*)\gamma_{j-1}}{1+|\gamma_{j-1}|^2}\\
\frac{(\zeta_j-\zeta_j^*)\gamma^*_{j-1}}{1+|\gamma_{j-1}|^2}
&\frac{\zeta_j+\zeta_j^*|\gamma_{j-1}|^2}{1+|\gamma_{j-1}|^2}
\end{pmatrix},
\end{equation}
where
\begin{equation}
\gamma_{j-1}(t;\zeta_j, b_j) =
\frac{\phi_1^{(j-1)}(t;\zeta_j)-b_{j}\psi_1^{(j-1)}(t;\zeta_j)}
{\phi_2^{(j-1)}(t;\zeta_j) - b_{j}\psi_2^{(j-1)}(t;\zeta_j)},
\end{equation}
for $(\zeta_j,b_j)\in\mathfrak{S}_K$ and the successive Jost solutions, 
${v}_{j} = (\vs{\phi}_{j},\vs{\psi}_{j})$, needed in this ratio are computed as
\begin{equation}
    {v}_j(t;\zeta)=\frac{1}{(\zeta-\zeta^*_j)}D_{1}(t;\zeta,\mathfrak{S}_j|\mathfrak{S}_{j-1})
    v_{j-1}(t;\zeta).
\end{equation}
Let us denote the successive augmented potentials by $q_j(t)$ and define 
\begin{equation}
\begin{split}
\mathcal{E}^{(-)}_j(t)&=\int^t_{-\infty}|q_j(s)|^2ds,\\
\mathcal{E}^{(+)}_j(t)&=\int_t^{\infty}|q_j(s)|^2ds.
\end{split}
\end{equation}
Then the potential is given by 
\begin{equation}\label{DT-iter-pot}
q_j = q_{j-1} -
2i\frac{(\zeta_j-\zeta_j^*)\gamma_{j-1}}{1+|\gamma_{j-1}|^2},
\end{equation}
and
\begin{equation}\label{eq:Epm-iter}
\begin{split}
\mathcal{E}^{(-)}_j &=\mathcal{E}^{(-)}_{j-1}
+\frac{4\Im(\zeta_j)}{1+|\gamma_{j-1}|^{-2}},\\
\mathcal{E}^{(+)}_j &=\mathcal{E}^{(+)}_{j-1}
+\frac{4\Im(\zeta_j)}{1+|\gamma_{j-1}|^{2}}.
\end{split}
\end{equation}
The above relations can be readily verified by computing the coefficient of
$\zeta^{0}$ and $\zeta^{-1}$ in the asymptotic expansion of
$v_j(t,\zeta)e^{i\sigma_3\zeta t}$ in the
negative powers of $\zeta$~\cite{AKNS1974}. Now, if we were to truncate the augmented potential 
such that the support becomes
$[-T'_-,T'_+]$ where $T_{\pm}\leq T'_{\pm}$, the energy in the tails can be
computed exactly thanks to the recurrence relations for $\mathcal{E}^{(\pm)}_j$.
Note that, for $T_-<t\leq-T'_-$, the seed Jost solution can be stated exactly: 
\begin{equation}
\vs{\phi}_0(t;\zeta) = 
\begin{pmatrix}
1\\
0
\end{pmatrix}e^{-i\zeta t},\quad
\vs{\psi}_0(t;\zeta)=\begin{pmatrix}
\ovl{b}_0(\zeta)e^{-i\zeta t}\\
a_0(\zeta)e^{i\zeta t}
\end{pmatrix},
\end{equation}
where $a_0(\zeta)$ and $b_0(\zeta)$ are the scattering coefficients of $q_0(t)$. Thus 
the energy content of the part of the signal supported in
$(-\infty,-T'_-]$, denoted by $\mathcal{E}^{(-)}(-T'_-)$, using the recursive 
scheme stated above in~\eqref{eq:Epm-iter}. Now let us examine what happens when
we assume $\ovl{b}(\zeta_k)=1/b_k$. For $k=1$, we have $\gamma_1=0$ so that
\[
{v}_1(t;\zeta)=
\begin{pmatrix}
1&0\\
0&\frac{\zeta-\zeta_1}{\zeta-\zeta_1^*}
\end{pmatrix}
v_{0}(t;\zeta),
\]
so that
\begin{equation}
\vs{\phi}_1(t;\zeta) = 
\begin{pmatrix}
1\\
0
\end{pmatrix}e^{-i\zeta t},\quad
\vs{\psi}_1(t;\zeta)=\begin{pmatrix}
\ovl{b}_0(\zeta)e^{-i\zeta t}\\
a_1(\zeta)e^{i\zeta t}
\end{pmatrix},
\end{equation}
where 
\begin{equation}
a_1(\zeta) = \left(\frac{\zeta-\zeta_1}{\zeta-\zeta_1^*}\right)a_0(\zeta).
\end{equation}
Using mathematical induction it can be shown that $\gamma_j=0$ for
$j=0,1,\ldots,K$, and
\begin{equation}
\vs{\phi}_j(t;\zeta) = 
\begin{pmatrix}
1\\
0
\end{pmatrix}e^{-i\zeta t},\quad
\vs{\psi}_j(t;\zeta)=\begin{pmatrix}
\ovl{b}_0(\zeta)e^{-i\zeta t}\\
a_j(\zeta)e^{i\zeta t}
\end{pmatrix},
\end{equation}
where 
\begin{equation}
a_j(\zeta) = \left(\frac{\zeta-\zeta_j}{\zeta-\zeta_j^*}\right)a_{j-1}(\zeta).
\end{equation}
Note that the arguments above are identical to those presented in~\cite{Lin1990}
where $T'_{\pm}=\infty$. From here it is easy to conclude that
$\mathcal{E}^{(-)}(-T'_-)\equiv0$. For $t=T'_+$, similar arguments can be
provided to conclude that, if $b(\zeta_k)=b_k$, the energy in the tail supported
in $[T'_+,\infty)$, denoted by $\mathcal{E}^{(+)}(T'_+)$, is identically zero.

Let us conclude this section with the following observation: Even if the relations 
$b(\zeta_k)=b_k$ and $\ovl{b}(\zeta_k)=1/b_k$
are not satisfied, the augmented signal, $q_K(t)$, can still be considered as effectively
supported within some $[-T'_-,T'_+]$ up to a tolerance, say $\epsilon$, such
that $\epsilon {\|q_K\|^2_2}=
{\mathcal{E}^{(-)}(-T'_-)+\mathcal{E}^{(+)}(T'_+)}$. Here, 
$\|q_K\|_2^2=\mathcal{E}^{(-)}(T'_+)+\mathcal{E}^{(+)}(T'_+)$ 
with $\mathcal{E}^{(-)}_0(T'_+)=\|q_0\|_2^2$ which is assumed to be known.
%%%%%%%%%%%%%%%%%%%%%%%%%%%%%%%%%%%%%%%%%%%%%%%%%%%%%%%%%%%%%%%%%%%%%%%%%%%%%%%%%%%%%%%%%%
%%%%%%%%%%%%%%%%%%%%%%%%%%%%%%%%%%%%%%%%%%%%%%%%%%%%%%%%%%%%%%%%%%%%%%%%%%%%%%%%%%%%%%%%%%
%%%%%%%%%%%%%%%%%%%%%%%%%%%%%%%%%%%%%%%%%%%%%%%%%%%%%%%%%%%%%%%%%%%%%%%%%%%%%%%%%%%%%%%%%%
%%%%%%%%%%%%%%%%%%%%%%%%%%%%%%%%%%%%%%%%%%%%%%%%%%%%%%%%%%%%%%%%%%%%%%%%%%%%%%%%%%
%%%%%%%%%%%%%%%%%%%%%%%%%%%%%%%%%%%%%%%%%%%%%%%%%%%%%%%%%%%%%%%%%%%%%%%%%%%%%%%%%%
%%%%%%%%%%%%%%%%%%%%%%%%%%%%%%%%%%%%%%%%%%%%%%%%%%%%%%%%%%%%%%%%%%%%%%%%%%%%%%%%%%
%%%%%%%%%%%%%%%%%%%%%%%%%%%%%%%%%%%%%%%%%%%%%%%%%%%%%%%%%%%%%%%%%%%%%%%%%%%%%%%%%%
\section{Numerical Methods}
\label{sec:num-methods}
Numerical methods for inverse scattering can be grouped into two classes: The
first class of methods can be characterized as the \emph{differential approach} 
which proceeds by discretizing the ZS problem using exponential one-step methods
that admit of a layer-peeling property. The 
discrete framework thus obtained can be used for direct as well as 
inverse scattering\footnote{Note that not all exponential integrators admit of a
layer-peeling property and their characterization remains an open problem.
Besides, it's also not clear if the layer-peeling property is a necessary
condition for such systems to admit of an inverse scattering algorithm.}. The second class 
of methods can be characterized as 
the \emph{integral approach} which proceeds by dicretizing the 
Gelfand-Levitan-Marchenko equations. The discrete framework thus obtained can 
be used for direct as well as inverse scattering with some 
limitations\footnote{The presence of bound states
makes the nonlinear impulse response an increasing function of $\tau$ (the
covariable of $\xi$) in a certain half-space
making the GLM equations ill-conditioned for numerical computations. Therefore, the GLM based
approach is used in this article only when the discrete spectrum is empty.}.

\subsection{Differential approach}
\label{sec:num-disc}
In this section, we review the discretization scheme for the ZS problem first 
proposed in~\cite{V2017INFT1}. This scheme can be described as the exponential 
one-step method based on the trapezoidal rule of integration. In order to 
discuss the discretization scheme, we take an equispaced grid defined 
by $t_n= T_1 + nh,\,\,n=0,1,\ldots,N,$ with $t_{N}=T_2$ where $h$ is the grid spacing.
Define $\ell_-,\ell_+\in\field{R}$ such that $h\ell_-= T_-$, $h\ell_+= T_+$.
Further, let us define $z=e^{i\zeta h}$. For the potential functions sampled on the 
grid, we set $q_n=q(t_n)$, $r_{n}=r(t_n)$. Using the same convention, $U_{n}=U(t_n)$ and 
$\wtilde{U}_{n}=\wtilde{U}(t_n)$ where $\wtilde{U}$ is given by
\begin{equation}\label{eq:exp-int}
\wtilde{U}=e^{i\sigma_3\zeta t}Ue^{-i\sigma_3\zeta t}
=\begin{pmatrix}
0 & qe^{2i\zeta t}\\
re^{-2i\zeta t} & 0
\end{pmatrix}.
\end{equation}
Define $\tilde{\vv{\phi}}=(a(t;\zeta), b(t;\zeta))^{\tp}$. By applying the trapezoidal rule 
to~\eqref{eq:ODE-AB-coeffs}, we obtain
\begin{equation*}
\begin{split}
\tilde{\vs{\phi}}_{n+1} 
= \left(\sigma_0-\frac{h}{2}\wtilde{U}_{n+1}\right)^{-1}
\left(\sigma_0+\frac{h}{2}\wtilde{U}_{n}\right)\tilde{\vs{\phi}}_{n}.
\end{split}
\end{equation*}
Setting $2Q_{n}=hq_n$, $2R_{n}=hr_n$ and $\Theta_n=1-Q_nR_n$ so that
\begin{equation}\label{eq:scatter-TR}
\begin{split}
\vs{\phi}_{n+1}&=\frac{z^{-1}}{\Theta_{n+1}}
\begin{pmatrix}
1+z^2Q_{n+1}R_n& z^2Q_{n+1}+Q_n\\
R_{n+1}+z^2R_n & R_{n+1}Q_n + z^2
\end{pmatrix}\vv{v}_n\\
&=z^{-1}M_{n+1}(z^2)\vs{\phi}_n,
\end{split}
\end{equation}
or, equivalently,
\begin{equation}
\begin{split}
\vs{\phi}_n &=
\frac{z^{-1}}{\Theta_{n}}
\begin{pmatrix}
R_{n+1}Q_n + z^2& -z^2Q_{n+1}-Q_n\\
-R_{n+1}-z^2R_n & 1+z^2Q_{n+1}R_n
\end{pmatrix}\vs{\phi}_{n+1}\\
&=z^{-1}\wtilde{M}_{n+1}(z^2)\vs{\phi}_{n+1}.
\end{split}
\end{equation}
In order to express the discrete approximation to the Jost solutions, let us
define the vector-valued polynomial
\begin{equation}\label{eq:poly-vec}
\vv{P}_n(z)=\begin{pmatrix}
            P^{(n)}_{1}(z)\\
            P^{(n)}_{2}(z)
        \end{pmatrix}
         =\sum_{k=0}^n
            \vv{P}^{(n)}_{k}z^k
         =\sum_{k=0}^n
        \begin{pmatrix}
            P^{(n)}_{1,k}\\
            P^{(n)}_{2,k}
        \end{pmatrix}z^k.
\end{equation}
The Jost solutions $\vs{\phi}$ can be written in the form
$\vs{\phi}_n = z^{\ell_-}z^{-n}\vv{P}_n(z^2)$. Note that this expression 
corresponds to the boundary
condition $\vs{\phi}_0=z^{\ell_-}(1,0)^{\tp}$ which translate to 
$\vv{P}_0=(1,0)^{\tp}$. The other Jost
solution, $\ovl{\vs{\phi}}_n$, can be written as
$\ovl{\vs{\phi}}_n = z^{-\ell_-}z^{n}(i\sigma_2)\vv{P}^*_n(1/z^{*2})$.
The recurrence relation for the polynomial associated with the Jost solution
takes the form
\begin{equation}\label{eq:poly-scatter}
\vv{P}_{n+1}(z^2) = M_{n+1}(z^2)\vv{P}_n(z^2).
\end{equation}
The discrete approximation to the scattering coefficients is obtained from the scattered
field, $\vs{\phi}_{N}=(a_{N} z^{-\ell_+},b_{N} z^{\ell_+})^{\tp}$, which yields
\begin{equation}
a_{N}(z^2)={P}^{(N)}_1(z^2),\quad b_{N}(z^2)=(z^2)^{-\ell_{+}}{P}^{(N)}_2(z^2).
\end{equation}
The quantities $a_{N}$ and $b_{N}$ above are referred to as the 
\emph{discrete scattering coefficients}. Note that these coefficients can only be 
defined for $\Re\zeta\in [-{\pi}/{2h},\,{\pi}/{2h}]$.

\subsubsection{The layer-peeling algorithm}
\label{sec:discrete-TR-summary}
Let us consider the problem of recovering the samples of the
scattering potential from the discrete scattering coefficients known in the
polynomial form. This step is referred to as the \emph{discrete inverse
scattering} step. Starting from the recurrence relation~\eqref{eq:poly-scatter},
the \emph{layer-peeling} step consists in using 
$\vv{P}_{n+1}(z^2)$ to retrieve the samples of the potential
needed to compute the transfer matrix $\wtilde{M}_{n+1}(z^2)$ so that the entire step can be
repeated with $\vv{P}_{n}(z^2)$ until all the samples of the potential are 
recovered.
 
Let us assume $Q_0=0$. The recurrence relation for the trapezoidal rule yields
\begin{equation}
\label{eq:TR-cond}
P^{(n+1)}_{1,0}
=\Theta^{-1}_{n+1}\prod_{k=1}^{n}\biggl(\frac{1+Q_kR_k}{1-Q_kR_k}\biggl)
=\Theta^{-1}_{n+1}\prod_{k=1}^{n}\biggl(\frac{2-\Theta_k}{\Theta_k}\biggl)
\end{equation}
and $\vv{P}^{(n+1)}_{n+1}= 0$. The last relationship follows from the assumption $Q_0=0$. For sufficiently 
small $h$, it is reasonable to assume that 
$1+Q_nR_n=2-\Theta_n>0$ so that $P^{(n)}_{1,0}>0$ (it also implies that 
$|Q_n|=|R_n|<1$). Now for the layer-peeling step, we have the following relations
$R_{n+1} = {P^{(n+1)}_{2,0}}/{P^{(n+1)}_{1,0}}$ and 
\begin{equation}
%R_{n+1} = \frac{P^{(n+1)}_{2,0}}{P^{(n+1)}_{1,0}},\quad
R_n = \frac{\chi}{1 + \sqrt{1+|\chi|^2}},\quad
\chi=\frac{P^{(n+1)}_{2,1} - R_{n+1}P^{(n+1)}_{1,1}}
{P^{(n+1)}_{1,0} - Q_{n+1}P^{(n+1)}_{2,0}}.
\end{equation}
Note that $P^{(n+1)}_{1,0}\neq0$ and ${P^{(n+1)}_{1,0} -
Q_{n+1}P^{(n+1)}_{2,0}}\neq0$.
As evident from~\eqref{eq:scatter-TR}, the transfer matrix, $M_{n+1}(z^2)$, 
connecting $\vv{P}_n(z^2)$ and $\vv{P}_{n+1}$ is completely determined by
the samples $R_{n+1}$ and $R_n$ (with $Q_{n+1}=-R^*_{n+1}$ and $Q_{n}=-R^*_{n}$).

\subsubsection{Discrete inverse scattering with the reflection coefficient}
\label{sec:lp-rho}
Let us assume that the reflection coefficient $\rho\in\fs{H}_+(T_+)$ and 
define $\breve{\rho}(\zeta)=\rho(\zeta)e^{2i\zeta T_+}$. In order to obtain the discrete 
scattering coefficients, we follow a method due to
Lubich~\cite{Lubich1988I,Lubich1994} which 
is used in computing quadrature weights for convolution integrals. 
Introduce the function $\delta(z)$ as in~\cite{Lubich1988I,Lubich1994} which corresponds to
the trapezoidal rule of integration: $\delta(z) = 2{(1-z)}/{(1+z)}$. Next, we 
introduce the coefficients $\breve{\rho}_k$ such that
\begin{equation}
\breve{\rho}\biggl(\frac{i\delta(z^2)}{2h}\biggl)=\sum_{k=0}^{\infty}\breve{\rho}_kz^{2k},
\end{equation}
then using the Cauchy integral, we have the relation
\begin{equation}\label{eq:discrete-coeffs-rho}
\breve{\rho}_{k} = \frac{1}{2\pi i}
    \oint_{|z|^2=\varrho}\biggl[\breve{\rho}\biggl(\frac{i\delta(z^2)}{2h}\biggl)\biggl]z^{-2k-2}d(z^2),
\end{equation}
which can be easily computed using FFT. The input to the layer-peeling algorithm can 
then be taken to be
\begin{equation*}
P^{(N)}_1(z^2) = 1,\quad
P^{(N)}_2(z^2) =
\left\{\breve{\rho}\biggl(\frac{i\delta(z^2)}{2h}\biggl)\right\}_N,
\end{equation*}
where the notation $\{\cdot\}_{N}$ implies truncation after $N$ terms.

If $\rho(\zeta)$ decays sufficiently fast, then an alternative approach can be
devised. Define the function $\breve{p}(\tau)$ as 
\begin{equation}\label{eq:breve-p-tau}
\breve{p}(\tau) = \frac{1}{2\pi}\int_{-\infty}^{\infty}\breve{\rho}(\xi)
e^{-i\xi\tau}d\xi.
\end{equation}
Note that for $\tau < 0$, the contour can be closed in the upper half-plane and
the integrals would evaluate to zero; therefore, $\breve{p}(\tau)$ is causal. If, 
in addition, $\breve{\rho}(\zeta)$ is analytic in the strip
$-\kappa\geq\Im{\zeta}<0$, the nonlinear impulse response $\breve{p}(\tau)$ can
be shown to decay exponentially away from the origin. A polynomial approximation 
for $\breve{\rho}(\zeta)$ can be furnished by using the trapezoidal sum to 
approximate the integral
\begin{equation}
\breve{\rho}(\zeta) = \int_{0}^{\infty}\breve{p}(\tau) e^{i\zeta\tau}d\tau.
\end{equation}
Let the grid spacing in the
$\tau$-domain be $2h$, then
\begin{equation}
\begin{split}
\breve{\rho}(\zeta) 
&\approx h\breve{p}(0) + 2h\sum_{k=1}^{N-1}\breve{p}(2hk) [e^{2i\zeta h}]^k\\
&=h\breve{p}(0)+2h\sum_{k=1}^{N-1}\breve{p}(2hk)z^{2k}.
\end{split}
\end{equation}
Note that the endpoint $\tau_{N}=2Nh$ is omitted because $\breve{p}(\tau_N)$ is assumed to be
vanishingly small. For fixed $\zeta\in\ovl{\field{C}}_+$, the approximation is
$\bigO{h^2}$ assuming that the error due to truncation at $\tau_N$ is negligible
at least for $N>N_0$ where $N_0$ depends on the decay properties of
$\breve{p}(\tau)$ and $h$. The approximation above asserts that
$\breve{\rho}(\xi)$ is effectively supported in $[0,\pi/h]$. Note that the
method discussed in~\cite{V2017BL} makes the assumption that $\rho(\xi)$ has a
compact support which is independent of $h$.

If the integral for $\breve{p}(\tau)$ cannot be evaluated exactly, we use 
the FFT algorithm to compute the integral. The decay property of
$\breve{\rho}(\xi),\,\xi\in\field{R},$ decides 
the convergence of the discrete Fourier sum used in FFT so that it may 
be necessary to employ large number of samples in order to get
accurate results.

%%%%%%%%%%%%%%%%%%%%%%%%%%%%%%%%%%%%%%%%%%%%%%%%%%%%%%%%%%%%%%%%%%%%%%%
\subsection{Integral approach}
\label{sec:modified-TIB}
Let us assume that the reflection coefficient $\rho\in\fs{H}_+(T_+)$ and 
define $\breve{\rho}(\zeta)=\rho(\zeta)e^{2i\zeta T_+}$ as in
Sec.~\ref{sec:lp-rho}. Introducing
\begin{equation}
\begin{split}
&\mathcal{A}_1(t,y) = A_1(T_+-t,T_+-y+t),\\
&\mathcal{A}_2(t,y) = A_2(T_+-t,T_+-t+y),
\end{split}
\end{equation}
the GLM equations can be put into the following form
\begin{equation}\label{eq:GLM-mod}
\begin{split}
&\mathcal{A}_2^*(t,2t-y)+\int_{0}^{y}\mathcal{A}_1(t,2t-s)\breve{p}(y-s)ds=0,\\
&-\mathcal{A}_1^*(t,y)+\int_{0}^{y}\mathcal{A}_2(t,s)\breve{p}(y-s)ds=-\breve{p}(y),
\end{split}
\end{equation}
for $0\leq y\leq 2t\leq 2T=T_-+T_+$ where $\breve{p}(\tau)$ is defined
by~\eqref{eq:breve-p-tau}. The scattering potential is recovered using
\begin{equation}
\begin{split}
q(T_+-t) 
&= -2A_1(T_+-t,T_+-t)\\
&= -2\mathcal{A}_1(t,2t),
\end{split}
\end{equation}
with the $\fs{L}^2$-norm of the potential truncated to $(T_+-t,T_+]$ is given
by
\begin{equation}
\begin{split}
\int_0^{t}|q(T_+-s)|^2ds 
&= -2A_2(T_+-t,T_+-t)\\
&= -2\mathcal{A}_2(t,0).
\end{split}
\end{equation}
Defining the grid $y_{j+1}=y_j+2h$ and setting $\breve{p}_k=\breve{p}(2kh)$, we
have
%\begin{widetext}
\begin{align*}
\mathcal{A}_2^*(t,2t-y_l)
&+2h\sum_{j=0}^l\breve{p}_{l-j}\mathcal{A}_1(t,2t-y_{j})
-h\breve{p}_{0}\mathcal{A}_1(t,2t-y_l)\\
&-h\breve{p}_{l}\mathcal{A}_1(t,2t)=0,\\
\mathcal{A}_1^*(t,y_k)
&-2h\sum_{j=0}^k\breve{p}_{k-j}\mathcal{A}_2(t,y_{j})
+h\breve{p}_{0}\mathcal{A}_2(t,y_k)\\
&+h\breve{p}_{k}\mathcal{A}_2(t,0)
=\breve{p}_k.
\end{align*}
%\end{widetext}
Setting $t_j = jh$ for $j=0,\ldots,N$ with 
$k,l\leq m$ and introducing the quadrature weights
$\omega_k=2h\breve{p}_k,\,k>0$ with $\omega_0=h\breve{p}_0$, we have
\begin{align*}
&\mathcal{A}_2(t_m,y_l)
+\sum_{j=l}^m{\omega}^{\dagger}_{l-j}\mathcal{A}^*_1(t_m,y_{j})
-\frac{1}{2}{\omega}^{\dagger}_{l}\mathcal{A}^*_1(t_m,y_m)=0,\\
-&\mathcal{A}_1^*(t_m,y_k)
+\sum_{j=0}^k{\omega}_{k-j}\mathcal{A}_2(t_m,y_{j})
-\frac{1}{2}{\omega}_{k}\mathcal{A}_2(t_m,0)
=-\breve{p}_k,
\end{align*}
where ${\omega}^{\dagger}_{l-j}={\omega}^{*}_{j-l}$. 
Now, define the 
lower triangular T\"oplitz matrix
$\Gamma^{(m)}=(\gamma^{(m)}_{jk})_{(m+1)\times(m+1)}$ where
\begin{equation*}
\gamma^{(m)}_{jk}=
\begin{cases}
{\omega}_{j-k},&j-k\geq0,\\
0,&\text{otherwise},
\end{cases}\quad j,k=0,1,\ldots,m,
\end{equation*}
and the T\"oplitz matrix $G^{(m)}$ as
\begin{equation*}
G^{(m)} = 
\begin{pmatrix}
    I^{(m)}&-\Gamma^{(m)\dagger}\\
    \Gamma^{(m)} & I^{(m)}
\end{pmatrix},
\end{equation*}
where $I^{(m)}$ is the $(m+1)\times (m+1)$ identity matrix and `$\dagger$'
denotes Hermitian conjugate. Next, we define the vectors 
\begin{equation*}
\vv{\mathcal{A}}^{(m)}_j
=(\mathcal{A}_j(t_m,0),\mathcal{A}_j(t_m,y_1),\ldots\mathcal{A}_j(t_m,y_m))^{\tp},\quad j=1,2,
\end{equation*}
and put
\begin{equation}
\vs{\chi}^{(m)} = 
\begin{pmatrix}
    \vv{\mathcal{A}}^{(m)}_2\\
    -\vv{\mathcal{A}}^{(m)*}_1
\end{pmatrix};
\end{equation}
then the linear system corresponding to the GLM equation reads as
\begin{equation}\label{eq:linear-toplitz}
G^{(m)}\vs{\chi}^{(m)} = 
\begin{pmatrix}
-h{\chi}^{(m)}_{2m+1}\breve{\vs{p}}^{(m)\ddagger}\\
-\breve{\vv{p}}^{(m)}+h{\chi}^{(m)}_{0}{\breve{\vs{p}}}^{(m)}
\end{pmatrix},
\end{equation}
where ${\vs{\omega}}^{(m)\ddagger}$ is the reverse numerated and conjugated form of
${\vs{\omega}}^{(m)}$. The double-dagger operation (${}^{\ddagger}$) will be
used to mean reversal of index numeration and complex conjugation in the rest of this article unless
otherwise stated. In order to develop a fast method of solution of the linear
system, Belai~\et~\cite{BFPS2007} exploited the T\"oplitz structure of the matrix
$G^{(m)}$. It is known that the inverse of a T\"oplitz matrix is a
\emph{persymmetric} matrix. Our aim is to determine 
$\chi^{(m+1)}_{0}$ and $\chi^{(m+1)}_{2m+3}$, the first and the last
component of the vector $\vs{\chi}^{(m+1)}$, respectively. Therefore, we focus
on the first and 
the last column of the matrix $[G^{(m)}]^{-1}$ which must be of the form
\begin{equation}
\begin{pmatrix}
\vv{u}^{(m)}\\
\vv{v}^{(m)}
\end{pmatrix},\quad
\begin{pmatrix}
-\vv{v}^{(m)\ddagger}\\
\vv{u}^{(m)\ddagger}
\end{pmatrix}.
\end{equation}
The complex conjugate of the first and the last row of $[G^{(m)}]^{-1}$ have the
form
\begin{equation}
\begin{pmatrix}
\vv{u}^{(m)}\\
-\vv{v}^{(m)}
\end{pmatrix}^{\tp},\quad
\begin{pmatrix}
\vv{v}^{(m)\ddagger}\\
\vv{u}^{(m)\ddagger}
\end{pmatrix}^{\tp}.
\end{equation}
Therefore, it follows that
\begin{equation}\label{eq:inverse-toeplitz}
G^{(m)}\begin{pmatrix}
\vv{u}^{(m)}\\
\vv{v}^{(m)}
\end{pmatrix}=\begin{pmatrix}
    1\\
    0\\
    \vdots\\
    0
\end{pmatrix},\quad
G^{(m)}\begin{pmatrix}
-\vv{v}^{(m)\ddagger}\\
\vv{u}^{(m)\ddagger}
\end{pmatrix}=\begin{pmatrix}
    0\\
    \vdots\\
    0\\
    1
\end{pmatrix}.
\end{equation}
These relations translate to
\begin{equation}\label{eq:inner-bordering-basic}
\begin{split}
&u^{(m)}_0-{\vs{\omega}}^{(m)*}\cdot\vv{v}^{(m)}=1,\\
&{\vs{\omega}}^{(m)\ddagger}\cdot\vv{u}^{(m)*}+v^{(m)*}_m=0,
\end{split}
\end{equation}
where `$\cdot$' denotes the scalar product of two vectors. The 
\emph{inner bordering} scheme can now be stated as 
follows: 
\begin{itemize}
\item For $m = 0$, set
\begin{equation}
\begin{pmatrix}
\vv{u}^{(0)}\\
\vv{v}^{(0)}
\end{pmatrix} = \frac{1}{(1+|{\omega}_0|^2)}
\begin{pmatrix}
1\\
-{\omega}_0
\end{pmatrix}.
\end{equation}
\item Introduce $c_m$ and $d_m$, such that
\begin{equation}\label{eq:inner-bordering-vec}
\begin{split}
&\vv{u}^{(m+1)} = 
c_m\begin{pmatrix}
\vv{u}^{(m)}\\
0
\end{pmatrix}+
d^*_m\begin{pmatrix}
0\\
-\vv{v}^{(m)\ddagger}
\end{pmatrix},\\
&\vv{v}^{(m+1)} = 
c_m\begin{pmatrix}
\vv{v}^{(m)}\\
{0}
\end{pmatrix}+
d^*_m\begin{pmatrix}
{0}\\
\vv{u}^{(m)\ddagger}
\end{pmatrix},
\end{split}
\end{equation}
where $c_m$ and $d_m$ are determined such that~\eqref{eq:inverse-toeplitz} holds
for $m+1$. This yields $c_m = (1+|\beta_m|^2)^{-1}$ and $d_m = -\beta_m c_m$
where
\begin{equation}
\begin{split}
\beta_m
&=\sum_{j=0}^{m}{\omega}^*_{m+1-j}u^{(m)*}_j\\
&={\vs{\omega}}^{(m+1)*}\cdot
\begin{pmatrix}
{0}\\
\vv{u}^{(m)\ddagger}
\end{pmatrix}
={\vs{\omega}}^{(m+1)\ddagger}\cdot
\begin{pmatrix}
\vv{u}^{(m)*}\\
{0}
\end{pmatrix}.
\end{split}
\end{equation}
\end{itemize}
The recurrence relation for $u_0^{(m)}$ and $v_0^{(m)}$ can be solved
explicitly: From $u_0^{(m+1)} = c_m u_0^{(m)}$, we have
\begin{equation}\label{eq:CQ-u0}
u_0^{(m+1)}=c_m c_{m-1}\ldots c_0;
\end{equation}
and from $v_0^{(m+1)} = c_m v_0^{(m)}$, we have
\begin{equation}\label{eq:CQ-v0}
\begin{split}
v_0^{(m+1)} 
&=-c_m c_{m-1}\ldots c_0\frac{\omega_0}{1+|\omega_0|^2}\\
&= -\frac{\omega_0}{1+|\omega_0|^2}u_0^{(m+1)}.
\end{split}
\end{equation}
Now, in order to determine the components 
$\chi^{(m+1)}_{0}$ and $\chi^{(m+1)}_{2m+3}$, based on the discussion above, it
is straightforward to setup the following system of equations
\begin{equation}\label{eq:chi-first-and-last}
\begin{split}
&\chi^{(m+1)}_{0}=\begin{pmatrix}
\vv{u}^{(m+1)}\\
-\vv{v}^{(m+1)}
\end{pmatrix}^{*}\cdot
\begin{pmatrix}
-h{\chi}^{(m+1)}_{2m+3}\breve{\vs{p}}^{(m+1)\ddagger}\\
-\breve{\vv{p}}^{(m+1)}+h{\chi}^{(m+1)}_{0}\breve{{\vs{p}}}^{(m+1)}
\end{pmatrix},\\
&\chi^{(m+1)}_{2m+3}=\begin{pmatrix}
\vv{v}^{(m+1)\ddagger}\\
\vv{u}^{(m+1)\ddagger}
\end{pmatrix}^{*}\cdot
\begin{pmatrix}
-h{\chi}^{(m+1)}_{2m+3}\breve{\vs{p}}^{(m+1)\ddagger}\\
-\breve{\vv{p}}^{(m+1)}+h{\chi}^{(m+1)}_{0}\breve{{\vs{p}}}^{(m+1)}
\end{pmatrix}.
\end{split}
\end{equation}
In order to simplify this linear system above, we compute the scalar
products using the relations~\eqref{eq:inner-bordering-basic} 
and~\eqref{eq:inner-bordering-vec} as follows:
\begin{equation}
\begin{split}
2h\vv{u}^{(m+1)*}\cdot\breve{\vs{p}}^{(m+1)\ddagger} 
&= c_m\beta_m-2hd_m\vv{v}^{(m)}\cdot\breve{\vs{p}}^{(m)*}\\
&= c_m\beta_m+d_m(1-u_0^{(m)}-{\omega}^*_0v_0^{(m)})\\
%&=\beta_mc_m(u_0^{(m)}+{\omega}^*_0v_0^{(m)})
&=\frac{c_mu_0^{(m)}}{1+|\omega_0|^2}\beta_m.
\end{split}
\end{equation}
Set $\lambda_m\equiv{u_0^{(m)}}/{(1+|\omega_0|^2)}\in\field{R}_+$. Next, we have
\begin{equation}
\begin{split}
2h\vv{v}^{(m+1)}\cdot\breve{{\vs{p}}}^{(m+1)*}
&=2hc_m\vv{v}^{(m)}\cdot\breve{{\vs{p}}}^{(m)*}+d^*_m\beta_m,\\
%&=c_m(u_0^{(m)}+\omega^*_0v^{(m)}_0)-1,
&=\frac{c_mu_0^{(m)}}{1+|\omega_0|^2}-1=\lambda_{m+1}-1.
\end{split}
\end{equation}
Using these results, the linear system~\eqref{eq:chi-first-and-last} can be stated
as
%\begin{widetext}
\begin{equation}
\begin{split}
&(1+\lambda_{m+1})\chi_0^{(m+1)}
+\beta_m\lambda_{m+1}\chi_{2m+3}^{(m+1)}
=-\frac{1}{h}(1-\lambda_{m+1}),\\
&\beta^*_m\lambda_{m+1}\chi_{0}^{(m+1)}
-(1+\lambda_{m+1})\chi_{2m+3}^{(m+1)}
=\frac{1}{h}\lambda_{m+1}\beta^*_m.
\end{split}
\end{equation}
%\end{widetext}
Note that if $\breve{\rho}(\zeta)$ decays at 
least as $\bigO{\zeta^{-2}}$, then $\breve{p}_0=0$ and $\omega_0=0$. 
Solving for the general case, we have
\begin{equation}
\int_{0}^{t_{m+1}}|q(T_+-s)|^2ds
\approx-2\chi_0^{(m+1)}
=\frac{2}{h\Delta_m}\left(1-c_m\lambda_{m}^2\right),
\end{equation}
and
\begin{equation}\label{eq:q-TIB}
q(T_+-t_{m+1})
\approx 2\chi_{2m+3}^{(m+1)*}=-\frac{4\beta_mc_m\lambda_{m}}{h\Delta_m},
\end{equation}
where
\begin{equation}
\begin{split}
\Delta_m 
&=\left[1+2c_m\lambda_{m}
+c_m\lambda_{m}^2\right]\\
&=\left[1+2c_mu_0^{(m)}+c_m(u_0^{(m)})^2\right]+\bigO{h^2}.
\end{split}
\end{equation}
The algorithm proposed by Belai~\et~\cite{BFPS2007,FBPS2015} uses the 
approximations $c_m\approx1$ and $u_0^{(m)}\approx1$ in~\eqref{eq:q-TIB} while maintaining
the accuracy of $\bigO{h^2}$; however, this step may lead to higher truncation 
error on account of the relation~\eqref{eq:CQ-u0} since
\begin{equation}
    u_0^{(m)}=1-\sum_{k=0}^{m-1}|\beta_m|^2+\ldots.
\end{equation}

\begin{figure*}[!tbh]
\centering
\includegraphics[scale=1]{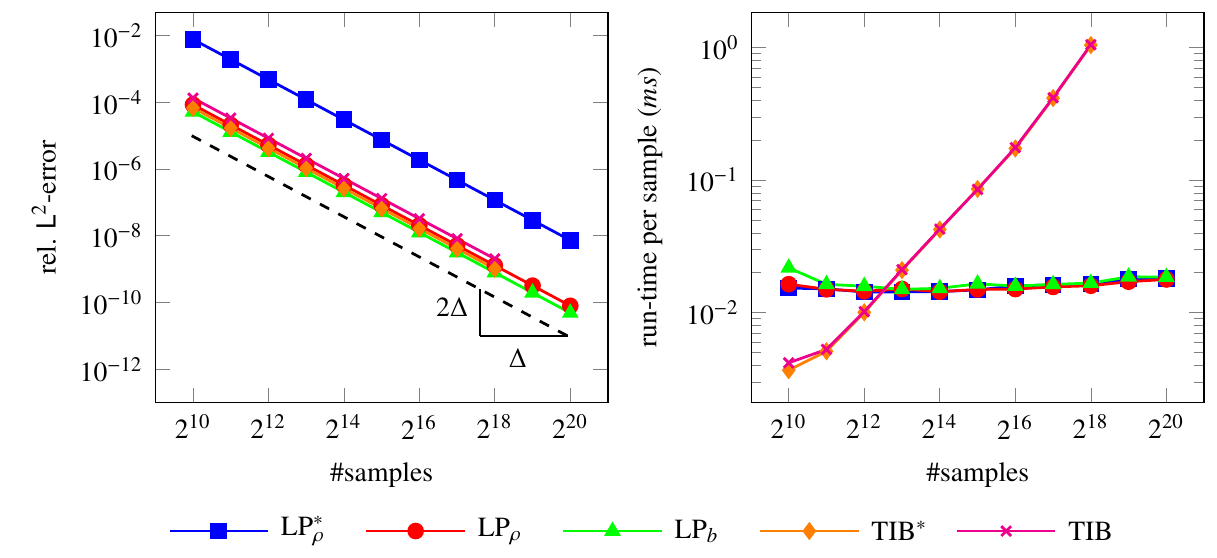}
\caption{\label{fig:LP-TIB-comp}The figure shows a comparison of the 
algorithms  LP$^*_{\rho}$, LP$_{\rho}$, LP$_b$, TIB$^*$ and TIB for the secant-hyperbolic
potential ($A_R=0.4$) with respect to convergence rate (left) and run-time per sample (right).}
\end{figure*}

\begin{figure*}[!th]
\centering
\def\scale{1}
\includegraphics[scale=\scale]{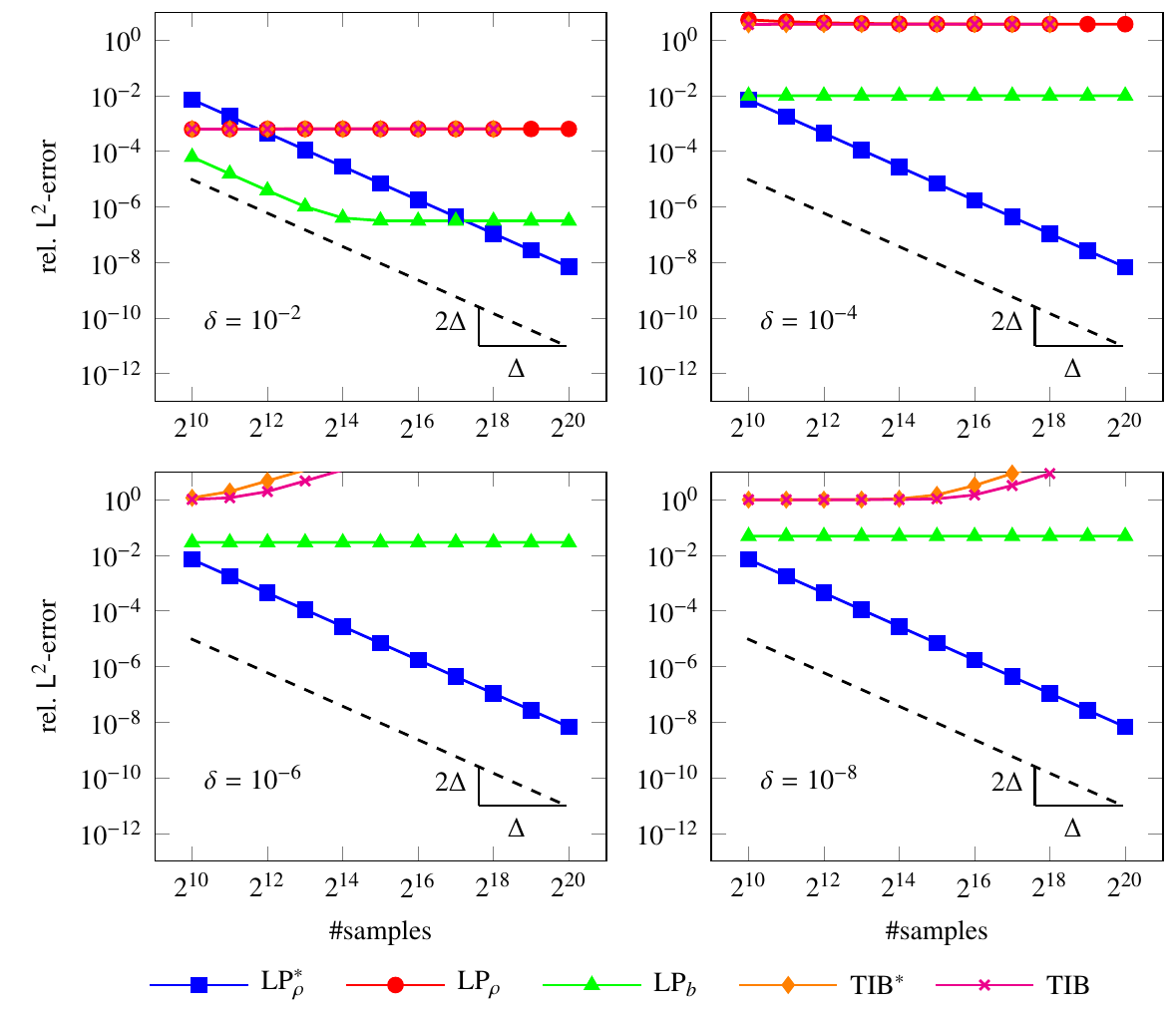}
\caption{\label{fig:Lubich} The figure shows the convergence behavior of the 
algorithms  LP$^*_{\rho}$, LP$_{\rho}$, LP$_b$, TIB$^*$ and TIB for the secant-hyperbolic
potential with $A_R=0.5-\delta$.}
\end{figure*}

%%%%%%%%%%%%%%%%%%%%%%%%%%%%%%%%%%%%%%%%%%%%%%%%%%%%%%%%%%%%%%%%%%%%%
\subsection{Discrete inverse scattering with the b-coefficient}
\label{sec:lp-b}
Let the $b$-coefficient be of exponential-type such that its Fourier-Laplace
transform, $\beta(\tau)$, is supported in $[-2T_+,2T_-]$. Let
$\breve{b}(\zeta)=b(\zeta)e^{2i\zeta T_+}$ so that
\begin{equation}\label{eq:integ-b}
\breve{b}(\zeta)=\int^{2W}_{0}\breve{\beta}(\tau)e^{i\zeta\tau}d\tau,
\end{equation}
where $W=T_++T_-$ and $\breve{\beta}(\tau)$ is the shifted version of
$\beta(\tau)$ with support in $[0,2W]$. We assume that 
$\breve{\beta}(\tau)$ has derivatives of order $>2$ in
$(0,2W)$ and they vanish at the end points so that the integral can be
approximated by the sum
\begin{equation}
\breve{b}(\zeta) 
\approx 2h\sum_{k=0}^{N-1}\breve{\beta}(2hk) [e^{2i\zeta h}]^k
=2h\sum_{k=0}^{N-1}\breve{\beta}(2hk)z^{2k}.
\end{equation}
If $\breve{\beta}(\tau)$ is smooth, then the approximation above achieves spectral
accuracy otherwise it is at least $\bigO{h^2}$ accurate.
Again, if $\breve{\beta}(\tau)$ cannot be evaluated exactly, we use 
the FFT algorithm to compute the integral
\begin{equation}\label{eq:integ-beta}
\breve{\beta}(\tau)=\frac{1}{2\pi}\int^{\infty}_{-\infty}\breve{b}(\xi)e^{-i\xi\tau}d\xi.
\end{equation}
The decay property of
$\breve{b}(\xi),\,\xi\in\field{R},$ decides 
the convergence of the discrete Fourier sum used in FFT so that it may 
be necessary to employ large number of samples in order to get
accurate results. 

The next step consists of constructing 
a polynomial approximation for
$a(\zeta)$ in $|z|<1$ (under the assumption that no bound
states are present). With a slight abuse of notation, let us denote the
polynomial approximation for $\breve{b}(\zeta)$ by $\breve{b}_N(z^2)$ or 
simply $\breve{b}(z^2)$\footnote{Note that the mapping $z=e^{i\xi h}$ plays no
role once $\breve{b}_N(z^2)$ is determined so that one can simply speak of the
variable $z$.}. Here, the relation~\cite{AKNS1974,AS1981}
$|a(\xi)|^2+|\breve{b}(\xi)|^2=1,\,\xi\in\field{R},$ allows us to set up a 
Riemann-Hilbert (RH) problem for a sectionally analytic function
\begin{equation}
\tilde{g}(z^2)=\begin{cases}
g(z^2) & |z|<1,\\
-{g}^*(1/z^{*2}) & |z|>1,
\end{cases}
\end{equation}
such that the jump condition is given by
\begin{equation}\label{eq:RH-circ}
\tilde{g}^{(-)}(z^2) - \tilde{g}^{(+)}(z^2) =
\log[1-|\breve{b}(z^2)|^2],\quad |z|=1,
\end{equation}
where $\tilde{g}^{(-)}(z^2)$ and $\tilde{g}^{(+)}(z^2)$ denotes the boundary values when
approaching the unit circle from $|z|<1$ and $|z|>1$, respectively. 
Note that for the jump function to be well-defined, we must have
$|\breve{b}(z^2)|<1$. Let the jump function on the right hand side of~\eqref{eq:RH-circ} be denoted by 
$f(z^2)$ which can be expanded as a Fourier series
\begin{equation}\label{eq:f-series}
f(z^2)=\sum_{k\in\field{Z}}f_kz^{2k},\quad |z|=1.
\end{equation}
Now, the solution to the RH problem can be stated using the Cauchy integral~\cite[Chap.~14]{Henrici1993} 
\begin{equation}
\tilde{g}(z^2) = \frac{1}{2\pi i}\oint_{|w|=1}\frac{f(w)}{z^2-w}dw.
\end{equation}
The function $g(z^2)$ analytic in $|z|<1$ then works out to be
\begin{equation}
g(z^2)=\sum_{k\in\field{Z}_+\cup\{0\}}f_kz^{2k},\quad |z|<1.
\end{equation}
Finally, $a_N(z^2)=\{\exp[g(z^2)]\}_{N}$ with $z=e^{i\zeta h}$ where
$\{\cdot\}_N$ denotes truncation after $N$ terms. The implementation of the 
procedure laid out above can be carried out using the FFT
algorithm, which involves computation of the coefficients $f_k$ and the 
exponentiation in the last step. Note that, in the computation of
$g(z^2)$, we discarded half of the coefficients; therefore, in the numerical implementation
it is necessary to work with at least $2N$ number of samples of
$f(z^2)$ in order to obtain $a_N(z^2)$ which is a polynomial of degree $N-1$.

Finally, the input to the layer-peeling algorithm can 
be taken to be
\begin{equation*}
P^{(N)}_1(z^2) = a_N(z^2),\quad
P^{(N)}_2(z^2) =\breve{b}_N(z^2).
\end{equation*}
For the TIB algorithm, we compute $\breve{\rho}_N=\breve{b}_N/a_N$ so that the discrete 
approximation to $p(\tau)$, which serves as the input, can be computed.

\section{Numerical Results}
\label{sec:results}
Our objective in this section is twofold: First to test the accuracy as well as complexity of the
proposed methods, second, to numerically verify some of the analytical results obtained thus
far. If the analytic solution is known, we quantify the error as 
\begin{equation}\label{eq:e_rel-q}
e_{\text{rel.}}={\|q^{(\text{num.})}-q\|_{\fs{L}^2}}/{\|q\|_{\fs{L}^2}},
\end{equation}
where $q^{(\text{num}.)}$ denotes the numerically computed potential and $q$ is
the exact potential. The integrals are evaluated numerically using the trapezoidal 
rule. When the exact solution is not known, a higher-order 
scheme such as the (exponential) $3$-step \emph{implicit Adams} 
method (IA$_3$)~\cite{V2018LPT,V2017BL}, which has an order of convergence $4$, i.e.,
$\bigO{N^{-4}}$, can be used to compute the accuracy of the numerical solution
by computing the nonlinear Fourier spectrum using $q^{(\text{num}.)}$ as the
scattering potential and compare it with the known nonlinear Fourier spectrum. The error 
computed in this manner does not represent the \emph{true} numerical error; therefore, the results in this 
case must be interpreted with caution.

\subsection{Truncated secant-hyperbolic potential}
\label{app:sech-trunc}
In order to devise a test of convergence for one-sided potentials, we derive
the scattering coefficients for a truncated secant-hyperbolic potential. To this
end, consider $q(t) = A \sech t$
where $A<0.5$ so that there are no bound states present. Let $s=\tfrac{1}{2}(1-\tanh
t)$, then the Jost solution $\vs{\phi}(t;\zeta)$ is given by~\cite{SY1974}
%\begin{align*}
%&\psi_1 = -\frac{Ae^{i\zeta t}}{\frac{1}{2}-i\zeta}
%\left(\frac{1}{2}\sech t\right){}_2F_1\left(1-A,1+A;\frac{3}{2}-i\zeta;s\right),\\
%&\psi_2 = e^{i\zeta t}{}_2F_1\left(A,-A;\frac{1}{2}-i\zeta;s\right),
%\end{align*}
%and
\begin{align*}
&\phi_1 = e^{-i\zeta t}{}_2F_1\biggl(A,-A;\frac{1}{2}-i\zeta;1-s\biggl),\\
&\phi_2 = -\frac{Ae^{-i\zeta t}}{\frac{1}{2}-i\zeta}
\left(\frac{1}{2}\sech t\right)
{}_2F_1\left(1-A,1+A;\frac{3}{2}-i\zeta;1-s\right),
\end{align*}
where ${}_2F_1$ denotes the hypergeometric
function~\cite{Olver:2010:NHMF}. The scattering coefficients are given by
\begin{align*}
a(\zeta) 
%&= {}_2F_1\left(A,-A;\frac{1}{2}-i\zeta;1\right)\\
         &=\frac{\Gamma\left(\frac{1}{2}-i\zeta\right)^2}
    {\Gamma\left(A+\frac{1}{2}-i\zeta\right)\Gamma\left(-A+\frac{1}{2}-i\zeta\right)},\quad\Im{\zeta}\geq0,\\
b(\zeta) &= -\frac{\sin(\pi A)}{\cosh(\pi\zeta)},\quad|\Im{\zeta}|<\frac{1}{2}. 
\end{align*}
Let $(-\infty,\,T]$ be the computational domain so that the truncation takes place at 
$t=T$ (with $s(T)=s_2$), the scattering coefficients of the left-sided potential are given by
\begin{align*}
&a^{(-)}(\zeta) 
= {}_2F_1\left(A,-A;\frac{1}{2}-i\zeta;1-s_2\right),\\
&\breve{b}^{(-)}(\zeta) 
= -\frac{A\sech
T}{2\left(\frac{1}{2}-i\zeta\right)}\,\cdot\,{}_2F_1\left(1-A,1+A;\frac{3}{2}-i\zeta;1-s_2\right)
\end{align*}
Note that $s(T)\approx e^{-2T}$, therefore, we may use the asymptotic form
of the hypergeometric functions to obtain the scattering coefficients. For
$a$-coefficient, the approximation works out to be
$a^{(-)}(\zeta)=a(\zeta)+\bigO{e^{-T(2\Im(\zeta)+1)}}$.
For the $b$-coefficient, we observe that 
\begin{equation}\label{eq:truncated-b}
\breve{b}^{(-)}(\zeta)
\sim Ae^{-T}\frac{a(\zeta)}{\frac{1}{2}+i\zeta} 
-\frac{\sin\pi A}{\cosh\pi\zeta}
e^{2i\zeta T}+\bigO{e^{-2T}},
\end{equation}
for $\zeta\neq i(n+\frac{1}{2}),\,n=0,1,\ldots$,
and otherwise
\begin{equation}
\breve{b}^{(-)}(\zeta)
\sim\begin{cases}
    -\frac{2T}{\pi}e^{-T}\sin\pi A\, &\zeta=i/2,\\
Ae^{-T}\frac{a(\zeta)}{\frac{1}{2}+i\zeta},&\zeta=3i/2,\,5i/2,\ldots\,
\end{cases}
\end{equation}
in the upper half of the complex plane. For $\Im{\zeta}=\eta>1$, the damping
term in~\eqref{eq:truncated-b}, $e^{2i\zeta T}=\bigO{e^{-2T}}$; therefore
\begin{equation}
\breve{b}^{(-)}(\zeta)
\sim
Ae^{-T}\frac{a(\zeta)}{\frac{1}{2}+i\zeta}+\bigO{e^{-2T}},\quad\Im{\zeta}>1.
\end{equation}
Let $\tilde{A}=A+1/2$. For large $\zeta$, we use the following asymptotic forms in order to evaluate
$a(\zeta)$~\cite{Olver:2010:NHMF}:
\begin{equation*}
\begin{split}
a(\zeta)&=\frac{\Gamma\left(-i\zeta+\frac{1}{2}\right)}{\Gamma\left(-i\zeta+\tilde{A}\right)}
\left[\frac{\Gamma\left(-i\zeta + 1-\tilde{A}\right)}
{\Gamma\left(-i\zeta+\frac{1}{2}\right)}\right]^{-1}\sim\left(\frac{z_2}{z_1}\right)^{\tilde{A}-1/2}\\
&\times\left[1+\frac{H_1(1/2,\tilde{A})}{z_1^2}+\frac{H_2(1/2,\tilde{A})}{z_1^4}
+\ldots\right]\\
&\times \left[1+\frac{H_1(1-\tilde{A},1/2)}{z_2^2}+\frac{H_2(1-\tilde{A}, 1/2)}{z_2^4}
    +\ldots\right]^{-1}
\end{split}
\end{equation*}
where $z_1 = -i\zeta+(2\tilde{A}-1)/4$ and $z_2 = -i\zeta+(1-2\tilde{A})/4$.
The other constants in the expansion are defined as
\begin{align*}
&H_1(r,s)= -\frac{1}{12}\binom{r-s}{2}(r-s+1),\\
&H_2(r,s)= \frac{1}{240}\binom{r-s}{4}[2(r-s+1)+5(r-s+1)^2].
\end{align*}
The truncation point $t=T$ is chosen large enough so that the reflection
coefficient $\rho^{(-)}(\xi)$ becomes effectively a Schwartz class function. This would
allow us to test for the following algorithms:
\begin{itemize}
\item LP$^*_{\rho}$: This refers to the layer-peeling algorithm where the input
is synthesized using complex integration in the upper half of the complex plane
as described in Sec.~\ref{sec:lp-rho}. For simplicity, we may also refer to it as
the Lubich method.
\item LP$_{\rho}$: This refers to the layer-peeling algorithm where the input
is synthesized using the discrete approximation of the Fourier integral as 
Sec.~\ref{sec:lp-rho}.
\item LP$_{b}$: This refers to the layer-peeling algorithm where the input
is synthesized using the $b$-coefficient as described in Sec.~\ref{sec:lp-b}.
\item TIB: This refers to the second order TIB algorithm reported in~\cite{BFPS2007,FBPS2015}.
\item TIB$^*$: This refers to the modified second order TIB algorithm presented
in Sec.~\ref{sec:modified-TIB}.
\end{itemize}
The over sampling factor for synthesizing the input is taken to be
$n_{\text{os}}=8$. This means that for Cauchy or the Fourier integral, the
number of samples of $\rho^{(-)}(\zeta)$ used is $M=8N$ where 
$N\in\{2^{10},2^{11},\ldots,2^{20}\}$.

For the first test, we set $A_R=0.4$. The results of the convergence and
complexity analysis are plotted in Fig.~\ref{fig:LP-TIB-comp}. The results show 
that all the methods exhibit a second order of convergence. The Lubich method
where complex integration is employed turns out to be the least accurate. The TIB
and TIB$^*$ show small difference in accuracy. The plot depicting the total
run-time per sample shows linear growth for TIB and TIB$^*$ with respect to
$\log N$ while the others exhibit a logarithmic growth. Therefore, the
layer-peeling algorithm is an order of magnitude faster than the integral method.

For the second test, we set $A_R=0.5-\delta$ where $\delta\in\{10^{-2},
10^{-4},10^{-6},10^{-8}\}$. From the analytic solution, it is evident that as the
value of $\delta$ decreases, a pole in $\field{C}_-$ approaches the real axis.
This makes it a challenging test case for methods that sample $\rho^{(-)}(\zeta)$ on
the real axis on account of the numerical ill-conditioning introduced by the
proximity of the pole to the real axis. The results of the convergence analysis are 
plotted in Fig.~\ref{fig:Lubich} where all methods tend to fail except for 
LP$^*_{\rho}$, the Lubich method. The unusual stability of this method is attributed 
to the fact that samples of $\rho^{(-)}(\zeta)$ is sought in $\field{C}_+$ 
avoiding any ill-conditioning introduced as a result of
the proximity of the pole in $\field{C}_-$ to the real axis.

\begin{figure*}[!t]
\centering
\def\scale{1}
\includegraphics[scale=\scale]{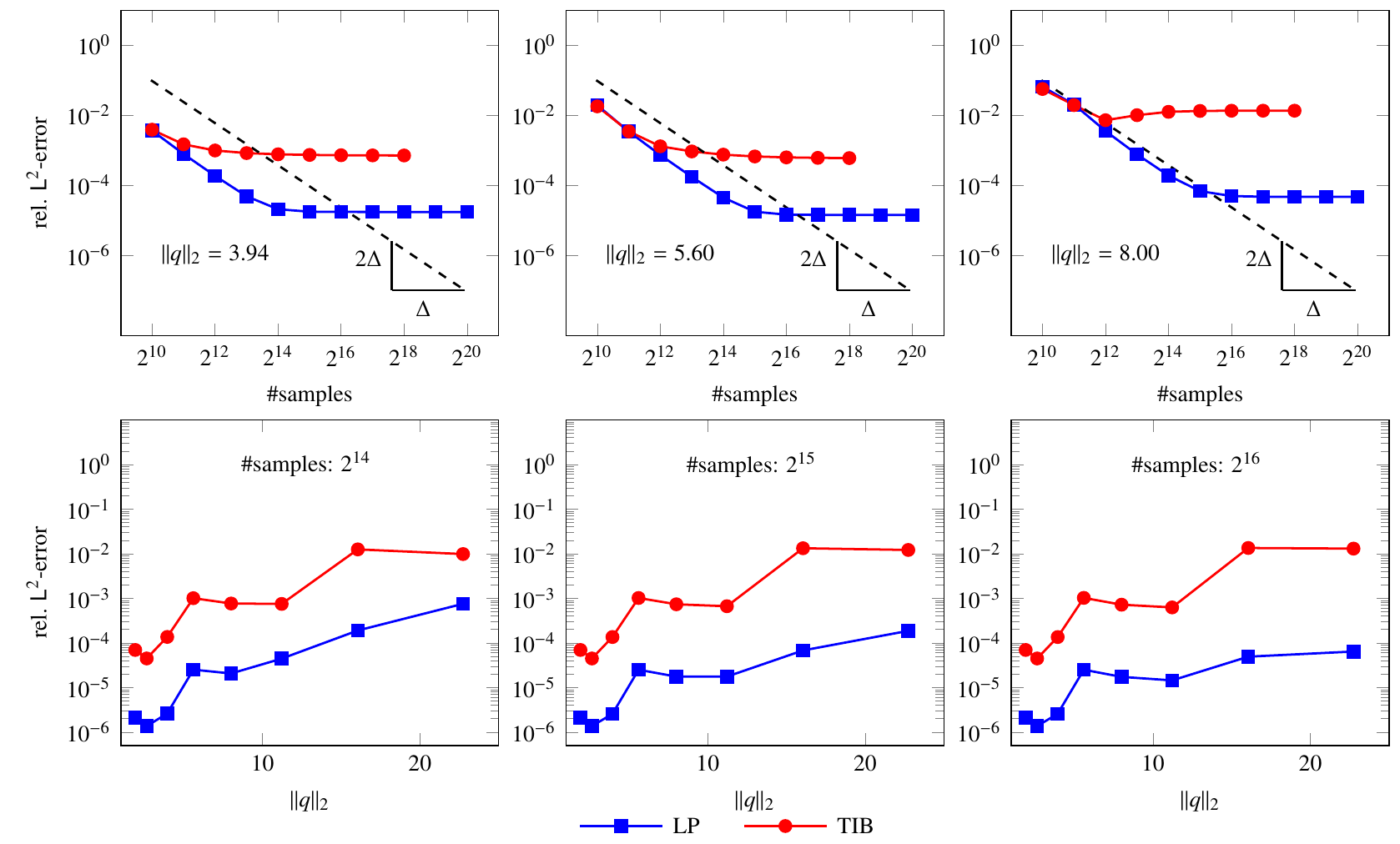}
%\caption{Raised-cosine filter.} 
\caption{\label{fig:convg-rc} The figure shows the error
analysis for the signal generated from the $b$-coefficient given 
by~\eqref{eq:RC-spec}. The error is quantified by~\eqref{eq:e_rel-rho}.}
\end{figure*}

\begin{figure*}[!t]
\centering
\includegraphics[scale=1]{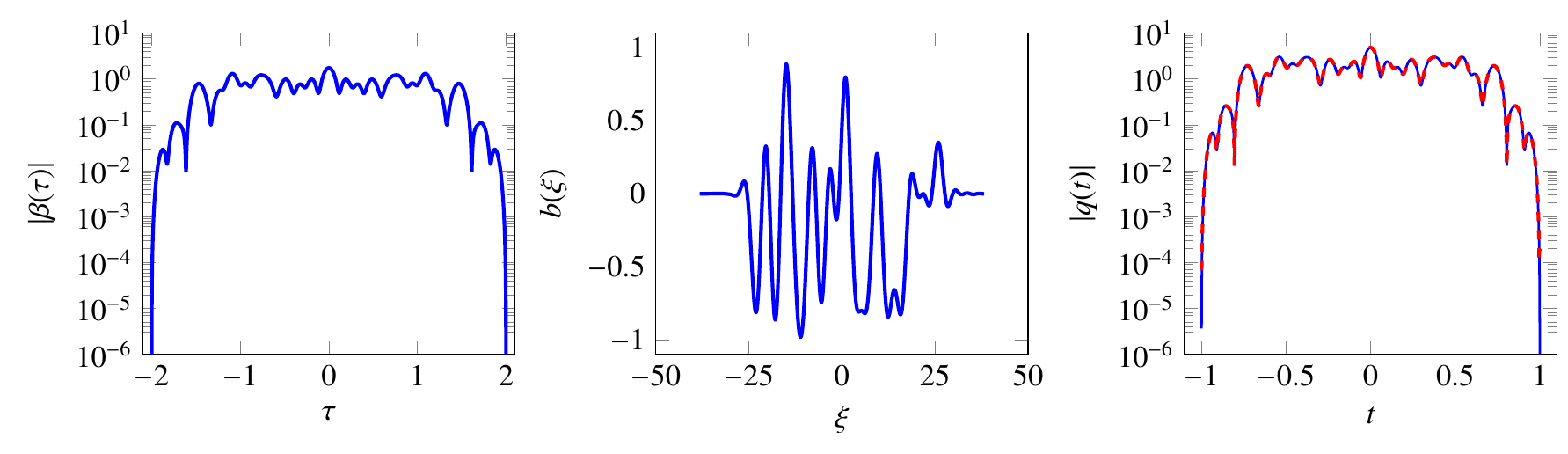}
\caption{\label{fig:ex_rc32}
The figure shows an example of synthesized
time-limited signal from a compactly supported $\beta(\tau)$ or a
$b$-coefficient that is of exponential type. Here $N_{\text{sym}}=32$ and
$\|q\|_2=2.64$. In the last plot the 
solid and the dashed lines correspond to the potential computed
using the LP and the TIB algorithm, respectively.%
}
\end{figure*}

\begin{figure*}[!t]
\centering
\includegraphics[scale=1]{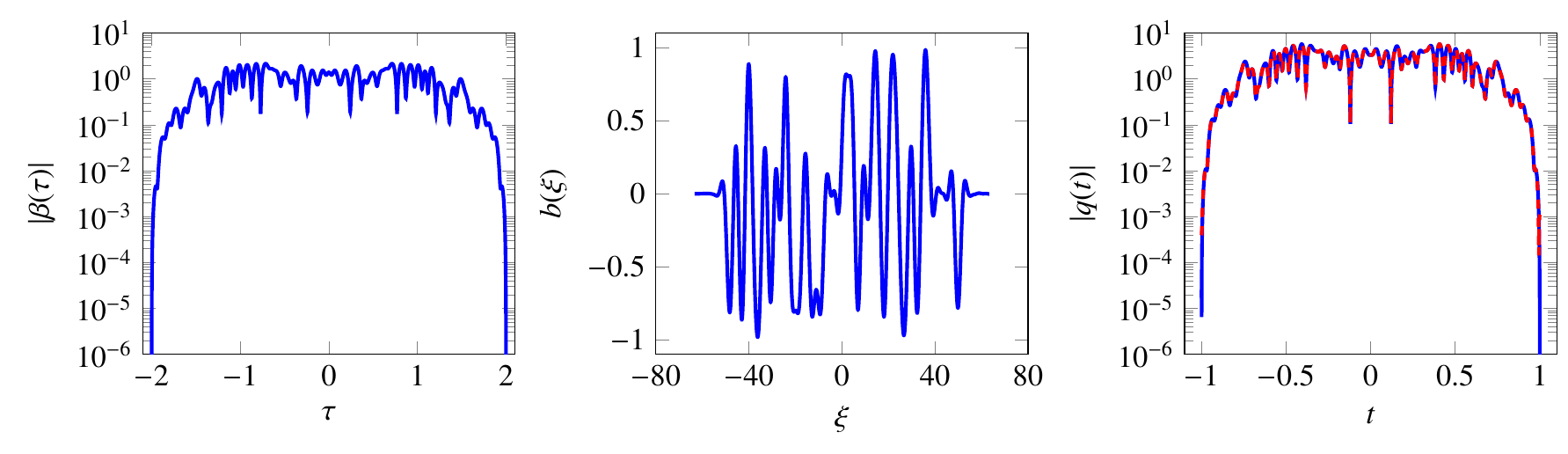}
\caption{\label{fig:ex_rc64}
The figure shows an example of synthesized
time-limited signal from a compactly supported $\beta(\tau)$ or a
$b$-coefficient that is of exponential type. Here $N_{\text{sym}}=64$ and
$\|q\|_2=3.94$. In the last plot the solid and the dashed lines correspond 
to the potential computed using the LP and the TIB algorithm, respectively.%
}
\end{figure*}

\begin{figure*}[!t]
\centering
\includegraphics[scale=1]{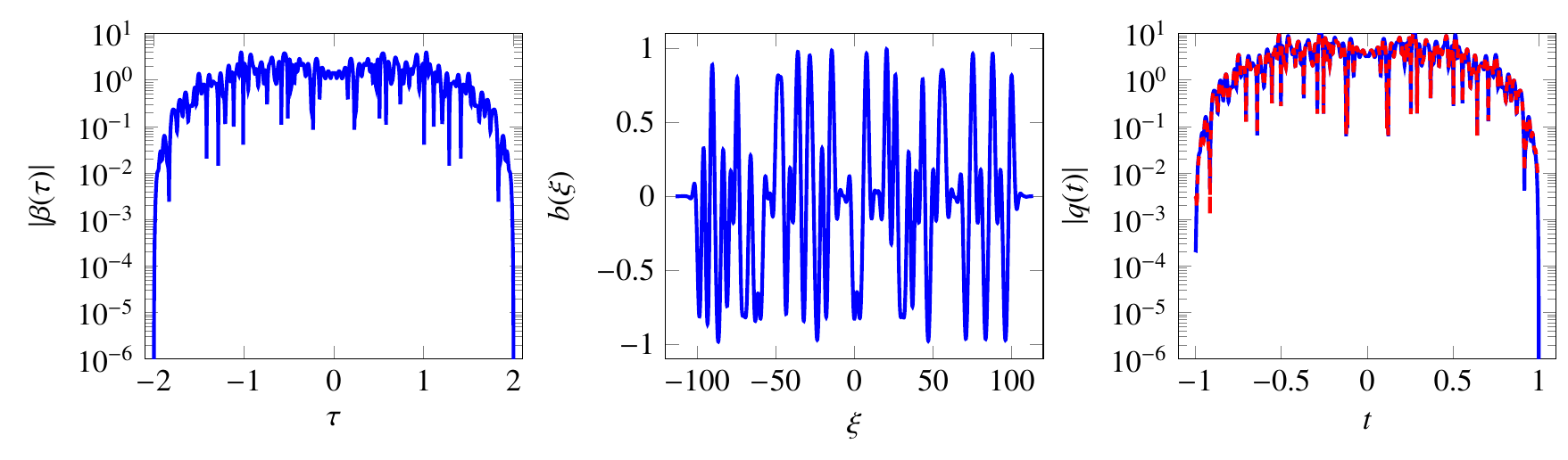}
\caption{\label{fig:ex_rc128}
The figure shows an example of synthesized
time-limited signal from a compactly supported $\beta(\tau)$ or a
$b$-coefficient that is of exponential type. Here $N_{\text{sym}}=128$ and
$\|q\|_2=5.6$. In the last plot the solid and the dashed lines correspond to 
the potential computed using the LP and the TIB algorithm, respectively.%
}
\end{figure*}

\subsection{Raised-cosine filter}
In this test case, we consider generation of time-limited signals using a
$b$-coefficient that is an entire function of exponential type. Let 
$\Omega_h = [-\pi/2h,\pi/2h]$, then the error is quantified by 
\begin{equation}\label{eq:e_rel-rho}
e_{\text{rel.}}=
{\|\rho^{(\text{num.})}-\rho\|_{\fs{L}^2(\Omega_h)}}/{\|\rho\|_{\fs{L}^2(\Omega_h)}},
\end{equation}
where the integrals are computed from $N$ equispaced samples in $\Omega_h$ using the
trapezoidal rule. The quantity $\rho^{(\text{num.})}$ is computed using the (exponential) IA$_3$.

Let us define
\begin{equation}
f_{\text{rc}}(\xi) = A_{\text{rc}}\sinc\left[2(1-\chi)\xi T\right]
\frac{\cos\left(2\chi\xi T\right)}{1-\left(\frac{4\chi\xi T}{\pi}\right)^2},
\end{equation}
which is a representation of the raised-cosine filter (in the ``time'' domain).
We set $A_{\text{rc}}=0.489$, $\chi=0.3$ and $T=1$ so that the support of the potential works out to be
$[-T,T]$. We then synthesize $b$ as follows:
\begin{equation}\label{eq:RC-spec}
b(\xi)=\sum_{n\in J}c_nf_{\text{rc}}(\xi-\xi_n)
\end{equation}
where $c_n\in\{-1,+1\}$ is chosen randomly and $\xi_n=(2\pi/4T)n$. The index set
is defined as $J=\{-N_{\text{sym}}/2,\ldots,N_{\text{sym}}/2-1\}$ 
with $N_{\text{sym}}\in\{16,32,\ldots,2048\}$. Here, we would conduct numerical
experiments with varying number of samples and fixed $N_{\text{sym}}$ or with
different values $N_{\text{sym}}$ and fixed $N$. In these experiments, we compare the
LP algorithm to the TIB algorithm presented in Sec.~\ref{sec:modified-TIB}.
The input to both of these algorithms is synthesized from the $b$-coefficient as discussed in
Sec.~\ref{sec:lp-b}.

The results of the error analysis are displayed in Fig.~\ref{fig:convg-rc}. The
plots in the top row indicate that the error as a function of $N$ shows a second
order slope before it plateaus after a certain value. This plateauing does 
\emph{not} necessarily indicate the lack of convergence because the error does not represent 
the true numerical error. Further, it is evident from these plots that the error for TIB
algorithm plateaus much earlier than that of the LP algorithm; therefore, the LP
algorithm is superior to TIB in terms of accuracy. The same conclusion about the
accuracy of LP as compared to TIB can be drawn from the plots in the bottom row 
where the error is plotted as a function of the $\fs{L}^2$-norm of the potential 
which is the result of varying $N_{\text{sym}}$.

The specific examples of the signals generated with $N=2^{12}$ for $N_{\text{sym}}=32,64,128$
are displayed in figures~\ref{fig:ex_rc32},~\ref{fig:ex_rc64}
and~\ref{fig:ex_rc128}, respectively. The generated signal shows the correct
decay behavior such that the compact support of the potential can be inferred
easily.

\begin{figure}[!t]
\centering
\includegraphics[scale=1]{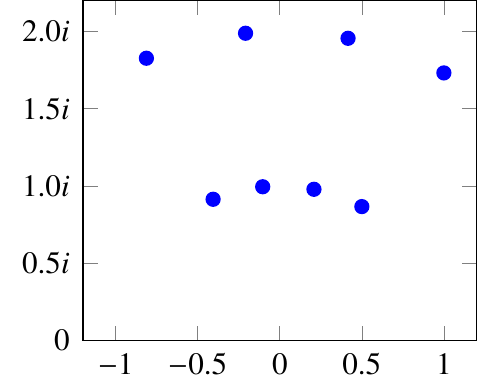}
\caption{\label{fig:eigs}The figure shows the eigenvalues to be `added' to the
compactly supported potential generated using the $b$-coefficient.}
\end{figure}

\begin{figure*}[!t]
\centering
\subfloat[$N=2^{12}$]{\includegraphics[scale=1]{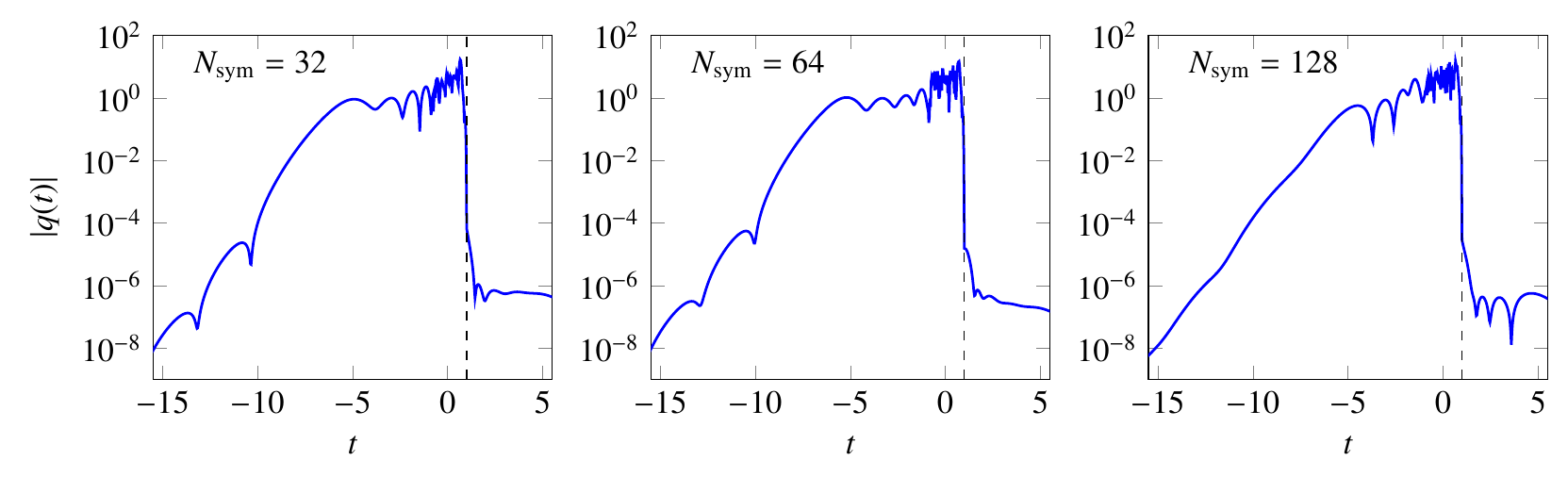}}\\
\subfloat[$N=2^{14}$]{\includegraphics[scale=1]{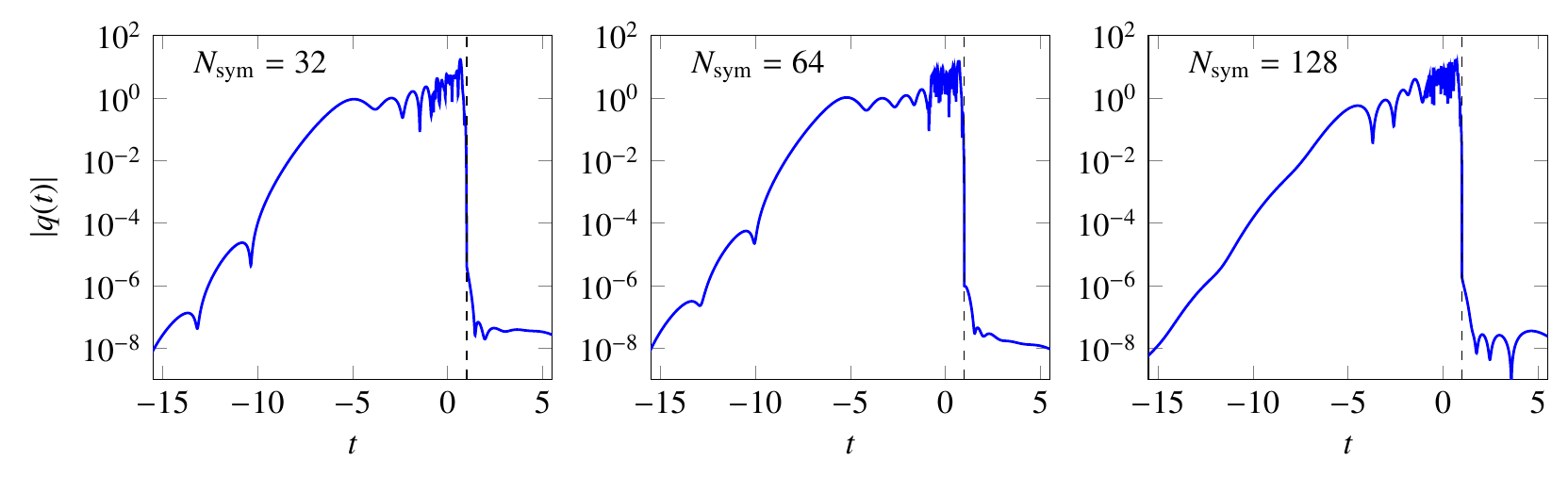}}
\caption{\label{fig:bsex_rc}The figure shows the signal generated using the
$b$-coefficient given by~\eqref{eq:RC-spec} and the eigenvalues shown in
Fig.~\ref{fig:eigs}. The norming constants are chosen such that the signal is
supported in $(-\infty,1]$. It is evident that increasing the number of samples
($N$) tends to make the
tail beyond $t=1$ (marked by the vertical dashed line) less significant.}
\end{figure*}

The final test that we consider has to do with the addition of bound states to 
compactly supported potentials. The signal is generated using the fast Darboux
transformation algorithm (FDT) reported in~\cite{V2017INFT1}. Define
\begin{equation}
\vs{\theta} =(\pi/3, 13\pi/30, 8\pi/15, 19\pi/30)
\end{equation}
and let the set of eigenvalues be $\{\exp(i\vs{\theta}),2\exp(i\vs{\theta})\}$
(see Fig.~\ref{fig:eigs}). The norming constants are chosen such that the
resulting potential is supported on $(-\infty,1]$, i.e., $b_k=b(\zeta_k)$. The 
specific examples of the signals are generated with 
$N=2^{12},2^{14}$ for $N_{\text{sym}}=32,64,128$ are displayed in 
Fig.~\ref{fig:bsex_rc}. The generated signal shows a non-zero
tail beyond $t=1$ which can be attributed to the numerical errors. The
tail, however, becomes less significant as the number of samples is increased
from $N=2^{12}$ to $N=2^{14}$.

\section{Conclusion}\label{sec:conclusion}
To conclude, we have presented some useful regularity properties of the nonlinear
Fourier spectrum under various assumptions on the scattering potential. These results 
bear a close resemblance to the Paley-Wiener theorems for Fourier-Laplace transform. We 
have further provided the necessary and sufficient 
conditions for the nonlinear Fourier spectrum such that the corresponding
scattering potential has a prescribed support. In addition, we also analyzed
how to exploit the regularity properties of the scattering coefficients to
improve the numerical conditioning of the differential approach of inverse
scattering. For the sake of comparison, we considered the integral approach 
originally presented in~\cite{BFPS2007,FBPS2015} and provided the correct 
derivation of the so called T\"optlitz inner-bordering scheme. The comparative 
study of the two approaches conduted in the article clearly reveals that the 
differential approach is more accurate while admittedly
being faster by an order of magnitude for the class of nonlinear Fourier
spectra considered in this article.

\appendix
\section{Proof of Lemma~\ref{lemma:RL-compact}}\label{app:RL-compact}
%\begin{proof}
The proof is almost similar to that of the Riemann-Lebesgue 
lemma~\cite[Chap.~13]{Jones2001}. First, we note that
\[
|\Phi_2(t;\zeta)|\leq\int_{-T_-}^{t}|q(y)|e^{-2\eta(t-y)}dy\leq
\max\left(1,e^{-2\eta W}\right)\|q\|_{\fs{L}^1},
\]
where $W=T_-+T_+$. Let $\tau=\pi\xi/(2|\xi|^2)$ and consider the case
$\eta=0$: From the relation
\begin{align*}
&\int_{-\infty}^{t}r(y+\tau)e^{2i\xi(t-y)}dy=-\int_{-\infty}^{t+\tau}r(u)e^{2i\xi(t-u)}du\\
&=-\Phi_2(t;\xi)-\int_{t}^{t+\tau}r(u)e^{2i\xi(t-u)}du,
\end{align*}
we may write
\begin{equation}\label{eq:phi-estimate}
2|\Phi_2(t;\xi)|\leq\int_{-\infty}^{T_+}|q(y)-q(y+\tau)|dy\\+\int_{t}^{t+\tau}|q(y)|dy.
\end{equation}
Now, as $|\xi|\rightarrow\infty$, we
have $\tau\rightarrow0$ so that, using the continuity of translation for
$\fs{L}^1$-functions~\cite[Chap.~7]{Jones2001}, we conclude that the first term
on the right hand side of above tends to zero. The second term can be shown to approach the
limit $0$ uniformly as $\tau\rightarrow0$ on account of the theorem on absolute
continuity~\cite[Chap.~6]{Jones2001}. Therefore, we conclude $|\Phi_2(t;\xi)|\rightarrow0$ 
uniformly in $\Omega$ as $|\xi|\rightarrow\infty$. For fixed $\eta>0$, we have
\begin{multline*}\label{eq:phi-estimate-eta}
|\Phi_2(t;\xi+i\eta)|\leq\frac{1}{1+e^{-2\eta\tau}}\int_{-\infty}^{T_+}|q(y)-q(y+\tau)|dy\\
+\left(\frac{1-e^{-2\eta\tau}}{1+e^{-2\eta\tau}}\right)\|q\|_{\fs{L}^1}
+\frac{e^{-2\eta\tau}}{1+e^{-2\eta\tau}}\int_{t}^{t+\tau}|q(y)|dy.
\end{multline*}
Therefore, we conclude that, for fixed $\eta\geq0$, $|\Phi_2(t;\xi+i\eta)|\rightarrow0$
uniformly in $\Omega$ as $|\xi|\rightarrow\infty$.
 
Now, consider $\zeta=\xi-i\eta$ where $\eta>0$ is fixed such
that $\eta<\infty$. It is straightforward to show that
\begin{multline*}
|\Phi_2(t;\xi-i\eta)|
\leq\frac{e^{2\eta W}}{1+e^{2\eta\tau}}
\int_{-\infty}^{T_+}|q(y)-q(y+\tau)|dy\\
+\left(\frac{1-e^{-2\eta\tau}}{1+e^{-2\eta\tau}}\right)e^{2\eta W}\|q\|_{\fs{L}^1}
+\frac{e^{2\eta W}}{1+e^{2\eta\tau}}\int_{t}^{t+\tau}|q(y)|dy.
\end{multline*}
Therefore, we can again show that $|\Phi_2(t;\xi-i\eta)|\rightarrow0$ 
uniformly for $t\in\Omega$ as $|\xi|\rightarrow\infty$.

Finally, combining the results obtained above, we have 
\[
\lim_{|\xi|\rightarrow\infty}\|\Phi_2(t;\xi+i\eta)\|_{\fs{L}^{\infty}(\Omega)}=0,
\] 
for fixed $\eta\in\field{R}$ and $\eta<\infty$.
%\end{proof}

\section{Proof of Lemma~\ref{lemma:inverse-FL}}\label{app:inverse-FL}
%\begin{proof}[Proof of Lemma~\ref{lemma:inverse-FL}]

In the following, we assume that the functions $F$ and $G$ are redefined appropriately 
so that factors of the form $e^{-2i\zeta T}$ are no longer present. Therefore,
we prove the lemma for the case $T=0$.

Observing that $F(\xi+i\eta)$, as a function of $\xi\in\field{R}$, is in
$\fs{L}^2$ for $\eta\geq0$, it is evident that
$f(\tau)\in\fs{L}^2$ and it is supported in $\Omega=[0,\infty)$. Using 
Cauchy's estimate taken on the disc
$\{\zeta'\in\field{C}_+:|\zeta'-(\xi+i\eta)|\leq\eta/2,\,\eta>0\}$, we have
\[
|F'(\xi+i\eta)|\leq\frac{2C}{\eta(1+|\xi+i\eta|-\eta/2)},\quad\eta>0,
\]
therefore, $F'(\xi+i\eta)$, as a function of $\xi\in\field{R}$, is in
$\fs{L}^2$ for $\eta>0$. From Paley-Wiener theorem, we know
that the boundary function $\lim_{\eta\rightarrow0} F'(\xi+i\eta)$ exists 
(which coincides with $F'(\xi)$ in this case) and it belongs to $\fs{L}^2$. From 
Cauchy-Schwartz inequality~\cite[Chap.~10]{Jones2001}, we have
\begin{align*}
    \biggl|\int_{\Omega}\sqrt{1+\tau^2}f(\tau)\cdot\frac{d\tau}{\sqrt{1+\tau^2}}\biggl|^2
&\leq\frac{\pi}{2}\int_{\Omega}(1+\tau^2)|f(\tau)|^2d\tau,
\end{align*}
so that $\|f\|_{1}\leq({1}/{4})(\|F(\xi)\|_2^2+\|F'(\xi)\|_2^2)$. The last step 
follows from Plancherel's theorem~\cite[Chap.~13]{Jones2001}. Therefore, we conclude
that $f(\tau)\in\fs{L}^1\cap\fs{L}^2$. 
    
Next, under the assumption of part (b), we would like to show that $f(\tau)$ is absolutely
continuous in $[0,\infty)$, i.e., there
    exists a complex-valued function $f^{(1)}(\tau)$ such that 
\begin{equation}\label{eq:absolute-continuity}
    f(\tau)=f(0+)+\int_0^{\tau}f^{(1)}(\tau')d\tau',\quad \tau\in\Omega.
\end{equation}
Let us observe the property
\begin{equation*}
\int^{\infty}_0\left[\int_0^{\tau}f^{(1)}(\tau')d\tau'\right]e^{i\zeta \tau}d\tau
=-\frac{1}{i\zeta}\int^{\infty}_0f^{(1)}(\tau)e^{i\zeta \tau}d\tau.
\end{equation*}
Note that $g(\tau)\in\fs{L}^1\cap\fs{L}^2$ since
$G(\zeta)$ satisfies the same kind of estimate as that of $F(\zeta)$. The
relation~\eqref{eq:absolute-continuity} then follows by noting that
$G(\zeta)/(-i\zeta)$ is the Fourier-Laplace transform of
$\int_0^{\tau}g(\tau')d\tau'$ and ${G(\zeta)}/{i\zeta}=F(\zeta)-{\mu}/{i\zeta}$;
therefore, $f^{(1)}(\tau)=g(\tau)$. Absolute continuity 
of $f(\tau)$ also implies continuity of $f(\tau)$ over $[0,\infty)$; consequently, the limit $f(0-)$
exists and equals $\mu$. Finally, using the Lebesgue's 
theorem~\cite[Chap.~16]{Jones2001} on differentiation, we
have $\partial_{\tau}f(\tau)=g(\tau)$ almost everywhere. It also  follows that 
$|f(\tau)|\leq |\mu|+\|g\|_{1}$, i.e., $f(\tau)$ is
bounded on $[0,\infty)$. 

%\bibliography{TL1}

\begin{thebibliography}{30}%
\makeatletter
\providecommand \@ifxundefined [1]{%
 \@ifx{#1\undefined}
}%
\providecommand \@ifnum [1]{%
 \ifnum #1\expandafter \@firstoftwo
 \else \expandafter \@secondoftwo
 \fi
}%
\providecommand \@ifx [1]{%
 \ifx #1\expandafter \@firstoftwo
 \else \expandafter \@secondoftwo
 \fi
}%
\providecommand \natexlab [1]{#1}%
\providecommand \enquote  [1]{``#1''}%
\providecommand \bibnamefont  [1]{#1}%
\providecommand \bibfnamefont [1]{#1}%
\providecommand \citenamefont [1]{#1}%
\providecommand \href@noop [0]{\@secondoftwo}%
\providecommand \href [0]{\begingroup \@sanitize@url \@href}%
\providecommand \@href[1]{\@@startlink{#1}\@@href}%
\providecommand \@@href[1]{\endgroup#1\@@endlink}%
\providecommand \@sanitize@url [0]{\catcode `\\12\catcode `\$12\catcode
  `\&12\catcode `\#12\catcode `\^12\catcode `\_12\catcode `\%12\relax}%
\providecommand \@@startlink[1]{}%
\providecommand \@@endlink[0]{}%
\providecommand \url  [0]{\begingroup\@sanitize@url \@url }%
\providecommand \@url [1]{\endgroup\@href {#1}{\urlprefix }}%
\providecommand \urlprefix  [0]{URL }%
\providecommand \Eprint [0]{\href }%
\providecommand \doibase [0]{http://dx.doi.org/}%
\providecommand \selectlanguage [0]{\@gobble}%
\providecommand \bibinfo  [0]{\@secondoftwo}%
\providecommand \bibfield  [0]{\@secondoftwo}%
\providecommand \translation [1]{[#1]}%
\providecommand \BibitemOpen [0]{}%
\providecommand \bibitemStop [0]{}%
\providecommand \bibitemNoStop [0]{.\EOS\space}%
\providecommand \EOS [0]{\spacefactor3000\relax}%
\providecommand \BibitemShut  [1]{\csname bibitem#1\endcsname}%
\let\auto@bib@innerbib\@empty
%</preamble>
\bibitem [{\citenamefont {Zakharov}\ and\ \citenamefont
  {Shabat}(1972)}]{ZS1972}%
  \BibitemOpen
  \bibfield  {author} {\bibinfo {author} {\bibfnamefont {V.~E.}\ \bibnamefont
  {Zakharov}}\ and\ \bibinfo {author} {\bibfnamefont {A.~B.}\ \bibnamefont
  {Shabat}},\ }\href@noop {} {\bibfield  {journal} {\bibinfo  {journal} {Sov.
  Phys. JETP}\ }\textbf {\bibinfo {volume} {34}},\ \bibinfo {pages} {62}
  (\bibinfo {year} {1972})}\BibitemShut {NoStop}%
\bibitem [{\citenamefont {Ablowitz}\ \emph {et~al.}(1974)\citenamefont
  {Ablowitz}, \citenamefont {Kaup}, \citenamefont {Newell},\ and\ \citenamefont
  {Segur}}]{AKNS1974}%
  \BibitemOpen
  \bibfield  {author} {\bibinfo {author} {\bibfnamefont {M.~J.}\ \bibnamefont
  {Ablowitz}}, \bibinfo {author} {\bibfnamefont {D.~J.}\ \bibnamefont {Kaup}},
  \bibinfo {author} {\bibfnamefont {A.~C.}\ \bibnamefont {Newell}}, \ and\
  \bibinfo {author} {\bibfnamefont {H.}~\bibnamefont {Segur}},\ }\href
  {\doibase 10.1002/sapm1974534249} {\bibfield  {journal} {\bibinfo  {journal}
  {Stud. Appl. Math.}\ }\textbf {\bibinfo {volume} {53}},\ \bibinfo {pages}
  {249} (\bibinfo {year} {1974})}\BibitemShut {NoStop}%
\bibitem [{\citenamefont {Ablowitz}\ and\ \citenamefont
  {Segur}(1981)}]{AS1981}%
  \BibitemOpen
  \bibfield  {author} {\bibinfo {author} {\bibfnamefont {M.}~\bibnamefont
  {Ablowitz}}\ and\ \bibinfo {author} {\bibfnamefont {H.}~\bibnamefont
  {Segur}},\ }\href {\doibase 10.1137/1.9781611970883} {\emph {\bibinfo {title}
  {Solitons and the Inverse Scattering Transform}}}\ (\bibinfo  {publisher}
  {Society for Industrial and Applied Mathematics},\ \bibinfo {address}
  {Philadelphia},\ \bibinfo {year} {1981})\BibitemShut {NoStop}%
\bibitem [{\citenamefont {Kodama}\ and\ \citenamefont
  {Hasegawa}(1987)}]{HK1987}%
  \BibitemOpen
  \bibfield  {author} {\bibinfo {author} {\bibfnamefont {Y.}~\bibnamefont
  {Kodama}}\ and\ \bibinfo {author} {\bibfnamefont {A.}~\bibnamefont
  {Hasegawa}},\ }\href {\doibase 10.1109/JQE.1987.1073392} {\bibfield
  {journal} {\bibinfo  {journal} {IEEE J. Quantum Electron.}\ }\textbf
  {\bibinfo {volume} {23}},\ \bibinfo {pages} {510} (\bibinfo {year}
  {1987})}\BibitemShut {NoStop}%
\bibitem [{\citenamefont {Agrawal}(2013)}]{Agrawal2013}%
  \BibitemOpen
  \bibfield  {author} {\bibinfo {author} {\bibfnamefont {G.~P.}\ \bibnamefont
  {Agrawal}},\ }\href@noop {} {\emph {\bibinfo {title} {Nonlinear Fiber
  Optics}}},\ \bibinfo {edition} {3rd}\ ed.,\ Optics and Photonics\ (\bibinfo
  {publisher} {Academic Press},\ \bibinfo {address} {New York},\ \bibinfo
  {year} {2013})\BibitemShut {NoStop}%
\bibitem [{\citenamefont {Turitsyn}\ \emph {et~al.}(2017)\citenamefont
  {Turitsyn}, \citenamefont {Prilepsky}, \citenamefont {Le}, \citenamefont
  {Wahls}, \citenamefont {Frumin}, \citenamefont {Kamalian},\ and\
  \citenamefont {Derevyanko}}]{TPLWFK2017}%
  \BibitemOpen
  \bibfield  {author} {\bibinfo {author} {\bibfnamefont {S.~K.}\ \bibnamefont
  {Turitsyn}}, \bibinfo {author} {\bibfnamefont {J.~E.}\ \bibnamefont
  {Prilepsky}}, \bibinfo {author} {\bibfnamefont {S.~T.}\ \bibnamefont {Le}},
  \bibinfo {author} {\bibfnamefont {S.}~\bibnamefont {Wahls}}, \bibinfo
  {author} {\bibfnamefont {L.~L.}\ \bibnamefont {Frumin}}, \bibinfo {author}
  {\bibfnamefont {M.}~\bibnamefont {Kamalian}}, \ and\ \bibinfo {author}
  {\bibfnamefont {S.~A.}\ \bibnamefont {Derevyanko}},\ }\href {\doibase
  10.1364/OPTICA.4.000307} {\bibfield  {journal} {\bibinfo  {journal} {Optica}\
  }\textbf {\bibinfo {volume} {4}},\ \bibinfo {pages} {307} (\bibinfo {year}
  {2017})}\BibitemShut {NoStop}%
\bibitem [{\citenamefont {Feced}\ and\ \citenamefont {Zervas}(2000)}]{FZ2000}%
  \BibitemOpen
  \bibfield  {author} {\bibinfo {author} {\bibfnamefont {R.}~\bibnamefont
  {Feced}}\ and\ \bibinfo {author} {\bibfnamefont {M.~N.}\ \bibnamefont
  {Zervas}},\ }\href {\doibase 10.1364/JOSAA.17.001FZ2000573} {\bibfield
  {journal} {\bibinfo  {journal} {J. Opt. Soc. Am. A}\ }\textbf {\bibinfo
  {volume} {17}},\ \bibinfo {pages} {1573} (\bibinfo {year}
  {2000})}\BibitemShut {NoStop}%
\bibitem [{\citenamefont {Brenne}\ and\ \citenamefont {Skaar}(2003)}]{BS2003}%
  \BibitemOpen
  \bibfield  {author} {\bibinfo {author} {\bibfnamefont {J.~K.}\ \bibnamefont
  {Brenne}}\ and\ \bibinfo {author} {\bibfnamefont {J.}~\bibnamefont {Skaar}},\
  }\href {\doibase 10.1109/JLT.2003.808648} {\bibfield  {journal} {\bibinfo
  {journal} {J. Lightwave Technol.}\ }\textbf {\bibinfo {volume} {21}},\
  \bibinfo {pages} {254} (\bibinfo {year} {2003})}\BibitemShut {NoStop}%
\bibitem [{\citenamefont {Vaibhav}(2018{\natexlab{a}})}]{V2018CNSNS}%
  \BibitemOpen
  \bibfield  {author} {\bibinfo {author} {\bibfnamefont {V.}~\bibnamefont
  {Vaibhav}},\ }\href {\doibase 10.1016/j.cnsns.2018.01.017} {\bibfield
  {journal} {\bibinfo  {journal} {Commun. Nonlinear Sci. Numer. Simul.}\
  }\textbf {\bibinfo {volume} {61}},\ \bibinfo {pages} {22} (\bibinfo {year}
  {2018}{\natexlab{a}})}\BibitemShut {NoStop}%
\bibitem [{\citenamefont {Lamb}(1980)}]{Lamb1980}%
  \BibitemOpen
  \bibfield  {author} {\bibinfo {author} {\bibfnamefont {G.~L.}\ \bibnamefont
  {Lamb}},\ }\href@noop {} {\emph {\bibinfo {title} {Elements of soliton
  theory}}},\ Pure and applied mathematics\ (\bibinfo  {publisher} {John Wiley
  \& Sons, Inc.},\ \bibinfo {address} {New York},\ \bibinfo {year}
  {1980})\BibitemShut {NoStop}%
\bibitem [{\citenamefont {Rourke}\ and\ \citenamefont {Morris}(1992)}]{RM1992}%
  \BibitemOpen
  \bibfield  {author} {\bibinfo {author} {\bibfnamefont {D.~E.}\ \bibnamefont
  {Rourke}}\ and\ \bibinfo {author} {\bibfnamefont {P.~G.}\ \bibnamefont
  {Morris}},\ }\href {\doibase 10.1103/PhysRevA.46.3631} {\bibfield  {journal}
  {\bibinfo  {journal} {Phys. Rev. A}\ }\textbf {\bibinfo {volume} {46}},\
  \bibinfo {pages} {3631} (\bibinfo {year} {1992})}\BibitemShut {NoStop}%
\bibitem [{\citenamefont {Rourke}\ and\ \citenamefont
  {Saunders}(1994)}]{RS1994}%
  \BibitemOpen
  \bibfield  {author} {\bibinfo {author} {\bibfnamefont {D.~E.}\ \bibnamefont
  {Rourke}}\ and\ \bibinfo {author} {\bibfnamefont {J.~K.}\ \bibnamefont
  {Saunders}},\ }\href {\doibase 10.1063/1.530617} {\bibfield  {journal}
  {\bibinfo  {journal} {J. Math. Phys.}\ }\textbf {\bibinfo {volume} {35}},\
  \bibinfo {pages} {848} (\bibinfo {year} {1994})}\BibitemShut {NoStop}%
\bibitem [{\citenamefont {Steudel}\ and\ \citenamefont {Kaup}(2008)}]{SK2008}%
  \BibitemOpen
  \bibfield  {author} {\bibinfo {author} {\bibfnamefont {H.}~\bibnamefont
  {Steudel}}\ and\ \bibinfo {author} {\bibfnamefont {D.~J.}\ \bibnamefont
  {Kaup}},\ }\href {\doibase 10.1088/0266-5611/24/2/025015} {\bibfield
  {journal} {\bibinfo  {journal} {Inverse Probl.}\ }\textbf {\bibinfo {volume}
  {24}},\ \bibinfo {pages} {025015} (\bibinfo {year} {2008})}\BibitemShut
  {NoStop}%
\bibitem [{\citenamefont {Vaibhav}(2017)}]{V2017INFT1}%
  \BibitemOpen
  \bibfield  {author} {\bibinfo {author} {\bibfnamefont {V.}~\bibnamefont
  {Vaibhav}},\ }\href {\doibase 10.1103/PhysRevE.96.063302} {\bibfield
  {journal} {\bibinfo  {journal} {Phys. Rev. E}\ }\textbf {\bibinfo {volume}
  {96}},\ \bibinfo {pages} {063302} (\bibinfo {year} {2017})}\BibitemShut
  {NoStop}%
\bibitem [{\citenamefont {Vaibhav}(2018{\natexlab{b}})}]{V2017BL}%
  \BibitemOpen
  \bibfield  {author} {\bibinfo {author} {\bibfnamefont {V.}~\bibnamefont
  {Vaibhav}},\ }\href {\doibase 10.1103/PhysRevE.98.013304} {\bibfield
  {journal} {\bibinfo  {journal} {Phys. Rev. E}\ }\textbf {\bibinfo {volume}
  {98}},\ \bibinfo {pages} {013304} (\bibinfo {year}
  {2018}{\natexlab{b}})}\BibitemShut {NoStop}%
\bibitem [{\citenamefont {Belai}\ \emph {et~al.}(2007)\citenamefont {Belai},
  \citenamefont {Frumin}, \citenamefont {Podivilov},\ and\ \citenamefont
  {Shapiro}}]{BFPS2007}%
  \BibitemOpen
  \bibfield  {author} {\bibinfo {author} {\bibfnamefont {O.~V.}\ \bibnamefont
  {Belai}}, \bibinfo {author} {\bibfnamefont {L.~L.}\ \bibnamefont {Frumin}},
  \bibinfo {author} {\bibfnamefont {E.~V.}\ \bibnamefont {Podivilov}}, \ and\
  \bibinfo {author} {\bibfnamefont {D.~A.}\ \bibnamefont {Shapiro}},\
  }\href@noop {} {\bibfield  {journal} {\bibinfo  {journal} {J. Opt. Soc. Am.
  B}\ }\textbf {\bibinfo {volume} {24}},\ \bibinfo {pages} {1451} (\bibinfo
  {year} {2007})}\BibitemShut {NoStop}%
\bibitem [{\citenamefont {Frumin}\ \emph {et~al.}(2015)\citenamefont {Frumin},
  \citenamefont {Belai}, \citenamefont {Podivilov},\ and\ \citenamefont
  {Shapiro}}]{FBPS2015}%
  \BibitemOpen
  \bibfield  {author} {\bibinfo {author} {\bibfnamefont {L.~L.}\ \bibnamefont
  {Frumin}}, \bibinfo {author} {\bibfnamefont {O.~V.}\ \bibnamefont {Belai}},
  \bibinfo {author} {\bibfnamefont {E.~V.}\ \bibnamefont {Podivilov}}, \ and\
  \bibinfo {author} {\bibfnamefont {D.~A.}\ \bibnamefont {Shapiro}},\ }\href
  {\doibase 10.1364/JOSAB.23.000290} {\bibfield  {journal} {\bibinfo  {journal}
  {J. Opt. Soc. Am. B}\ }\textbf {\bibinfo {volume} {32}},\ \bibinfo {pages}
  {290} (\bibinfo {year} {2015})}\BibitemShut {NoStop}%
\bibitem [{\citenamefont {Gripenberg}\ \emph {et~al.}(1990)\citenamefont
  {Gripenberg}, \citenamefont {Londen},\ and\ \citenamefont
  {Staffans}}]{GLS1990}%
  \BibitemOpen
  \bibfield  {author} {\bibinfo {author} {\bibfnamefont {G.}~\bibnamefont
  {Gripenberg}}, \bibinfo {author} {\bibfnamefont {S.~O.}\ \bibnamefont
  {Londen}}, \ and\ \bibinfo {author} {\bibfnamefont {O.}~\bibnamefont
  {Staffans}},\ }\href@noop {} {\emph {\bibinfo {title} {Volterra Integral and
  Functional Equations}}},\ \bibinfo {series} {Encyclopedia of Mathematics and
  Its Applications}, Vol.~\bibinfo {volume} {34}\ (\bibinfo  {publisher}
  {Cambridge University Press},\ \bibinfo {address} {Cambridge},\ \bibinfo
  {year} {1990})\BibitemShut {NoStop}%
\bibitem [{\citenamefont {Jones}(2001)}]{Jones2001}%
  \BibitemOpen
  \bibfield  {author} {\bibinfo {author} {\bibfnamefont {F.}~\bibnamefont
  {Jones}},\ }\href@noop {} {\emph {\bibinfo {title} {Lebesgue Integration on
  Euclidean Space}}},\ Jones and Bartlett books in mathematics\ (\bibinfo
  {publisher} {Jones and Bartlett},\ \bibinfo {address} {Massachusetts},\
  \bibinfo {year} {2001})\BibitemShut {NoStop}%
\bibitem [{\citenamefont {Yosida}(1995)}]{Yosida1968}%
  \BibitemOpen
  \bibfield  {author} {\bibinfo {author} {\bibfnamefont {K.}~\bibnamefont
  {Yosida}},\ }\href {\doibase 10.1007/978-3-662-11791-0} {\emph {\bibinfo
  {title} {Functional Analysis}}},\ \bibinfo {edition} {2nd}\ ed.,\ Die
  Grundlehren der mathematischen Wissenschaften\ (\bibinfo  {publisher}
  {Springer},\ \bibinfo {address} {Berlin},\ \bibinfo {year}
  {1995})\BibitemShut {NoStop}%
\bibitem [{\citenamefont {Boas}(1954)}]{Boas1954}%
  \BibitemOpen
  \bibfield  {author} {\bibinfo {author} {\bibfnamefont {R.~P.}\ \bibnamefont
  {Boas}},\ }\href@noop {} {\emph {\bibinfo {title} {Entire Functions}}},\ Pure
  and Applied Mathematics\ (\bibinfo  {publisher} {Academic Press Inc.},\
  \bibinfo {address} {New York},\ \bibinfo {year} {1954})\BibitemShut {NoStop}%
\bibitem [{\citenamefont {Epstein}(2004)}]{Epstein2004}%
  \BibitemOpen
  \bibfield  {author} {\bibinfo {author} {\bibfnamefont {C.~L.}\ \bibnamefont
  {Epstein}},\ }\href {\doibase 10.1016/j.jmr.2003.12.014} {\bibfield
  {journal} {\bibinfo  {journal} {J. Magn. Reson.}\ }\textbf {\bibinfo {volume}
  {167}},\ \bibinfo {pages} {185} (\bibinfo {year} {2004})}\BibitemShut
  {NoStop}%
\bibitem [{\citenamefont {Gu}\ \emph {et~al.}(2005)\citenamefont {Gu},
  \citenamefont {Hu},\ and\ \citenamefont {Zhou}}]{GHZ2005}%
  \BibitemOpen
  \bibfield  {author} {\bibinfo {author} {\bibfnamefont {C.}~\bibnamefont
  {Gu}}, \bibinfo {author} {\bibfnamefont {A.}~\bibnamefont {Hu}}, \ and\
  \bibinfo {author} {\bibfnamefont {Z.}~\bibnamefont {Zhou}},\ }\href {\doibase
  10.1007/1-4020-3088-6} {\emph {\bibinfo {title} {Darboux Transformations in
  Integrable Systems: Theory and their Applications to Geometry}}},\
  Mathematical Physics Studies\ (\bibinfo  {publisher} {Springer},\ \bibinfo
  {address} {Netherlands},\ \bibinfo {year} {2005})\BibitemShut {NoStop}%
\bibitem [{\citenamefont {Lin}(1990)}]{Lin1990}%
  \BibitemOpen
  \bibfield  {author} {\bibinfo {author} {\bibfnamefont {J.}~\bibnamefont
  {Lin}},\ }\href {\doibase 10.1007/BF02015338} {\bibfield  {journal} {\bibinfo
   {journal} {Acta Mathematicae Applicatae Sinica}\ }\textbf {\bibinfo {volume}
  {6}},\ \bibinfo {pages} {308} (\bibinfo {year} {1990})}\BibitemShut {NoStop}%
\bibitem [{\citenamefont {Lubich}(1988)}]{Lubich1988I}%
  \BibitemOpen
  \bibfield  {author} {\bibinfo {author} {\bibfnamefont {C.}~\bibnamefont
  {Lubich}},\ }\href {\doibase 10.1007/BF01398686} {\bibfield  {journal}
  {\bibinfo  {journal} {Numer. Math.}\ }\textbf {\bibinfo {volume} {52}},\
  \bibinfo {pages} {129} (\bibinfo {year} {1988})}\BibitemShut {NoStop}%
\bibitem [{\citenamefont {Lubich}(1994)}]{Lubich1994}%
  \BibitemOpen
  \bibfield  {author} {\bibinfo {author} {\bibfnamefont {C.}~\bibnamefont
  {Lubich}},\ }\href {\doibase 10.1007/s002110050033} {\bibfield  {journal}
  {\bibinfo  {journal} {Numer. Math.}\ }\textbf {\bibinfo {volume} {67}},\
  \bibinfo {pages} {365} (\bibinfo {year} {1994})}\BibitemShut {NoStop}%
\bibitem [{\citenamefont {Henrici}(1993)}]{Henrici1993}%
  \BibitemOpen
  \bibfield  {author} {\bibinfo {author} {\bibfnamefont {P.}~\bibnamefont
  {Henrici}},\ }\href@noop {} {\emph {\bibinfo {title} {Applied and
  Computational Complex Analysis, Volume 3: Discrete Fourier Analysis, Cauchy
  Integrals, Construction of Conformal Maps, Univalent Functions}}},\ Applied
  and Computational Complex Analysis\ (\bibinfo  {publisher} {John Wiley \&
  Sons, Inc.},\ \bibinfo {address} {New York},\ \bibinfo {year}
  {1993})\BibitemShut {NoStop}%
\bibitem [{\citenamefont {Vaibhav}(2018{\natexlab{c}})}]{V2018LPT}%
  \BibitemOpen
  \bibfield  {author} {\bibinfo {author} {\bibfnamefont {V.}~\bibnamefont
  {Vaibhav}},\ }\href {\doibase 10.1109/LPT.2018.2812808} {\bibfield  {journal}
  {\bibinfo  {journal} {IEEE Photonics Technol. Lett.}\ }\textbf {\bibinfo
  {volume} {30}},\ \bibinfo {pages} {700} (\bibinfo {year}
  {2018}{\natexlab{c}})}\BibitemShut {NoStop}%
\bibitem [{\citenamefont {Satsuma}\ and\ \citenamefont
  {Yajima}(1974)}]{SY1974}%
  \BibitemOpen
  \bibfield  {author} {\bibinfo {author} {\bibfnamefont {J.}~\bibnamefont
  {Satsuma}}\ and\ \bibinfo {author} {\bibfnamefont {N.}~\bibnamefont
  {Yajima}},\ }\href {\doibase 10.1143/PTPS.55.284} {\bibfield  {journal}
  {\bibinfo  {journal} {Prog. Theor. Phys. Suppl.}\ }\textbf {\bibinfo {volume}
  {55}},\ \bibinfo {pages} {284} (\bibinfo {year} {1974})}\BibitemShut
  {NoStop}%
\bibitem [{\citenamefont {Olver}\ \emph {et~al.}(2010)\citenamefont {Olver},
  \citenamefont {Lozier}, \citenamefont {Boisvert},\ and\ \citenamefont
  {Clark}}]{Olver:2010:NHMF}%
  \BibitemOpen
  \bibinfo {editor} {\bibfnamefont {F.~W.~J.}\ \bibnamefont {Olver}}, \bibinfo
  {editor} {\bibfnamefont {D.~W.}\ \bibnamefont {Lozier}}, \bibinfo {editor}
  {\bibfnamefont {R.~F.}\ \bibnamefont {Boisvert}}, \ and\ \bibinfo {editor}
  {\bibfnamefont {C.~W.}\ \bibnamefont {Clark}},\ eds.,\ \href@noop {} {\emph
  {\bibinfo {title} {NIST Handbook of Mathematical Functions}}}\ (\bibinfo
  {publisher} {Cambridge University Press},\ \bibinfo {address} {New York},\
  \bibinfo {year} {2010})\BibitemShut {NoStop}%
\end{thebibliography}

%merlin.mbs apsrev4-1.bst 2010-07-25 4.21a (PWD, AO, DPC) hacked
%Control: key (0)
%Control: author (8) initials jnrlst
%Control: editor formatted (1) identically to author
%Control: production of article title (-1) disabled
%Control: page (0) single
%Control: year (1) truncated
%Control: production of eprint (0) enabled
\providecommand{\noopsort}[1]{}\providecommand{\singleletter}[1]{#1}%
\end{document}